\numberwithin{equation}{section}
\newtheorem{theorem}{Theorem}[section]
\newtheorem{lemma}[theorem]{Lemma}
\newtheorem{proposition}[theorem]{Proposition}
\theoremstyle{definition}
\newtheorem{remark}{Remark}
\let\oldmarginpar\marginpar
\renewcommand\marginpar[1]{\-\oldmarginpar[\raggedleft\footnotesize #1]%
  {\raggedright\footnotesize #1}}
\def\eps{\varepsilon}
\def\N{\mathbb{N}}
\def\R{\mathbb{R}}
\let\e=\varepsilon
\def\tr{\text{\rm tr}}
\def\square{\hbox{$\sqcap\kern-7pt\sqcup$}}
\def\be{\begin{equation}}
\def\ee{\end{equation}}
\def\bea{\begin{eqnarray}}
\def\eea{\end{eqnarray}}
\def\ˆ{^{•}}
\title{Semiclassical limit to the Vlasov equation\\ with inverse power law potentials}
\author{Chiara Saffirio}
\def\adresse{
\begin{description}
\item[C. Saffirio:] Institute of Mathematics, \\ University of Z\"urich, Winterthurerstrasse 190, CH-8057 Z\"urich, Switzerland\\
E-mail: \texttt{chiara.saffirio@math.uzh.ch}

\end{description}
}
\date{\today}
\begin{document}

\maketitle

\begin{abstract}
We consider mixed quasi-free states describing $N$ fermions in the mean-field limit. In this regime, the time evolution is governed by the nonlinear Hartree equation. In the large $N$ limit, we study the convergence towards the classical Vlasov equation. Under integrability and regularity assumptions on the initial state, we prove strong convergence in trace and Hilbert-Schmidt norm and provide explicit bounds on the convergence rate for a class of singular potentials of the form $V(x)=|x|^{-\alpha}$, for $\alpha\in(0,1/2)$.   
\end{abstract}

%%%%%%%%%%%%%%%%%%%%%
%%%%%%%%%%%%%%%%%%%%%

\section{Introduction and main results}

We consider a system of $N$ fermions in three dimensions and we are interested in their many-body time evolution in a weakly interacting regime. A state of the system is represented by a wave function $\psi_N\in L^2_a(\R^{3N})$, where
$$
L^2_a(\R^{3N})=\{\psi_N\in L^2(\R^{3N})\ |\ \psi_N(x_{\pi(1)},\dots,x_{\pi(N)})=\sigma_\pi\psi_N(x_1,\dots,x_N),\ \mbox{for all } \pi\in\mathfrak{G}_N\}
$$
is the space of square integrable functions antisymmetric in the exchange of particles, $\mathfrak{G}_N$ is the group of permutation of $N$ elements  and $\sigma_\pi=\{\pm\}$ denotes the sign of the permutation $\pi$.
The Hamilton operator acting on $L^2_a(\R^{3N})$ is given by
\be\label{eq:H}
H_N^{\rm ext} =\sum_{j=1}^N(-\Delta_{x_j}+V_{\rm ext}(x_j))+\lambda\sum_{i<j}^N V(x_i-x_j),
\ee
where $\Delta_{x_j}$ is the standard Laplace operator, $V_{\rm ext}$ is an external potential confining the system in a volume of order one, $\lambda\geq 0$ is a coupling constant to be determined according to the regime we are interested in and $V$ is a two-body interaction potential that we assume for the moment to be smooth.  

The antisymmetry of the wave function (which reflects the Pauli's principle) implies that the kinetic energy of $N$ particles trapped in a volume of order one is at least of the order $N^{5/3}$. Since we are interested in a truly interacting effective picture, we choose $\lambda=N^{-1/3}$ to make sure the potential energy is of the same order of magnitude of the kinetic one. We now assume our initial datum describes equilibrium states of the trapped system and we look at its evolution resulting from a change in the external field. In other words, as $V_{\rm ext}=0$, the system starts to evolve in time.  Pauli's principle also implies that the typical velocity of the particles is of the oder $N^{1/3}$, thus the choice $\lambda=N^{-1/3}$ captures the time evolution for times of the order $N^{1/3}$. Since we are interested in times of order one, we rescale the time variable and we obtain the following $N$-body Schr\"{o}dinger equation
\be\label{eq:SE-N}
i\,N^{1/3}\partial_t\psi_{N,t}=\left[\sum\limits_{j=1}^N(-\Delta_{x_j})+\frac{1}{N^{1/3}}\sum\limits_{i<j}^N V(x_i-x_j)\right] \psi_{N,t}
\ee
where $\psi_{N,t}$ denotes the evolution in time of the state $\psi_N$. To rewrite \eqref{eq:SE-N} in a handier way, 
we introduce the small parameter 
\be\label{eq:scaling}
\e_N=N^{-1/3}
\ee
and multiply by $\e_N^2$ both sides of Eq. \eqref{eq:SE-N}. Hence 
\be\label{eq:SE}
i\e_N\partial_t\psi_{N,t}=\left[\sum\limits_{j=1}^N(-\e_N^2\Delta_{x_j})+\frac{1}{N}\sum\limits_{i<j}^N V(x_i-x_j)\right] \psi_{N,t}
\ee

In Eq. \eqref{eq:SE-N} the parameter $\e_N$ defined in \eqref{eq:scaling} plays the role of Planck constant $\hbar$. Therefore in this setting the mean-field scaling for fermions, characterised by the factor $N^{-1}$ in front of the interaction, comes coupled with a semiclassical limit, that is the focus of this paper. Throughout the paper the dependence of $\e_N$ on $N$ given in \eqref{eq:scaling} holds, though we will omit the subscript $N$ and denote $\e_N$ simply by $\e$ to shorten the notation. 

\medskip
%********************************************************************************

\noindent {\it Evolution of quasi-free states.}  As already mentioned, we are interested in initial data $\psi_N$ describing equilibrium states of trapped systems. These are approximated in the mean-field regime by quasi-free states. In the mean-field regime, quasi-free states are completely characterised by their one-particle reduced density matrix $\omega_N$, the nonnegative trace class operator defined as
\be\label{eq:1pdm}
\omega_N=N\,\tr_{2\dots N}\ |\psi_N\rangle\langle \psi_N|\quad \mbox{ s.t. }\quad \tr\ \omega_N=N,\quad 0\leq\omega_N\leq 1,
\ee
where $\tr_{2\dots N}$ denotes the trace w.r.t. the $N-1$ coordinates $(x_2,\dots,x_N)$.
Moreover, Shale-Stinespring condition (see \cite{LL}) implies that every $\omega_N$ satisfying \eqref{eq:1pdm} is the one-particle reduced density matrix of a quasi-free state with $N$ particles.

At zero temperature, this corresponds to select a Slater determinant
\be\label{eq:slater}
\psi_{\rm Slater}(x_1,\dots,x_N)=\frac{1}{\sqrt{N!}}{\rm det}(f_j(x_i))_{1\leq i,j\leq N},
\ee
where $\{f_j\}_{j=1}^N$ is an orthogonal system in the one-particle space $L^2(\R^3)$. Plugging \eqref{eq:slater} into \eqref{eq:1pdm} it is easy to see that the one-particle reduced density matrix associated to $\psi_{\rm Slater}$ is an orthogonal projection onto the space spanned by $\{f_1,\dots,f_N\}$. In particular, $\omega_N^2=\omega_N$. Though the time evolution of a Slater determinant is not a Slater determinant because the interaction among particles in principle destroys such a structure, it is known that, in the mean-field regime and under further assumptions on the initial data, $\omega_{N,t}$ remains close to a Slater determinant and solves the time-dependent Hartree-Fock equation
\be\label{eq:HF}
i\e\partial_t\omega_{N,t}=[-\e^2\Delta+(V*\rho_t)-X_t\,,\,\omega_{N,t}],
\ee
where, for every $x\in\R^3$, 
$$\rho_t(x)=N^{-1}\omega_{N,t}(x;x)$$
 is the density associated to the one-particle reduced density matrix $\omega_{N,t}$, $(~V~*~\rho_t~)$ represents the so-called direct term, while $X_t$ is the exchange term defined through its operator kernel
$$X_t(x;y)=\frac{1}{N}V(x-y)\omega_{N,t}(x;y).$$
The correctness of the approximation of the many-body Schr\"{o}dinger evolution by the Hartree-Fock dynamics for states close to a Slater determinant has been proved in the case of smooth interaction potentials $V$ in \cite{EESY} for short time intervals and in \cite{BPS13} for arbitrary time intervals, providing in addition an explicit estimate of the convergence rate. In \cite{PRSS}, the Coulomb interaction has been addressed in the aforementioned regime, providing convergence of the time evolution of a Slater determinant towards a solution to the time-dependent  Hartree-Fock equation with Coulomb interaction. The result holds under severe assumptions on the solution of the Hartree-Fock equation, that has to satisfy some integrability and regularity conditions at positive times. In particular, translation invariant states fulfil such assumptions.  

At positive temperature, equilibria are expected to be approximated by mixed states, i.e. quasi-free states (see \eqref{eq:1pdm}) whose one-particle reduced density matrix is not a projection, that is $\omega_N^2\neq \omega_N$. 
The result in \cite{BPS13} has been extended to mixed states in \cite{BJPSS} for regular interaction potentials.

\medskip

\noindent {\it Mean-field in presence of singular interactions.} When dealing with singular interactions $V(x)={1}/{|x|^\alpha}$, $\alpha\in(0,1]$, the Hamiltonian takes the form
\be\label{eq:H-sing}
H_N=\sum_{i=1}^N(-\e^2\Delta_{x_i})+\frac{1}{N}\sum_{i<j}^N\frac{1}{|x_i-x_j|^{\alpha}}.
\ee
In particular, the case $\alpha=1$  treated in \cite{PRSS} represents a system of $N$ fermions interacting through a Coulomb potential, which describes for instance the dynamics of large atoms and molecules. In this case, the choice $\e=N^{-1/3}$ is justified by a rescaling of the space variables at a scale $O(N^{-1/3})$ (the typical distance of the electrons from the nucleus) as suggested by the Thomas-Fermi theory (see \cite{L}, \cite{LSi}). An analogous reasoning applies to the case of inverse power law potentials and, by appropriately scaling the time variable,  it leads to
\be\label{eq:SE-sing}
i\e\partial_t\psi_{N,t}=\left[\sum_{i=1}^N(-\e^2\Delta_{x_i})+\frac{1}{N}\sum_{i<j}^N\frac{1}{|x_i-x_j|^{\alpha}}\right]\psi_{N,t}.
\ee   
More details on the rigorous justification of the mean-field scaling coupled to the semiclassical one in the case of inverse power law potentials can be found in \cite{PRSS, S18}.    We stress that for both regular and singular potentials, the exchange term in the Hartree-Fock equation represents a sub-leading contribution (see Appendix A in \cite{BPS13}). For this reason we will drop it in the rest of the paper.
\medskip

\noindent {\it Semiclassical limit.} The Hartree-Fock Eq. \eqref{eq:HF} still depends on $N$. It is therefore natural to ask what happens in the large $N$ limit, which is equivalent to $\e\to 0$ because of \eqref{eq:scaling}. In this sense, we are here considering a semiclassical limit. To answer this question, we introduce the {\it Wigner transform} of the one-particle reduced density $\omega_{N,t}$, defined as
\be\label{eq:Wigner}
W_{N,t}(x,v)=\frac{\e^3}{(2\pi)^3}\int \omega_{N,t}\left(x+\e\frac{y}{2};x-\e\frac{y}{2}\right)\,e^{-iv\cdot y}\,dy.
\ee
The Wigner transform is therefore a function on the phase space $\R^3\times\R^3$ and it is normalised to 1, i.e.
\begin{equation*}
\int W_{N,t}(x,v)\,dx\,dv=\e^3\,\tr\ \omega_{N,t}=N\,\e^3=1,
\end{equation*} 
but it is not a probability density on the phase space, since in general it is not positive. Nevertheless, we can give physical meaning to the quantities which would correspond to the marginals, i.e. $\int W_{N,t}(x,v)\,dv$ represents the density of fermions at the point $x\in\R^3$ and $\int W_{N,t}(x,v)\,dx$ is the density of particles with velocity $v\in\R^3$.

The Wigner transform can be inverted by means of the {\it Weyl quantization}, defined by
\be\label{eq:Weyl}
\omega_{N,t}(x;y)=N\,\int W_{N,t}\left(\frac{x+y}{2},v\right)\,e^{iv\cdot(x-y)/\e}\,dv\,. 
\ee
We notice that 
$$
\|\omega_{N,t}\|_{\rm HS}=\sqrt{N}\|W_{N,t}\|_{L^2},
$$
where $\|W_{N,t}\|_{L^2}$ is the usual $L^2$ norm on $\R^3\times\R^3$, $\|A\|_{\rm HS}:=\tr\ \sqrt{A^*A}$ denotes the Hilbert-Schmidt norm of a compact operator A and $A^*$ is the adjoint of $A$.

Using \eqref{eq:Wigner} in the Hartree equation \eqref{eq:HF}, it is easy to get an evolution equation for the Wigner transform which suggests that, as $N\to\infty$, $W_{N,t}$ should converge to a probability density $W_t$, solution to the classical Vlasov equation
\be\label{eq:VL}
\partial_t W_t+2v\cdot\nabla_x W_t=\nabla(V*\rho_t)\cdot\nabla_v W_t,
\ee
where $\rho_t(x)=\int W_t(x,v)\,dv$ is the spatial density. 

The well-posedness of the Vlasov Eq. \eqref{eq:VL} for smooth interaction potentals $V$ goes back to Dobrushin \cite{Dobr}. As $V$ is taken to be the Coulomb interaction, then the Vlasov Eq. \eqref{eq:VL} reads 
\be\label{eq:VP}
\left\{\begin{array}{l}
\partial_t W_t(x,v)+v\cdot\nabla_x W_t(x,v)+E(t,x)\cdot\nabla_v W_t(x,v)=0,\\\\
E(t,x)=\nabla\left(\frac{1}{|  \cdot  |}*{\rho}_t\right)(x)\,, \quad\quad\quad \rho_t(x)=\int W_t(x,v)\,dv\,.
\end{array}\right.
\ee
Eq. \eqref{eq:VP} is referred to as
the Vlasov-Poisson equation, since it couples the Vlasov Eq. \eqref{eq:VL} with a Poisson equation (encoded in the definition of $E$). We remark that the factor 2 appearing in \eqref{eq:VL} is consistent with the choice of $-\Delta$ for the quantum kinetic energy instead of $-\frac{1}{2}\Delta$. Hence Eq. \eqref{eq:VL} and \eqref{eq:VP} coincide when the interaction $V$ is chosen to be the Coulomb potential.

Existence of classical solutions to the Cauchy problem associated with \eqref{eq:VP} under regularity assumptions on the initial data has been enstablished in \cite{Iordanski} in dimension one and in \cite{OkabeUkai} in dimension two. Well-posedness in dimension three has been treated by Bardos and Degond in \cite{BD} for small data and by Pfaffelmoser \cite{Pf} and  Lions and Perthame \cite{LP} for more general initial densities. More recently, less stringent uniqueness criteria have been provided in \cite{Loeper, Miot, HoldingMiot}.    
\medskip

\noindent {\it State of art.} The derivation of the Vlasov Eq. \eqref{eq:VL} from quantum evolution equations has been addressed by several authors in the last 40 years. The first results were obtained by Narnhofer and Sewell \cite{NS} and by Spohn \cite{Spohn} in the '80s. In these works the Vlasov Eq. \eqref{eq:VL} is derived from the many-body Schr\"{o}dinger Eq. \eqref{eq:SE} for fermions and bosons in the mean-field regime combined with a semiclassical limit for analytic interactions $V$ in \cite{NS} and for $V\in\mathcal{C}^2(\R^3)$ in \cite{Spohn}. See also \cite{GMP} for a WKB approach for bosons. 

A different viewpoint has been considered in \cite{GIMS, LionsPaul, MM, FigalliLigaboPaul}, where the Vlasov Eq. \eqref{eq:VL} is obtained in the semiclassical limit directly from the Hartree dynamics. The convergence is established in an abstract sense, namely without control on the rate of convergence. In particular, in \cite{LionsPaul} and \cite{FigalliLigaboPaul}, the authors deal with singular interactions, here included the case of Coulomb potential. Such a result provides the first proof of the derivation of the Vlasov-Poisson Eq. \eqref{eq:VP}
from quantum dynamics. As a limitation of the result, the notion of convergence obtained in \cite{LionsPaul, FigalliLigaboPaul} is very weak. Moreover, the authors make use of compactness techniques, which prevent them to establish explicit bounds on the convergence rates.

The issue of exhibiting explicit bounds has been addressed in \cite{APPP, PP, AKN1, AKN2, GolsePaul1, GolsePulvirentiPaul, GolsePaul2}. More precisely, bounds on the rate of convergence of the Hartree evolution towards the Vlasov equation have been first obtained in \cite{APPP}, where the convergence is established in Hilbert-Schmidt norm with a relative rate $\e^{2/7}$ for regular initial data and smooth potentials $V$. Moreover, for smooth interactions, it has been proven in \cite{PP, AKN1, AKN2} that the solution to the Hartree equation can be written as an expansion - with no control on the remainder - in powers of $\e$. In the same spirit of \cite{APPP}, it has been shown in \cite{BPSS} that the convergence holds in trace norm for mixed states and interaction potentials $V$ such that $\nabla V\in {\rm Lip}(\R^3)$, where ${\rm Lip}(\R^3)$ is the space of Lipschitz functions on $\R^3$. Explicit bounds on the convergence rate are provided. By requiring stronger integrability assumptions on $V$, the authors prove convergence in Hilbert-Schmidt norm for mixed states. Moreover, they prove convergence for the expectation of a class of semiclassical observables at zero temperature (i.e. pure states), thus providing the first rigorous results concerning convergence from the Hartree dynamics towards the Vlasov equation that can be applied to reasonable approximations of ground states.

More recently, a new notion of pseudo-distance reminiscent of the Monge-Kantorovich distance in classical mechanics has been introduced in \cite{GolsePaul1,GolsePaul2}. This new technique allows to substantially relax the assumptions on the interaction potential $V$. More precisely, in \cite{GolsePaul1,GolsePaul2} the convergence of the Hartree dynamics towards the Vlasov Eq. holds with an explicit rate of convergence for a special class of bosonic states (related to T\"{o}plitz operators)  and for interaction potentials $V$ such that $\nabla V\in {\rm Lip}(\R^3)$. In the bosonic setting, the same authors in collaboration with Pulvirenti have shown in \cite{GolsePulvirentiPaul} that the convergence is uniform in the Planck constant.  
\medskip

\noindent {\it Main results, notations and strategy.} The aim of this paper is twofold: on the one hand we extend the results by Lions and Paul \cite{LionsPaul} and  Figalli, Ligab\`o and Paul \cite{FigalliLigaboPaul} for mixed states providing a strong convergence statement (in trace and Hilbert-Schmidt norms) for a certain class of singular interaction potentials, namely $V(x)=\frac{1}{|x|^\alpha}$, for $\alpha\in(0,1/2)$; on the other hand we exhibit explicit bounds on the convergence rate, that is important for applications to real physical systems. Indeed, in real applications the number of particles $N$ is large but finite and the knowledge of explicit convergence rates is crucial to determine how large should $N$ be in order for the Vlasov equation to represent a good approximation of the Hartree dynamics, and therefore of the many-body quantum dynamics given by \eqref{eq:SE}.   

Before stating precisely our main result, we introduce some notations. For $s\in\N$, we denote by $H^s$ the space of real-valued functions $f$ on $\R^3\times\R^3$  such that the Sobolev norm
\be\label{eq:sobolev-norm}
\|f\|_{H^s}^2:=\sum_{|\beta|\leq s}\int |\nabla^\beta f(x,v)|^2dxdv
\ee
is finite. In \eqref{eq:sobolev-norm}, $\beta$ is a multi-index and $\nabla^\beta$ can act on both space and velocity variables.\\
For $s,a\in\N$, we denote by $H^s_a$ the weighted Sobolev space of real-valued functions on $\R^3\times\R^3$ such that
\begin{equation*}
\|f\|_{H^s_a}^2:=\sum_{|\beta|\leq s}\int (1+x^2+v^2)^a|\nabla^\beta f(x,v)|^2 dxdv
\end{equation*}
is finite.\\
For $s\in\N$, we denote by $\mathcal{W}^{s,1}$ the Sobolev space of real-valued functions $f$ on $\R^3\times\R^3$ with $\nabla^\beta f \in L^1$ for all multi-index $\beta$ such that $|\beta|\leq s$. Moreover, we will use the shorthand notation $L^p_x(L^q_v)$ to indicate the space $L^p(\R^3_x;L^q(\R^3_v))$, i.e. the space of functions $f$ on $\R^3\times\R^3$ such that
\begin{equation*}
\|f\|_{L^p_x(L^q_v)}:=\left(\int \left|\int |f(x,v)|^q\,dv\right|^{\frac{p}{q}}\,dx\right)^{\frac{1}{p}}.
\end{equation*}
Lastly, we denote by $\mathcal{H}_t$ the total energy associated with a solution $\widetilde{W}_t$ to the Vlasov Eq. \eqref{eq:VL}. More precisely,
\be\label{eq:energy} 
\mathcal{H}_t:=\frac{1}{2}\iint |v|^2\widetilde{W}_t(x,v)\,dv\,dx+\frac{1}{2}\iint V(x-y)\,\widetilde{\rho}_t(x)\,\widetilde{\rho}_t(y)\,dx\,dy
\ee
is a conserved quantities w.r.t. time, i.e. 
\be\label{eq:energy-vlasov}
\mathcal{H}_t=\mathcal{H}_0
\ee
for every $\widetilde{W}_t$, solution to the Vlasov Eq. \eqref{eq:VL}.

We are now ready to state our main result 
\begin{theorem}\label{thm:trace}
For $\delta\in(0,5/12)$, let $V(x)=1/|x|^{\alpha}$, for $\alpha\in(0, \frac{1}{2}-\frac{6}{5}\delta)$. Let $\omega_N$ be a sequence of reduced densities on $L^2(\R^3)$ with $\tr\ \omega_N=N$, $0\leq \omega_N\leq 1$ such that $\tr\ (-\e^2\Delta)\omega_N\leq CN$, for some positive constant $C$. Let $W_N$, the Wigner transform of $\omega_N,$ satisfy 
the following assumptions uniformly in $N$:
\begin{itemize}
\item[(H1)] $W_N\in L^1\cap L^\infty(\R^3\times\R^3)$ and $\mathcal{H}_0$ finite; 
\item[(H2)] Let $m_0>\frac{3\alpha}{2-\alpha}$. For $m<m_0$ assume there exists a positive constant $C$ such that
\begin{equation*}
\iint |v|^m\,W_N(x,v)\,dx\,dv\leq C;
\end{equation*}
\item[(H3)] For all $R,\,T>0$, $m=0,2$ and $l=0,1,\dots,5$, 
\begin{equation*}\begin{split}
\sup\{(1+x^2+v^2)^m|\nabla^l W_N|(y+tv,w)\,:\,|y-x|\leq &Rt^2,\,|w-v|\leq Rt\}\\
&\in L^\infty((0,T)\times\R^3_x;L^1\cap L^2(\R^3_v))  
\end{split}
\end{equation*}
\item[(H4)] There exist $C>0$ such that for $l=0,1,3$ and $k=0,\dots,6$
\begin{equation*}
\|W_N\|_{H_4^k}\leq C,\quad\quad\quad \|(1+|x|^8+|v|^5)W_N\|_{\mathcal{W}^{l,1}}\leq C.
\end{equation*}
\end{itemize}
We denote by $\omega_{N,t}$ the solution to the Cauchy problem associated with the Hartree equation \eqref{eq:H-sing} with initial data $\omega_N$. Fix $T>0$ and let $\widetilde{W}_{N,t}$, $t\in(0,T)$, be the solution of the Vlasov equation 
\be\label{eq:Valpha}
\partial_t W_t+2v\cdot\nabla_x W_t=\nabla\left(\frac{1}{|\cdot|^\alpha}*\rho_t\right)\cdot\nabla_v W_t,
\ee
with initial data $\widetilde{W}_{N,0}=W_N$. By $\widetilde{\omega}_{N,t}$ we indicate the Weyl quantization of $\widetilde{W}_{N,t}$ as defined in \eqref{eq:Weyl}. 

Then, there exist positive constants $C_i'$, $i=1,\dots,8$, $D_j'$, $j=0,\dots,4$, depending only on the initial data and on $T>0$such that 
\be\label{eq:trace-thm}
\begin{split}
\tr\ |\omega_{N,t}-\widetilde{\omega}_{N,t}|\leq e^{C_1't}\,N\,\e^\frac{2}{5}&\left[C_2'\e^{\frac{3}{5}}+C_3'\e^{\frac{3}{5}\delta}+C_4'\e^{2+\frac{3}{5}\delta}\right.\\
&\ +C_5'\e^{\frac{9}{10}}+C_6'\e^{\frac{7}{5}}+C_7'\e^{\frac{29}{10}}+C_8'\e^{\frac{17}{5}}\\
&\ \left.+D_0'\e^{\frac{3}{5}}+D_1'\e^{\frac{8}{5}}+D_2'\e^{\frac{13}{5}}+D_3'\e^{\frac{18}{5}}+D_4'\e^{\frac{23}{5}}     \right]
\end{split}
\ee
\end{theorem}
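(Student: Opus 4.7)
The overall plan is to combine a regularization of the singular potential with the quantitative semiclassical estimates known for smooth potentials (in the spirit of BPSS). Concretely, for a cut-off scale $R=R(\e)$ to be optimized at the end, I would split
\[
\frac{1}{|x|^{\alpha}}=V^{<R}(x)+V^{>R}(x),\qquad V^{<R}(x)=\frac{\chi_{|x|\le R}}{|x|^{\alpha}},
\]
and denote by $\omega^{R}_{N,t}$, resp. $\widetilde{W}^{R}_{N,t}$, the solutions of the Hartree, resp. Vlasov, equations with the smooth potential $V^{>R}$ and the same initial data. Writing $\widetilde{\omega}^{R}_{N,t}$ for its Weyl quantization, the triangle inequality gives
\[
\tr|\omega_{N,t}-\widetilde{\omega}_{N,t}|\le
\tr|\omega_{N,t}-\omega^{R}_{N,t}|
+\tr|\omega^{R}_{N,t}-\widetilde{\omega}^{R}_{N,t}|
+\tr|\widetilde{\omega}^{R}_{N,t}-\widetilde{\omega}_{N,t}|.
\]
The three summands correspond to the three main pieces of the bound \eqref{eq:trace-thm}: the quantum stability under a singular perturbation, the smooth-potential semiclassical comparison, and the classical (Vlasov) stability under the same perturbation.

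For the middle term I would invoke the strategy already developed for $\nabla V\in\mathrm{Lip}$: differentiate $\tr|\omega^{R}_{N,t}-\widetilde{\omega}^{R}_{N,t}|$ in time, use the commutator identities with the free streaming part $[-\e^2\Delta,\cdot]$ and with the mean-field potential $[V^{>R}\ast\rho,\cdot]$, then bound the resulting commutators by $\|\nabla^2 V^{>R}\|_{L^\infty}$ times the trace norm itself (plus an inhomogeneous part of order $\e$ coming from the semiclassical expansion of the commutator $[V^{>R}\ast\rho,\cdot]$ up to second order). A Gr\"onwall-type inequality then yields an exponential prefactor $e^{C_1 t}$ multiplied by an $\e$-power that degrades with $R$ like $R^{-\alpha-2}$. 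Propagation of moments (H2) and regularity of $\widetilde{W}_{N,t}$ (H3)--(H4), which in turn use the local well-posedness theory of Vlasov with a $C^2$ force recalled in the introduction, enter here to control terms of the form $\|(1+|x|^a+|v|^a)\nabla^\beta \widetilde{W}^{R}_{N,t}\|_{L^p_x L^q_v}$ uniformly in $N$ and in $R$.

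The two outer terms are classical perturbation estimates. On the quantum side, a Duhamel expansion around the Hartree flow with $V^{>R}$ gives
\[
\tr|\omega_{N,t}-\omega^{R}_{N,t}|\le \int_0^t\tr\bigl|[V^{<R}\ast(\rho_s-\rho^R_s),\omega_{N,s}]\bigr|\,ds,
\]
and the singular part is controlled by Hardy-type inequalities together with the kinetic-energy bound $\tr(-\e^2\Delta)\omega_{N,s}\le CN$ (which is propagated by the Hartree flow through conservation of energy \eqref{eq:energy-vlasov} combined with an interpolation estimate on $\int |x|^{-\alpha}\rho_s(x)\,dx$). On the classical side, the same kind of argument applied to the Vlasov equation, now using the explicit representation along characteristics and the moment hypothesis (H2), produces a bound of the form $N\,R^{1-\alpha}$ plus lower-order corrections. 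At the end, balancing $R^{-\alpha-2}\e^{a}$ (from the middle) against $R^{1-\alpha}$ (from the outer pieces) and imposing $\alpha<\tfrac12-\tfrac{6}{5}\delta$ produces the final choice $R\sim \e^{2/5}$ and the collection of exponents appearing in \eqref{eq:trace-thm}.

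The hard part is the middle step. Unlike the smooth case, one cannot take $\|\nabla^2 V^{>R}\|_{L^\infty}$ out of the argument cheaply, because it blows up as $R\to 0$; the whole game is to extract, in the semiclassical commutator expansion, gradient structure that can be absorbed into $\nabla\widetilde{W}^{R}_{N,t}$ and then controlled by the weighted Sobolev norms in (H4) without losing too many powers of $R$. This is also where the precise threshold $\alpha<1/2$ comes from: to bound the singular contribution one ends up needing $\int R^{-\alpha-2+k}\,\mathrm{something}$ to be integrable at the origin only up to $k$ derivatives, and the propagation-of-regularity estimates for the Vlasov solution with an inverse-power-law force break down above this threshold. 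A secondary technical difficulty is showing that (H3)--(H4) are propagated by \eqref{eq:Valpha} on the fixed interval $[0,T]$; this relies on the existing well-posedness results cited above and on commuting derivatives with the characteristic flow, which is why the hypothesis on the derivatives $\nabla^l W_N$ is stated in the transported form appearing in (H3).
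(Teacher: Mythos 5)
Your high-level idea---separate the singular part of the potential from the regular one and optimize over a cutoff scale---is in the right spirit, but the specific implementation via a potential-level cutoff $V=V^{<R}+V^{>R}$ together with auxiliary dynamics $\omega^R_{N,t}$, $\widetilde W^R_{N,t}$ has two concrete gaps that the paper's actual argument is designed to avoid.

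First, the middle term. You propose to bound $\tr|\omega^R_{N,t}-\widetilde\omega^R_{N,t}|$ by the smooth-potential (BPSS-type) argument, in which the Gr\"onwall constant is of the size $\|\nabla^2 V^{>R}\|_{L^\infty}\sim R^{-\alpha-2}$. With the scale $R\sim\e^{2/5}$ you need at the end, this constant diverges as $\e\to 0$, and since it sits inside $\exp(\cdot\, t)$, it swamps every algebraic gain in $\e$. The paper's Proposition~\ref{prop:1term-tr} produces a Gr\"onwall constant $C_1$ that depends only on weighted Sobolev norms of $\widetilde W_{N,s}$ and is \emph{uniform} in the regularization scale; this is possible precisely because the singularity is dealt with \emph{inside} the commutator estimate via the Fefferman--de la Llave representation \eqref{eq:FDLL}, integrating the ``cutoff radius'' $r$ from $0$ to $\infty$ with the weight $r^{-1-\alpha}$. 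The integral $\int_0^k dr\, r^{-1/2-\alpha}$ converges because the commutator $[\chi_{(r,y)},\widetilde\omega_{N,s}]$ gains a factor $\sqrt r$ (Lemmas~\ref{lemma:J1}--\ref{lemma:J-33}). Crucially, that factor $\sqrt r$, and the accompanying $\e$, come from the smoothness of the \emph{Vlasov} object $\widetilde\omega_{N,s}$, for which one controls derivatives through (H3)--(H4). By contrast, in the triangle-inequality scheme there is no such gain available in the term $\tr|\omega_{N,t}-\omega^R_{N,t}|$, which is the second gap: the Duhamel difference produces a commutator involving the \emph{Hartree} solution $\omega_{N,s}$, whose Wigner transform has no a priori regularity in $\e$. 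The only handle on $\omega_{N,s}$ is $\|\rho_s\|_{L^{5/3}}$ via the kinetic-energy bound, which does not give an $\e$-gain in the commutator; the raw estimate $\tr|[V^{<R}\ast\rho,\omega]|\le 2\|V^{<R}\ast\rho\|_{L^\infty}\,\tr\,\omega\lesssim N\,R^{6/5-\alpha}$ is not enough to beat the overall prefactor $\e^{-1}$ from Duhamel for any $\alpha>0$. This is exactly why Lemma~\ref{lem:est-tr} arranges the Duhamel expansion so that the Hartree solution enters only through its density $\rho_s$, while all commutators act on $\widetilde\omega_{N,s}$.

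Finally, a smaller point: you ascribe the threshold $\alpha<1/2$ to a breakdown of propagation of regularity for the Vlasov dynamics with inverse-power-law forces. That theory is in fact fine for all $\alpha\le 1$ (this is the Vlasov--Poisson range, cf.\ Appendix~\ref{appendix:regularity}); the restriction $\alpha<1/2$ in this paper comes from the integrability at $r=0$ of $r^{-1/2-\alpha}$ after extracting $\sqrt r$ from \eqref{eq:trick}, see the remark after Theorem~\ref{thm:L2}. In short, the paper's route (direct Duhamel comparison at the level of $\omega_{N,t}-\widetilde\omega_{N,t}$, Fefferman--de la Llave decomposition inside the commutator, and $r$-splitting inside the resulting integral) and your route (potential cutoff plus triangle inequality on auxiliary dynamics) are genuinely different; the paper's choice is what makes both the Gr\"onwall constant $R$-independent and the $\e^{-1}$ factor manageable, and the proposal as written does not have a mechanism for either.
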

Some remarks are in order:

\begin{enumerate}
\item We recall that $\tr\ \omega_{N,t}=\tr\ \widetilde{\omega}_{N,t}=N$. The bound \eqref{eq:trace-thm} is therefore non trivial. Indeed the difference of the reduced densities $\omega_{N,t}$ and $\widetilde{\omega}_{N,t}$ is smaller than their trace norm by a factor which is a fractional power of $\e$.

\item The assumption on the kinetic energy of $\omega_N$, namely $\tr\ (-\e^2\Delta)\omega_N\leq CN$, is needed to ensure $\rho_t\in L^{5/3}(\R^3)$ for all $t\geq 0$. This is shown via kinetic energy estimates and a proof can be found in Appendix A of \cite{S18}.
\item The assumption of the finiteness of the energy $\mathcal{H}_0$ ensures $\widetilde{\rho}_t\in L^{5/3}(\R^3)$, thanks to standard interpolation estimates.
\item Assumptions $(H1),\,(H2)$ and $(H3)$ guarantee the existence and uniqueness of the solution to the Vlasov Eq. \eqref{eq:Valpha}. This follows by a simple adaptation of \cite{LP} to the case of inverse power law potentials $V(x)=\frac{1}{|x|^\alpha}$ (see Appendix \ref{appendix:regularity}). Hypothesis $(H4)$ is needed to prove smoothness properties of the solution obtained from $(H1)-(H3)$. These imply that  Theorem \ref{thm:trace} is expected to hold for fermionic mixed states. Assumptions in $(H4)$ are not optimal. However, looking for minimal assumptions falls out of the scope of this paper. Indeed, even though we would be able to relax the regularity assumptions on $W_N$, we would need at least bounds on $\|W_N\|_{H^2_4}$. This is not enough to consider pure states, for which only one derivative is allowed. 
\item The constants $C_i'$ and $D_j'$ only depend on the bounds in $(H3)$ and $(H4)$. More precisely, $C_1'$ depends on assumptions $(H3)$-$(H4)$ for $l=k=0,1$; $C_2'$ depends on $(H3)$-$(H4)$ for $l,k\leq 3$; $C_3'$ depends on $(H3)$-$(H4)$ for $l,k\leq 3$ and on $\mathcal{H}_0$; $C_4'$ depends on $(H3)$-$(H4)$ for $l,k\leq 5$; $C_5'$ depends on $(H3)$-$(H4)$ for $k\leq 2$; $C_6'$ depends on $(H3)$-$(H4)$ for $k\leq 3$; $C_7'$ depends on $(H3)$-$(H4)$ for $k\leq 4$; $C_8'$ depends on $(H3)$-$(H4)$ for $k\leq 5$; $D_0'$ depends on $(H4)$ for $k\leq 2$; $D_1'$ depends on $(H4)$ for $k\leq 3$; $D_2'$ depends on $(H4)$ for $k\leq 4$; $D_3'$ depends on $(H4)$ for $k\leq 5$; $D_4'$ depends on $(H4)$ for $k\leq 6$. Moreover, all the constants above depend on $T$ at most as $\exp(\exp(\exp T)$. This is due to the Gr\"{o}nwall type argument in the proof combined with estimates on the solution of the Vlasov equation obtained  following on from \cite{LP}. The exact same time dependence can be proven in Theorem \ref{thm:HS} and Theorem \ref{thm:L2}.

\item Examples of initial data satisfying the assumptions of Theorem \ref{thm:trace} can be constructed in the same spirit of Remark 3 in\cite{BPSS}. 

\item Theorem \ref{thm:trace} holds true for repulsive and attractive interactions $$V(x)=\frac{\gamma}{\ \ |x|^{\alpha}},\quad\quad \gamma\in\{+1,-1\}.$$ 
In addition, more general potentials can be considered as soon as they satisfy the assumptions of Proposition \ref{prop:FDLL} below, although they cannot behave worst than $|x|^{-\alpha}$, for $\alpha\in(0,1/2)$, as $x$ is close to $0$.  In particular, we cannot deal with the Coulomb singularity at zero. The main obstruction is that the bound \eqref{eq:trick} is not sharp. This matter will be addressed in the forthcoming paper \cite{S:H-VP}.  
This remark applies also to Theorem \ref{thm:HS} and Theorem \ref{thm:L2} below. 
\end{enumerate}

Let us informally describe the main ingredients of the proof. Our strategy relies on the approach initiated in \cite{BPSS}, where a comparison between the Hartree and the Vlasov dynamics is performed at the level of the Weyl quantization. More precisely, we consider the Weyl quantization of the solution to the Vlasov equation \eqref{eq:Valpha} with initial data $W_N$ and compare it with a solution to the Cauchy problem associated to the Hartree equation with the same initial data $W_N$. The main difficulty in our analysis is the singularity of the interaction potential. Indeed, we recall that in \cite{BPSS} $V$ is at least such that $\nabla V\in {\rm Lip}(\R^3)$. The singularity of the potential has been already an issue in \cite{LionsPaul}, \cite{FigalliLigaboPaul} and here the situation is made even more complicated by the fact that we cannot rely on compactness methods, since we look for explicit bounds on the rate of convergence.\\
To tackle this problem, a key ingredient is the use of a generalized version of Feffermann-de la Llave representation formula, which allows to rewrite a radially symmetric function $V$ with certain decay and regularity properties at infinity (see Proposition \ref{prop:FDLL}) as
\be\label{eq:GFDL}
V(x)=\int_0^\infty dr\,g(r)\,\chi_{(r/2)}*\chi_{(r/2)}(x),
\ee
where $\chi_{(r)}(x)$ is the characteristic function of a ball of radius $r$ centred at the origin, the symbol $*$ denotes the usual convolution operation on $\R^d$ and $g$ is a function that can be explicitly computed in terms of higher order derivatives of $V$.   
Eq. \eqref{eq:GFDL} has been first introduced by Feffermann and de la Llave in \cite{FDLL} for the Coulomb potential $V(x)=1/|x|$, and then generalized by Hainzl and Seiringer in \cite{HS}. \\
Through \eqref{eq:GFDL}, the l.h.s. of \eqref{eq:trace-thm} can be estimated by a sum of terms, eventually leading to a Gr\"{o}nwall type estimate. The most relevant aspect of such a rewriting consists in isolating the singularity of the interaction and making clear that it can be faced by requiring enough regularity on the solution to the Vlasov equation and thus on its initial data. To characterise the class of admissible initial data, we rely on an adaptation of the well known result by Lions and Perthame on the well-posedness of the Vlasov-Poisson system \cite{LP}. In our context, the Vlasov system can be written as
\be\label{eq:V-P-alpha}
\left\{\begin{array}{l}
\partial_t W_t(x,v)+v\cdot\nabla_x W_t(x,v)+E(t,x)\cdot\nabla_v W_t(x,v)=0,\\\\
E(t,x)=\nabla(\frac{\gamma}{|  \cdot  |^\alpha}*{\rho}_t)(x)\,,\\\\
\rho_t(x)=\int W_t(x,v)\,dv
\end{array}\right.
\ee
where $\gamma\in\{+1,\,-1\}$ models the repulsive or attractive nature of the interaction.
Notice that the second equation in \eqref{eq:V-P-alpha} can be rewritten as a modified Poisson equation in which the Laplace operator is replaced by the $p$-Laplacian, namely $\Delta^p$ with $p=(3-\alpha)/2$, which is clearly bigger than 1 if $\alpha\in(0,1/2)$. We underline that the most physical interesting case is the one of the Vlasov-Poisson equation, in which $p=1$, achieved when $\alpha=1$. Unfortunately, the techniques developed in this paper do not allow so far to consider $\alpha\geq 1/2$. Nevertheless, Theorem \ref{thm:trace} is, up to our knowledge, the only result proving strong convergence for potentials $V$ such that $\nabla V\notin {\rm Lip}(\R^3)$. 

Albeit Theorem \ref{thm:trace} provides trace norm convergence with explicit bounds on the reduced density $\omega_{N,t}$ towards $\widetilde{\omega}_{N,t}$, it does not give any information on the convergence of the Wigner transform $W_{N,t}$ towards a solution to the Vlasov Eq. \eqref{eq:Valpha} $\widetilde{W}_{N,t}$. This is due to the fact that there is no equivalence between the trace norm of a reduced density and the $L^1$ norm of its Wigner transform, as it is instead the case for the Hilbert-Schmidt norm of a reduced density and the $L^2$ norm of its Wigner transform. Motivated by this observation, we provide Hilbert-Schmidt norm convergence of the reduced density $\omega_{N,t}$ towards $\widetilde{\omega}_{N,t}$ in the following  Theorem, whose proof strongly relies on the result obtained in Theorem \ref{thm:trace}.

\begin{theorem}\label{thm:HS}
For $\delta\in(0,5/12)$, let $V(x)=1/|x|^{\alpha}$, for $\alpha\in(0, \frac{1}{2}-\frac{6}{5}\delta)$.
Let $\omega_N$ a sequence of reduced density matrices on $L^2(\R^3)$ such that $\tr\ \omega_N=N$, $0\leq \omega_N\leq 1$, $\tr\ (-\e^2\Delta)\omega_N\leq CN$, for some $C>0$, and its Wigner transform $W_N$ satisfies $(H1)-(H5)$ uniformly in $N$.\\
For $T>0$, denote by $\omega_{N,t}$ the solution of the Hartree Eq. with initial data $\omega_N$ and by $\widetilde{\omega}_{N,t}$ the Weyl quantization of the solution $\widetilde{W}_{N,t}$ of the Vlasov Eq. \eqref{eq:Valpha} with initial data $W_N$ in $[0,T]$. \\
Then, there exist positive constants $C_i'$, $D_j'$, $E_i'$, $i=1,\dots,8$, $j=0,\dots,4$, depending on the assumptions on the initial data $W_N$ and on $T>0$, such that 
\be\label{eq:HS-thm}
\begin{split}
\|\omega_{N,t}-\widetilde{\omega}_{N,t}\|_{\rm HS}\leq e^{C_1't}\,\sqrt{N}\,\e^{\frac{2}{5}}&\left[ C_2'\e^{\frac{3}{5}}+C_3'\e^{\frac{3}{5}\delta}+C_4'\e^{2+\frac{3}{5}\delta}+C_5'\e^{\frac{9}{10}}+C_6'\e^{\frac{7}{5}}+C_7'\e^{\frac{29}{10}}+C_8'\e^{\frac{17}{5}}\right.\\
&\quad+D_0'\e^{\frac{3}{5}}+D_1'\e^{\frac{8}{5}}+D_2'\e^{\frac{13}{5}}+D_3'\e^{\frac{18}{5}}+D_4'\e^{\frac{23}{5}}\\
 &\quad+E_1\e^{4+\frac{3}{5}\delta}+E_2\e^{\frac{49}{10}}+E_3\e^{\frac{27}{5}}+E_4\e^{\frac{13}{5}}\\
 &\quad\left.+E_5'\e^{\frac{18}{5}}+E_6'\e^\frac{23}{5}+E_7'\e^\frac{28}{5}+E_8'\e^\frac{33}{5}
     \right]
\end{split}
\ee   
\end{theorem}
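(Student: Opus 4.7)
The plan is to bound $\|\omega_{N,t}-\widetilde\omega_{N,t}\|_{\rm HS}$ by a Grönwall-type argument parallel to the one used for Theorem \ref{thm:trace}, but with two essential modifications: (i) Hilbert–Schmidt control of every commutator in place of trace-norm control, and (ii) use of the already proved trace-norm bound \eqref{eq:trace-thm} as a black box to handle those pieces of the source whose natural $\|\cdot\|_{\rm HS}$ estimate is insufficient. Setting $\xi_{N,t}:=\omega_{N,t}-\widetilde\omega_{N,t}$, I would first derive the Duhamel identity
\[
i\e\,\partial_t\xi_{N,t}=[-\e^2\Delta+V*\rho_{N,t},\xi_{N,t}]+[V*(\rho_{N,t}-\widetilde\rho_{N,t}),\widetilde\omega_{N,t}]+\mathcal A_{N,t},
\]
where $\mathcal A_{N,t}$ is exactly the Weyl-quantization defect already decomposed in the proof of Theorem \ref{thm:trace}. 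Since the first commutator on the right is generated by a self-adjoint operator, its two-parameter propagator is unitary and conjugation by it preserves the HS norm, so
\[
\|\xi_{N,t}\|_{\rm HS}\le\frac{1}{\e}\int_0^t\Bigl(\|[V*(\rho_{N,s}-\widetilde\rho_{N,s}),\widetilde\omega_{N,s}]\|_{\rm HS}+\|\mathcal A_{N,s}\|_{\rm HS}\Bigr)\,ds.
\]

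Next I would estimate $\|\mathcal A_{N,s}\|_{\rm HS}$ by revisiting the piecewise decomposition of the Weyl defect performed for Theorem \ref{thm:trace}, replacing every bound of the form $\tr|B|\le N\cdot(\ldots)$ by the corresponding $L^2$ estimate on the Wigner symbol $W_B$ via the identity $\|B\|_{\rm HS}=\sqrt{N}\,\|W_B\|_{L^2}$. This is precisely the mechanism that converts the prefactor $N$ in \eqref{eq:trace-thm} into $\sqrt N$ and regenerates the $C_i'$ and $D_j'$ families of $\e$-powers with the same structure. The additional $E_i'$ terms should come from those subterms of $\mathcal A_{N,s}$ whose $L^2$ estimate requires strictly stronger weighted Sobolev control of $\widetilde W_{N,s}$ than the $L^1$ estimate used in the trace-norm proof; the extra regularity is supplied by the strengthened assumption $(H5)$ together with propagation of weighted Sobolev bounds along the Vlasov flow, obtained as in the Appendix following \cite{LP}.

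The cross commutator $[V*(\rho_{N,s}-\widetilde\rho_{N,s}),\widetilde\omega_{N,s}]$ is the delicate piece and I would treat it via the Feffermann–de la Llave representation \eqref{eq:GFDL}. This reduces its HS norm to an integral against $g(r)$ of layered quantities $\|\chi_{r/2}*(\rho_{N,s}-\widetilde\rho_{N,s})\|_{L^q}$ multiplied by an HS-type weighted semiclassical Sobolev norm of $\widetilde W_{N,s}$ arising from the commutator with $\widetilde\omega_{N,s}$. The density difference satisfies $\|\rho_{N,s}-\widetilde\rho_{N,s}\|_{L^1}\le N^{-1}\tr|\xi_{N,s}|$, which is controlled by Theorem \ref{thm:trace}; interpolation with the $L^{5/3}$ bound furnished by the kinetic-energy hypothesis on $\omega_N$ and by conservation of $\mathcal H_t$ along the Vlasov flow yields the intermediate-$L^q$ control needed for each layer. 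Substituting these bounds and rearranging, one obtains an inequality of the form
\[
\|\xi_{N,t}\|_{\rm HS}\le\sqrt N\,\e^{2/5}\bigl[\cdots\bigr]+C_1'\int_0^t\|\xi_{N,s}\|_{\rm HS}\,ds,
\]
with the linear-in-HS term coming from the alternative splitting $[V*\widetilde\rho_{N,s},\xi_{N,s}]$ of the commutator; Grönwall then closes the estimate with the exponential prefactor $e^{C_1't}$ exactly as in Theorem \ref{thm:trace}.

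The main obstacle is the HS-norm estimate of the commutators involving the singular potential $V(x)=|x|^{-\alpha}$: a naive Calderón-type bound would cost a factor $\e^{-\kappa}$ that destroys the rate. Avoiding this is precisely the role of the Feffermann–de la Llave decomposition, which isolates the singularity in a scale variable $r$ and transfers all the loss to weighted Sobolev norms of the smooth object $\widetilde W_{N,s}$. The integrability of $g(r)$ at $r=0$ and $r=\infty$, together with the HS-compatible regularity on the Vlasov side, is what forces both the restriction $\alpha<\tfrac12-\tfrac{6}{5}\delta$ and the strengthened hypothesis $(H5)$ beyond $(H4)$, and is the single point in the argument where gaining back the small parameter $\e$ is genuinely delicate.
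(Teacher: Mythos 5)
Your overall strategy is the right one and matches the paper's: take the Hilbert--Schmidt norm of the Duhamel identity \eqref{eq:omega-omegatilde}, use the Fefferman--de la Llave representation, bound $\|\rho_s-\widetilde\rho_s\|_{L^1}\lesssim N^{-1}\,\tr|\omega_{N,s}-\widetilde\omega_{N,s}|$, and invoke Theorem \ref{thm:trace} for the latter. However, two specific points in your closure argument are wrong and worth flagging.

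First, after conjugating by the Hartree propagator $\mathcal{U}(t;s)$ generated by $-\e^2\Delta+\tfrac{1}{|\cdot|^\alpha}*\rho_t$, the only source terms left in the Duhamel formula are $[\,\tfrac{1}{|\cdot|^\alpha}*(\rho_s-\widetilde\rho_s),\widetilde\omega_{N,s}]$ and $B_s$; there is no residual $[\,\tfrac{1}{|\cdot|^\alpha}*\widetilde\rho_{N,s},\xi_{N,s}]$ commutator, so the ``linear-in-HS'' Gr\"onwall term you describe does not exist. No Gr\"onwall step in the HS norm is needed or performed: the paper simply substitutes the trace-norm bound \eqref{eq:trace-thm} -- which already carries $e^{C_1't}$ -- inside the time integral, and the exponential prefactor in \eqref{eq:HS-thm} is inherited directly from that substitution. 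Any attempt to manufacture the linear-in-$\|\xi\|_{\rm HS}$ term you invoke would require a uniform control of $\|\tfrac{1}{|\cdot|^\alpha}*(\rho_s-\widetilde\rho_s)\|_{L^\infty}$, which is exactly what the FDLL machinery is designed to avoid.

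Second, the $E_i'$ family of $\e$-powers is not generated by the HS estimate of $B_s$. In fact the HS bound on $B_s$ is \emph{easier} than the trace one: since you are already in HS norm you do not insert $(1+x^2)(1-\e^2\Delta)$, so there are two fewer derivatives and no spatial moments to control. The $E_i'$ terms arise instead from the subleading piece of the commutator estimate: for small $r$ one has $\|[\chi_{(r,y)},\widetilde\omega_{N,s}]\|_{\rm HS}\lesssim\sqrt{N}\,\e\,\sqrt{r}\,(C+C'\e^2)$, and the $C'\e^2$ correction, multiplied against the bracketed sum in \eqref{eq:trace-thm}, produces an $\e^2$-shifted copy of the $C_i'$/$D_j'$ powers -- that shifted copy is precisely the $E_i'$ family appearing in \eqref{eq:HS-thm}.
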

\begin{remark}
Analogously to \cite{BPSS}, it is possible to relax the assumptions on the initial data asking for $(H3)$ to hold true only for $l\leq 3$ and assuming $(H4)$ only for $k\leq 3$. However, this procedure is rather technical and does not allow neither to state Theorem \ref{thm:HS} in a simpler way getting rid of   higher order corrections nor to consider pure states. 
\end{remark}

The Hilbert-Schmidt norm has the advantage of being proportional to the $L^2$ norm at the level of the Wigner transform. Therefore Theorem \ref{thm:HS} can be used to prove the following

\begin{theorem}\label{thm:L2}
For $\delta\in(0,5/12)$, let $V(x)=1/|x|^{\alpha}$, for $\alpha\in(0, \frac{1}{2}-\frac{6}{5}\delta)$. Let $\omega_N$ be a sequence of reduced densities on $L^2(\R^3)$ with $\tr\omega_N=N$, $0~\leq~\omega_N\leq 1$, $\tr\ (-\e^2\Delta)\omega_N\leq CN$, for some $C>0$, and denote by $W_N$ the Wigner transform of $\omega_N$. Assume $(H1)-(H5)$ and let $W_0$ be a probability density on $\R^3\times\R^3$ such that 
$$\|W_N-W_0\|_{L^1}\leq C\,k_{N,1},\quad\quad\quad \|W_N-W_0\|_{L^2}\leq C\,k_{N,2}$$
where $k_{N,j}$ are nonnegative sequences such that $k_{N,j}\to 0$ as $N\to\infty$, $j=1,2$.
Then, the Wigner transform $W_{N,t}$ of the solution $\omega_{N,t}$ to the Hartree Eq. \eqref{eq:HF} with initial data $\omega_N$ converges in $L^2(\R^3\times\R^3)$ to $W_t$, solution to the Vlasov Eq. \eqref{eq:Valpha} with initial data $W_0$, i.e. 
\be\label{eq:L2-thm}
\begin{split}
\|{W}_{N,t}-W_{t}\|_{L^2}\leq &e^{C_1't}\,\sqrt{N}\,\e^{\frac{2}{5}}\left[ C_2'\e^{\frac{3}{5}}+C_3'\e^{\frac{3}{5}\delta}+C_4'\e^{2+\frac{3}{5}\delta}+C_5'\e^{\frac{9}{10}}+C_6'\e^{\frac{7}{5}}+C_7'\e^{\frac{29}{10}}+C_8'\e^{\frac{17}{5}}\right.\\
&\quad\quad\quad\quad\quad+D_0'\e^{\frac{3}{5}}+D_1'\e^{\frac{8}{5}}+D_2'\e^{\frac{13}{5}}+D_3'\e^{\frac{18}{5}}+D_4'\e^{\frac{23}{5}}+E_1\e^{4+\frac{3}{5}\delta}\\
&\quad\quad\quad\quad\quad\left.+E_2\e^{\frac{49}{10}}+E_3\e^{\frac{27}{5}}+E_4\e^{\frac{13}{5}}
 +E_5'\e^{\frac{18}{5}}+E_6'\e^\frac{23}{5}+E_7'\e^\frac{28}{5}+E_8'\e^\frac{33}{5}
     \right]\\
 &+C(k_{N,1}+k_{N,2})
\end{split}
\ee
where $C_i'$, $D_j'$, $E_i'$, $i=1,\dots,8$, $j=0,\dots,4$,  are constants depending on the assumptions on the initial data $W_N$ and on $T>0$.   
\end{theorem}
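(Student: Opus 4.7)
My plan is to reduce Theorem~\ref{thm:L2} to Theorem~\ref{thm:HS} combined with a classical $L^2$-stability estimate for the Vlasov equation~\eqref{eq:Valpha} under perturbation of its initial datum. The starting point is the identity
$$
\|W_{N,t}-\widetilde{W}_{N,t}\|_{L^2(\R^3\times\R^3)}=\frac{1}{\sqrt{N}}\,\|\omega_{N,t}-\widetilde{\omega}_{N,t}\|_{\rm HS},
$$
already recorded in the introduction, which expresses the fact that Weyl quantization is, up to the factor $N$, an isometry between $L^2$ phase space and the Hilbert--Schmidt class. Combined with the triangle inequality
$$
\|W_{N,t}-W_t\|_{L^2}\leq \|W_{N,t}-\widetilde{W}_{N,t}\|_{L^2}+\|\widetilde{W}_{N,t}-W_t\|_{L^2},
$$
the first term is controlled by Theorem~\ref{thm:HS}: dividing~\eqref{eq:HS-thm} by $\sqrt{N}$ produces exactly the long bracket on the right-hand side of~\eqref{eq:L2-thm}. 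It remains to estimate the second term by $C(k_{N,1}+k_{N,2})$.

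For this, I would set $U_t:=\widetilde{W}_{N,t}-W_t$, which satisfies
$$
\partial_t U_t+2v\cdot\nabla_x U_t-E_t^{(N)}\cdot\nabla_v U_t=(E_t^{(N)}-E_t)\cdot\nabla_v W_t,
$$
where $E_t^{(N)}=\nabla(|\cdot|^{-\alpha}*\widetilde{\rho}_t)$ and $E_t=\nabla(|\cdot|^{-\alpha}*\rho_t)$ are the self-consistent fields of the two Vlasov solutions with common initial data $W_N$ and $W_0$ respectively. Multiplying by $U_t$ and integrating in $(x,v)$, the transport and field terms drop thanks to the incompressibility of the drift $(2v,E_t^{(N)})$, yielding
$$
\tfrac{d}{dt}\|U_t\|_{L^2}^2\leq 2\,\|E_t^{(N)}-E_t\|_{L^\infty_x}\,\|\nabla_v W_t\|_{L^2}\,\|U_t\|_{L^2}.
$$
The uniform bound on $\|\nabla_v W_t\|_{L^2}$ for $t\in[0,T]$ is inherited from the regularity of the initial datum via the Lions--Perthame type propagation result of Appendix~\ref{appendix:regularity}.

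To close the estimate, I would bound the force difference by a combination of $L^1$ and $L^2$ norms of $\widetilde{\rho}_t-\rho_t$. Splitting $\nabla|x|^{-\alpha}=K_1+K_2$ with $K_1$ supported in $\{|x|\leq 1\}$ and $K_2$ in $\{|x|>1\}$, the choice $\alpha\in(0,1/2)$ ensures that $K_1\in L^q$ for some $q$ and $K_2\in L^r$ for some $r$ in suitable ranges; Young's inequality then gives
$$
\|E_t^{(N)}-E_t\|_{L^\infty_x}\leq C\,\bigl(\|\widetilde{\rho}_t-\rho_t\|_{L^1}+\|\widetilde{\rho}_t-\rho_t\|_{L^2}\bigr).
$$
The density differences are in turn bounded by $\|U_t\|_{L^1}$ and $\|U_t\|_{L^2}$ by absorbing velocity moments using the weighted bounds coming from $(H2)$ and $(H4)$. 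A parallel $L^1$-stability estimate for~\eqref{eq:Valpha}, obtained by transporting $|U_t|$ along the characteristics of $(2v,E_t^{(N)})$, yields $\|U_t\|_{L^1}\leq Ce^{Ct}\|W_N-W_0\|_{L^1}\leq Ce^{Ct}k_{N,1}$. Substituting back and applying Gr\"onwall's lemma to the $L^2$ inequality then gives
$$
\|U_t\|_{L^2}\leq e^{C''t}\,(k_{N,1}+k_{N,2}),
$$
which together with Theorem~\ref{thm:HS} yields exactly~\eqref{eq:L2-thm}.

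The main technical obstacle is the control of $\|E_t^{(N)}-E_t\|_{L^\infty_x}$ in terms of $L^1$ and $L^2$ norms of the density difference: the kernel $\nabla|x|^{-\alpha}$ is simultaneously singular at the origin and slowly decaying at infinity, so neither endpoint can be handled by a single Lebesgue estimate. The near-far splitting described above works precisely in the regime $\alpha\in(0,1/2)$, matching the range already imposed in Theorem~\ref{thm:trace} and Theorem~\ref{thm:HS}; pushing $\alpha$ closer to $1$ would require a genuinely different argument of Loeper--Miot type and falls outside the reach of this strategy. Everything else is a routine coupled Gr\"onwall iteration in $L^1$ and $L^2$, together with the propagation of Sobolev regularity inherited from $(H1)$--$(H4)$.
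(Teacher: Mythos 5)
Your overall strategy mirrors the paper's exactly: split via the triangle inequality through the auxiliary solution $\widetilde W_{N,t}$ (the Vlasov solution with initial datum $W_N$), control $\|W_{N,t}-\widetilde W_{N,t}\|_{L^2}$ by Theorem~\ref{thm:HS} together with the identity $\|\omega_{N,t}-\widetilde\omega_{N,t}\|_{\rm HS}=\sqrt N\,\|W_{N,t}-\widetilde W_{N,t}\|_{L^2}$, and handle $\|\widetilde W_{N,t}-W_t\|_{L^2}$ by a Vlasov stability estimate closed with Gr\"onwall. The paper leaves the stability estimate essentially to \cite{BPSS} (``minor modifications'' for singular potentials), pointing to Remark~\ref{rk:v-der-cond} as the key ingredient; you sketch it in more detail.

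There is, however, one genuine slip in your stability argument. You write the transport equation for $U_t=\widetilde W_{N,t}-W_t$ with the field $E^{(N)}_t$ (associated to $\widetilde W_{N,t}$) in the transport part and the source $(E^{(N)}_t-E_t)\cdot\nabla_v W_t$ on the right, and then invoke a uniform bound on $\|\nabla_v W_t\|_{L^2}$ ``inherited from the regularity of the initial datum via Lions--Perthame propagation.'' But the initial datum for $W_t$ is $W_0$, about which the theorem assumes only that it is a probability density close to $W_N$ in $L^1\cap L^2$; there are no regularity or moment hypotheses on $W_0$, so nothing grants that $\nabla_v W_t$ exists in $L^2$, let alone propagates. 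You should subtract in the opposite direction: put $E_t$ in the transport part and take $(E^{(N)}_t-E_t)\cdot\nabla_v\widetilde W_{N,t}$ as the source. Both drift fields $(2v,E)$ are divergence-free in $(x,v)$, so the $L^2$ energy identity closes either way, and now the derivative factor is $\nabla_v\widetilde W_{N,t}$, whose $L^\infty_x L^2_v$ control follows from $(H3)$ via Remark~\ref{rk:v-der-cond} --- exactly the observation the paper flags as the crux of the stability step.
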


We underline that Theorem \ref{thm:HS} uses strongly the trace norm bound \eqref{eq:trace-thm}. As already mentioned, there is no simple relation between the trace norm of a reduced density and the $L^1$ norm of its Wigner transform. This is why working at the level of Weyl quantization, instead of looking at the Wigner transform, is so crucial here and represents the strength of the approach adopted in this paper.  In other words, there is no simple way of reproducing estimate \eqref{eq:L2-thm} directly at the level of the Wigner transform, without passing through the Weyl quantization. 

The paper is organised as follows. Section \ref{section:trace} is devoted to the proof of Theorem \ref{thm:trace}. In particular, in Subsection \ref{subsect:notations} some useful notations are given, in Subsection \ref{subsect:dominant} the term containing the dominant part is estimated, while the subleading term  is analysed in Subsection \ref{subsect:subleading}.  Subsection \ref{subsect:end} gathers together the estimates from Subsections \ref{subsect:dominant} and \ref{subsect:subleading} to prove convergence of the Hartree dynamics towards the Vlasov Eq. at the level of Weyl quantization in  trace norm. Section \ref{section:HS} heavily makes use of Theorem \ref{thm:trace} to establish convergence of the Hartree dynamics towards the Vlasov Eq. in Hilbert-Schmidt norm, thus leading to the proof of Theorem \ref{thm:HS}. In Section \ref{section:L2} we give the proof of Theorem \ref{thm:L2}. Lastly Appendix \ref{appendix:regularity} reviews the theory of existence, uniqueness and regularity for the Vlasov Eq., adapting the well-know result by Lions and Perthame \cite{LP} to the case of singular interaction $V(x)=\frac{1}{|x|^\alpha}$, for $\alpha\in(0,1/2)$, while Appendix \ref{ap:int} contains some estimates on Gaussian integrals used throughout the paper. 

%%%%%%%%%%%%%%%%%%%%%%%%%%
%%%%%%%%%%%%%%%%%%%%%%%%%%
%%%%%%%%%%%%%%%%%%%%%%%%%%

\section{Trace norm convergence: proof of Theorem \ref{thm:trace}}\label{section:trace}

This section is devoted to prove Theorem \ref{thm:trace}.  
To this end, we will make use of the following simple Lemma
%%%%%%%%%%%%%%%%%%%%%
\begin{lemma}\label{lem:est-tr}
Let $\omega_{N,t}$ be a solution to the Hartree Eq. with initial datum $\omega_N$
\be\label{eq:HF}
i\,\e\,\partial_t\omega_{N,t}=\left[-\e^2\Delta+\frac{1}{|\cdot|^\alpha}*\rho_t\,,\,\omega_{N,t}\right]
\ee
for some $\alpha\in (0,1/2)$.
Let $\widetilde{\omega}_{N,t}$ be the Weyl transform of $\widetilde{W}_{N,t}$, solution to the Vlasov Eq. \eqref{eq:Valpha} with initial data $W_N$, the Wigner transform of $\omega_N$. Then
\be\label{eq:est-tr}
\tr\ |\omega_{N,t}-\tilde{\omega}_{N,t}|\leq\frac{1}{\e}\int_0^t \tr\ \left|\left[ \frac{1}{|\cdot|^{\alpha}}*(\rho_s-\tilde{\rho}_s),\tilde{\omega}_{N,s} \right]\right|\,ds +\frac{1}{\e}\int_0^t \tr\ |B_s|\,ds
\ee
where, for every $s\in [0,t]$, $B_s$ is the operator with kernel
\be\label{eq:def-B}
B_s(x;y)=\left[\left(\frac{1}{|\cdot|^{\alpha}}*\widetilde{\rho}_s\right)(x)-\left(\frac{1}{|\cdot|^{\alpha}}*\widetilde{\rho}_s\right)(y)-\nabla\left(\frac{1}{|\cdot|^{\alpha}}*\widetilde{\rho}_s\right)\left(\frac{x+y}{2}\right)\cdot(x-y)\right]\,\widetilde{\omega}_{N,t}(x;y)\,.
\ee
\end{lemma}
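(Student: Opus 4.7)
The plan is to derive an evolution equation for $\widetilde{\omega}_{N,t}$ by Weyl-quantizing the Vlasov equation \eqref{eq:Valpha}, subtract it from the Hartree equation \eqref{eq:HF}, then apply Duhamel's formula against the unitary propagator generated by the Hartree Hamiltonian. First I would verify that $\widetilde{\omega}_{N,0}=\omega_N$, so that the difference $D_t:=\omega_{N,t}-\widetilde{\omega}_{N,t}$ vanishes initially; this is immediate because the Wigner transform \eqref{eq:Wigner} and Weyl quantization \eqref{eq:Weyl} are inverse operations, and $\widetilde{W}_{N,0}=W_N$ by assumption.

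The core computation is to identify how the Vlasov equation translates at the level of Weyl symbols. A direct calculation with the kernel $\widetilde{\omega}_{N,t}(x;y)=N\int \widetilde{W}_{N,t}((x+y)/2,v)\,e^{iv\cdot(x-y)/\e}\,dv$ shows that the Weyl quantization of the transport term $-2v\cdot\nabla_x \widetilde{W}_{N,t}$ produces exactly $\frac{1}{i\e}[-\e^2\Delta,\widetilde{\omega}_{N,t}]$, while integration by parts in $v$ converts the Weyl quantization of $\nabla(V*\widetilde{\rho}_t)\cdot\nabla_v\widetilde{W}_{N,t}$ into the operator with kernel
\begin{equation*}
-\tfrac{i}{\e}\,\nabla(V*\widetilde{\rho}_t)\bigl(\tfrac{x+y}{2}\bigr)\cdot(x-y)\,\widetilde{\omega}_{N,t}(x;y).
\end{equation*}
Adding and subtracting the Hartree-type commutator kernel $(V*\widetilde{\rho}_t(x)-V*\widetilde{\rho}_t(y))\widetilde{\omega}_{N,t}(x;y)$ exhibits $\widetilde{\omega}_{N,t}$ as a solution of
\begin{equation*}
i\e\,\partial_t \widetilde{\omega}_{N,t}=\bigl[-\e^2\Delta+V*\widetilde{\rho}_t,\widetilde{\omega}_{N,t}\bigr]-B_t,
\end{equation*}
with $B_t$ as in \eqref{eq:def-B}. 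This identifies $B_t$ as the intrinsic semiclassical defect arising from the difference between Moyal and Poisson brackets.

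Subtracting this identity from the Hartree equation and rearranging the potential commutators gives
\begin{equation*}
i\e\,\partial_t D_t=\bigl[-\e^2\Delta+V*\rho_t,D_t\bigr]+\bigl[V*(\rho_t-\widetilde{\rho}_t),\widetilde{\omega}_{N,t}\bigr]+B_t.
\end{equation*}
Let $\mathcal{U}_{t,s}$ be the two-parameter unitary group generated by the time-dependent self-adjoint Hartree Hamiltonian $-\e^2\Delta+V*\rho_t$; standard arguments (Kato's theorem applied to this generator) give its well-posedness on $L^2(\R^3)$. Duhamel's principle then yields, since $D_0=0$,
\begin{equation*}
D_t=\frac{1}{i\e}\int_0^t \mathcal{U}_{t,s}\Bigl(\bigl[V*(\rho_s-\widetilde{\rho}_s),\widetilde{\omega}_{N,s}\bigr]+B_s\Bigr)\mathcal{U}_{t,s}^{*}\,ds.
\end{equation*}
Conjugation by unitaries preserves the trace norm, so taking $\tr|\cdot|$ under the integral, together with the triangle inequality for the trace norm, produces exactly the bound \eqref{eq:est-tr}. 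The only delicate step is the Weyl symbol manipulation in the potential term, which requires sufficient integrability/regularity of $\widetilde{W}_{N,t}$ to justify the integration by parts in $v$; this is supplied by hypotheses $(H1)$--$(H4)$ via the regularity theory for \eqref{eq:Valpha} recalled in Appendix \ref{appendix:regularity}.
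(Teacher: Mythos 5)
Your proposal is correct and takes essentially the same approach as the paper: Weyl-quantize the Vlasov equation to identify the evolution of $\widetilde{\omega}_{N,t}$, subtract it from the Hartree equation, apply Duhamel's formula against the propagator generated by the Hartree Hamiltonian $-\e^2\Delta+V*\rho_t$, and use unitary invariance of the trace norm. The only cosmetic difference is that you present the effective equation directly as a Hartree-type equation with defect $B_t$, whereas the paper first writes it as free evolution plus the operator $A_t$ with kernel $\nabla(V*\widetilde{\rho}_t)\left(\frac{x+y}{2}\right)\cdot(x-y)\,\widetilde{\omega}_{N,t}(x;y)$ and only introduces $B_t$ in the final algebraic rearrangement; these are the same computation.
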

%%%%%%%%%%%%%%%%%%%%%
\begin{proof}
Straightforward computations show that $\widetilde{\omega}_{N,t}$ solves 
\be\label{eq:vlasov-weyl}
i\,\e\,\partial_t\,\widetilde{\omega}_{N,t}=[-\e^2\,\Delta,\widetilde{\omega}_{N,t}]+A_t
\ee
where $A_t$ is the operator with integral kernel
\begin{equation*}
A_t(x;y)=\nabla\left(\frac{1}{|\cdot|^{\alpha}}*\widetilde{\rho}_t\right)\left(\frac{x+y}{2}\right)\cdot(x-y)\,\widetilde{\omega}_{N,t}(x;y)\,.
\end{equation*}
To compare $\omega_{N,t}$ and $\widetilde{\omega}_{N,t}$, we introduce the two-parameter group of unitary transformations $\mathcal{U}(t;s)$ generated by the Hartree Hamiltonian $h(t):=-\e^2\Delta+\frac{1}{\ |\cdot|^\alpha}*\rho_t$
\begin{equation*}
\left\{
\begin{array}{l}
i\,\e\,\partial_t\,\mathcal{U}(t;s)=h(t)\,\mathcal{U}(t;s)\\
\mathcal{U}(s;s)=1 
\end{array}
\right.
\end{equation*}
Notice that $\mathcal{U}(t;s)$ is such that $\omega_{N,t}=\mathcal{U}(t;0)\,\omega_{N}\,\mathcal{U}^*(t;0)$. In order to  get rid of the kinetic part of the Hamiltonian $h(t)$, we conjugate the difference between $\omega_{N,t}$ and $\widetilde{\omega}_{N,t}$ with $\mathcal{U}(t;0)$ and perform the time derivative. This leads to
\begin{equation*}
\begin{split}
i\,\e\,\partial_t\,(\mathcal{U}^*(t;0)\,(\omega_{N,t}-\widetilde{\omega}_{N,t})\,\mathcal{U}(t;0))&=\mathcal{U}^*(t;0)\,[h(t),\omega_{N,t}-\widetilde{\omega}_{N,t}]\,\mathcal{U}(t;0)\\
&+\mathcal{U}^*(t;0)\,([h(t),\omega_{N,t}]-[-\e^2\Delta,\widetilde{\omega}_{N,t}]-A_t)\,\mathcal{U}(t;0)\\
&=\mathcal{U}^*(t;0)\,\left(\left[\frac{1}{|\cdot|^{\alpha}}*\rho_t,\widetilde{\omega}_{N,t} \right]-A_t\right)\,\mathcal{U}(t;0)\\
&=\mathcal{U}^*(t;0)\,\left(\left[\frac{1}{|\cdot|^{\alpha}}*(\rho_t-\widetilde{\rho}_t),\widetilde{\omega}_{N,t}\right]+B_t\right)\,\mathcal{U}(t;0)\,
\end{split}
\end{equation*}
where $B_t$ is defined through its kernel as in \eqref{eq:def-B}. Recalling that $\widetilde{\omega}_{N,0}=\omega_N$, Duhamel's formula yields
 \begin{equation}\label{eq:omega-omegatilde}
 \begin{split}
\mathcal{U}^*(t;0)\,(\omega_{N,t}-\widetilde{\omega}_{N,t})\,\mathcal{U}(t;0)&=\frac{1}{i\,\e}\int_0^t \mathcal{U}^*(t;s)\,\left[\frac{1}{|\cdot|^{\alpha}}*(\rho_s-\widetilde{\rho}_s),\widetilde{\omega}_{N,s}\right]\,\mathcal{U}(t;s)\,ds\\
&+\frac{1}{i\,\e}\int_0^t \mathcal{U}^*(t;s)\,B_s\,\mathcal{U}(t;s)\,ds
\end{split}
 \end{equation}
 Taking the trace norm in the above expression and recalling that $\mathcal{U}(t;s)$ is a family of unitary operators, the Lemma is proved.  
\end{proof}
%%%%%%%%%%%%%%%%%%%%%
In order to prove Theorem \ref{thm:trace}, we will bound the r.h.s. in \eqref{eq:est-tr} and conclude by Gr\"{o}nwall's Lemma. From $\rho_s-\widetilde{\rho}_s$ in the first term on the r.h.s. we extract the Gr\"{o}nwall term $\omega_{N,s}-\widetilde{\omega}_{N,s}$, while the second term turns out to give subleading contributions. To deal with the singularity of the potential, we will use the properties of $\tilde{\omega}_{N,s}$, which translate into regularity assumptions on its Wigner transform $\widetilde{W}_{N,s}$, solution to the Vlasov Eq. \eqref{eq:Valpha}. 

\subsection{Notations} \label{subsect:notations}

We start by introducing some notations that will help us to shorten the exposition. \\
We underline  that the notation $C$ will refer to constants, possibly depending on $T>0$. When a constant depends on another parameters, we state it explicitly. 

We define the operators $\mathcal{J}_k$ for $k=1,2,3$ through their kernels $\mathcal{J}_k(x;x')$ for $x,x'\in\R^3$, as follows:
\begin{equation}\label{eq:J1-J2-J3}
\begin{split}
\mathcal{J}_1(x;x')&:=-N\int_0^1 ds\,(1+x^2)\,\chi_{(r/\sqrt{s},y)}(x)\,\chi_{(r/\sqrt{1-s},y)}(x')\,\left[\frac{(x-y)}{r^2}+\frac{(x'-y)}{r^2}\right]\cdot (x-x')\\
&\hspace{7.5cm}\times\int dv\,\widetilde{W}_{N,s}\left(\frac{x+x'}{2},v\right)\,e^{iv\cdot\frac{(x-x')}{\e}} \\
\mathcal{J}_2(x;x')&:= -N\e^2\,(1+x^2)\,\int ds\,\Delta_x\left\{ \chi_{(r/\sqrt{s},y)}(x)\,\chi_{(r/\sqrt{1-s},y)}(x') \right.   \\
&\hspace{3cm}\left. \times\frac{(x-y)}{r^2}\cdot (x-x')\int dv\,\widetilde{W}_{N,s}\left(\frac{x+x'}{2},v\right)\,e^{iv\cdot\frac{(x-x')}{\e}}   \right\}\\
\mathcal{J}_3(x;x')&:= -N\e^2\,(1+x^2)\,\int ds\,\Delta_x\left\{ \chi_{(r/\sqrt{s},y)}(x)\,\chi_{(r/\sqrt{1-s},y)}(x') \right.   \\
&\hspace{3cm}\left. \times\frac{(x'-y)}{r^2}\cdot (x-x')\int dv\,\widetilde{W}_{N,s}\left(\frac{x+x'}{2},v\right)\,e^{iv\cdot\frac{(x-x')}{\e}}   \right\}.\\
\end{split}
\end{equation} 
Moreover, we will denote by $\mathcal{J}_{2,j}$ for $j=1,\dots,5$ and $\mathcal{J}_{3,i}$ for $i=1,\dots,4$ the operators whose associated kernels have absolute values defined by
\be\label{eq:J2}
\begin{split}
|\mathcal{J}_{2,1}(x;x')|&:= \frac{N\e}{r}\int_0^1 \frac{ds}{\sqrt{s}}\,(1+x^2)\,\chi_{(r/\sqrt{s},y)}(x)\frac{(x-y)}{r/\sqrt{s}}\chi_{(r/\sqrt{1-s},y)}(x')\\
&\quad\quad\quad\times\int dv\,\left[2v\,\widetilde{W}_{N,s}\left(\frac{x+x'}{2},v\right)+v^2\nabla_v\widetilde{W}_{N,t}\left(\frac{x+x'}{2},v\right)\right]\,e^{iv\cdot\frac{(x-x')}{\e}}\\
|\mathcal{J}_{2,2}(x;x')|&:=\frac{4\,N\e^2}{r^2}\int_0^1 ds\,(1+x^2)\,\chi_{(r/\sqrt{s},y)}(x)\chi_{(r/\sqrt{1-s},y)}(x')\\
&\quad\quad\quad\quad\times\left[\frac{(x-y)^3}{r^3/\sqrt{s^3}}+2\frac{(x-y)}{r/\sqrt{s}}\right]\cdot\frac{(x-x')}{r/\sqrt{s}}\int dv\,\widetilde{W}_{N,t}\left(\frac{x+x'}{2},v\right)\,e^{iv\cdot\frac{(x-x')}{\e}}\\
|\mathcal{J}_{2,3}(x;x')|&:=\frac{4\,N\e^2}{r^2}\int_0^1 ds\,(1+x^2)\,\chi_{(r/\sqrt{s},y)}(x)\frac{(x-y)^2}{r^2/s}\chi_{(r/\sqrt{1-s},y)}(x')\,\int dv\,e^{iv\cdot\frac{(x-x')}{\e}}\\
&\quad\quad\quad\quad\times\left[\widetilde{W}_{N,t}\left(\frac{x+x'}{2},v\right)+\e\nabla^2_{v,x}\widetilde{W}_{N,t}\left(\frac{x+x'}{2},v\right)+v\cdot\nabla_v\widetilde{W}_{N,t}\left(\frac{x+x'}{2},v\right)\right]\,\\
|\mathcal{J}_{2,4}(x;x')|&:=\frac{2\,N\e^2}{r^2}\int_0^1 ds\,(1+x^2)\,\chi_{(r/\sqrt{s},y)}(x)\chi_{(r/\sqrt{1-s},y)}(x')\,\int dv\,e^{iv\cdot\frac{(x-x')}{\e}}\\
&\quad\quad\quad\quad\times\left[\widetilde{W}_{N,t}\left(\frac{x+x'}{2},v\right)+\e\nabla^2_{v,x}\widetilde{W}_{N,t}\left(\frac{x+x'}{2},v\right)+v\cdot\nabla_v\widetilde{W}_{N,t}\left(\frac{x+x'}{2},v\right)\right]\,\\
|\mathcal{J}_{2,5}(x;x')|&:=\frac{4\,N\e^2}{r}\int_0^1 \frac{ds}{\sqrt{s}}\,(1+x^2)\,\chi_{(r/\sqrt{s},y)}(x)\frac{(x-y)}{r/\sqrt{s}}\chi_{(r/\sqrt{1-s},y)}(x')\,\int dv\,e^{iv\cdot\frac{(x-x')}{\e}}\\
&\quad\quad\quad\times\left[\nabla_x\widetilde{W}_{N,s}\left(\frac{x+x'}{2},v\right)+\frac{\e}{2}\nabla^3_{v,x,x}\widetilde{W}_{N,t}\left(\frac{x+x'}{2},v\right)+v\nabla^2_{v,x}\widetilde{W}_{N,t}\left(\frac{x+x'}{2},v\right)\right]\,\\
\end{split}
\ee
\be\label{eq:J3}
\begin{split}
|\mathcal{J}_{3,1}(x;x')|&:=\frac{N\e}{r}\left|\int_0^1 \frac{ds}{\sqrt{1-s}}\,(1+x^2)\,\chi_{(r/\sqrt{s},y)}(x)\,\chi_{(r/\sqrt{1-s},y)}(x')\,\frac{(x'-y)}{r/\sqrt{1-s}}\int dv\,e^{iv\cdot\frac{(x-x')}{\e}}\right.\\
&\quad\quad\quad\left.\times\left[2v\,\widetilde{W}_{N,t}\left(\frac{x+x'}{2},v\right)+\nabla_{v}\left(v^2\,\widetilde{W}_{N,t}\left(\frac{x+x'}{2},v\right)\right)\right]\,\right|\\
|\mathcal{J}_{3,2}(x;x')|&:=\frac{N\e^2}{r^2}\left|\int_0^1 ds\,2s\,(1+x^2)\,\chi_{(r/\sqrt{s},y)}(x)\left[2\frac{|x-y|^2}{r^2/s}+1\right]\,\chi_{(r/\sqrt{1-s},y)}(x')\,\frac{(x'-y)}{r}\cdot\frac{(x-x')}{r}\right.\\
&\quad\quad\quad\quad\left.\times\int dv\,\widetilde{W}_{N,t}\left(\frac{x+x'}{2},v\right)\,e^{iv\cdot\frac{(x-x')}{\e}}\right|\\
|\mathcal{J}_{3,3}(x;x')|&:=\frac{4\,N\e^2}{r^2}\left|\int_0^1 ds\,\frac{\sqrt{s}}{\sqrt{1-s}}\,(1+x^2)\,\chi_{(r/\sqrt{s},y)}(x)\frac{(x-y)}{r/\sqrt{s}}\,\chi_{(r/\sqrt{1-s},y)}(x')\,\frac{(x'-y)}{r/\sqrt{1-s}}\int dv\,e^{iv\cdot\frac{(x-x')}{\e}}\right.\\
&\quad\quad\quad\left.\times\left[\widetilde{W}_{N,t}\left(\frac{x+x'}{2},v\right)+\e\nabla^2_{v,x}\widetilde{W}_{N,t}\left(\frac{x+x'}{2},v\right)+\nabla_{v}\left(v\,\widetilde{W}_{N,t}\left(\frac{x+x'}{2},v\right)\right)\right]\,\right|\\
|\mathcal{J}_{3,4}(x;x')|&:=\frac{2\,N\e^2}{r}\left|\int_0^1\frac{ds}{\sqrt{1-s}}\,(1+x^2)\,\chi_{(r/\sqrt{s},y)}(x)\,\chi_{(r/\sqrt{1-s},y)}(x')\,\frac{(x'-y)}{r/\sqrt{1-s}}\int dv\,e^{iv\cdot\frac{(x-x')}{\e}}\right.\\
&\quad\quad\quad\left.\times\left[\nabla_x\widetilde{W}_{N,t}\left(\frac{x+x'}{2},v\right)+\frac{\e}{2}\nabla^3_{v,x,x}\widetilde{W}_{N,t}\left(\frac{x+x'}{2},v\right)+\nabla^2_{v,x}\left(v\,\widetilde{W}_{N,t}\left(\frac{x+x'}{2},v\right)\right)\right]\,\right|\\
\end{split}
\ee

\subsection{Estimates on the dominant term}\label{subsect:dominant}

We give the bound on the first term in the r.h.s. of \eqref{eq:est-tr} in the following Proposition

%%%%%%%%%%%%%%%%%%
\begin{proposition}\label{prop:1term-tr}
Under the same assumptions of Lemma \ref{lem:est-tr} and Theorem \ref{thm:trace}, it holds
\be\label{eq:1term-tr}
\begin{split}
\tr\ \left|\left[\frac{1}{|\cdot|^\alpha}*(\rho_s-\widetilde{\rho}_s)\,,\,\widetilde{\omega}_{N,s}\right]\right|\leq &\e\,C_1\,\tr\ |\omega_{N,s}-\widetilde{\omega}_{N,s}| \\
&+C_2\,N\,\e+ C_3\,N\,\e^{\frac{7}{5}+\frac{3}{5}\delta}+C_4\,N\,\e^{\frac{17}{5}+\frac{3}{5}\delta}\\
&+C_5\,N\,\e^{\frac{23}{10}}+C_6\,N\,\e^{\frac{14}{5}}+C_7\,N\,\e^{\frac{43}{10}}+C_8\,N\,\e^{\frac{24}{5}}
\end{split}
\ee
where $C_i$, $i=1,\dots,8$ are positive constants depending on weighted Sobolev norms of $\widetilde{W}_{N,s}$ as follows: 
\begin{equation*}
\begin{split}
C_1&=C_1\left(\|(1+x^2+v^2)\nabla^j\widetilde{W}_{N,s}\|_{L^2_v(L^\infty_x)},\,\|\widetilde{W}_{N,s}\|_{H_2^j}\right)\quad  j=0,1;
\\ 
C_2&=C_2\left(\|v^2\nabla^j\widetilde{W}_{N,s}\|_{L^2_v(L^\infty_x)},\,\|\widetilde{W}_{N,s}\|_{H_2^j}\right)\quad  j\leq 3;
\\
C_3&=C_3\left(\|\rho_s\|_{L^{5/3}},\,\|\widetilde{\rho}_s\|_{L^{5/3}},\,\|\widetilde{W}_{N,s}\|_{\mathcal{W}^{j,1}_8},\,\|\widetilde{W}_{N,s}\|_{H_2^j},\,\|(1+x^2+v^2)\nabla^j\widetilde{W}_{N,s}\|_{L^2_v(L^\infty_x)}\right)\quad j\leq 3;
\\
C_4&=C_4\left(\|\rho_s\|_{L^{5/3}},\,\|\widetilde{\rho}_s\|_{L^{5/3}},\,\|\widetilde{W}_{N,s}\|_{\mathcal{W}^{j,1}},\,\|\widetilde{W}_{N,s}\|_{H^j},\,\|\nabla^j\widetilde{W}_{N,s}\|_{L^2_v(L^\infty_x)}\right)\quad  j\leq 5;
\\ 
C_5&=C_5\left(\|\rho_s\|_{L^{5/3}},\,\|\widetilde{\rho}_s\|_{L^{5/3}},\,\|\widetilde{W}_{N,s}\|_{H_2^j}\right)\quad  j\leq 2;
\\
C_6&=C_6\left(\|\rho_s\|_{L^{5/3}},\,\|\widetilde{\rho}_s\|_{L^{5/3}},\,\|\widetilde{W}_{N,s}\|_{H_2^j}\right)\quad  j\leq 3;
\\
C_7&=C_7\left(\|\rho_s\|_{L^{5/3}},\,\|\widetilde{\rho}_s\|_{L^{5/3}},\,\|\widetilde{W}_{N,s}\|_{H_2^j}\right)\quad  j\leq 4;\\
C_8&=C_8\left(\|\rho_s\|_{L^{5/3}},\,\|\widetilde{\rho}_s\|_{L^{5/3}},\,\|\widetilde{W}_{N,s}\|_{H_2^j}\right)\quad  j\leq 5.
\end{split}
\end{equation*}
%$$\|(1+x^2)^2\nabla_v(v^m\,\widetilde{W}_{N,s})\|_{L^2_v(L^{\infty}_x)}\quad \mbox{and}\quad \|(1+x^2)\nabla_v(v^m\,\widetilde{W}_{N,s})\|_{L^2_{x,v}}\quad \mbox{for } m=0,1, 2,$$ $C_2$ is a positive constant dependent on $\|\rho_s\|_{L^1}$, $\|\rho_s\|_{L^5/3}$, $\|\widetilde\rho_s\|_{L^1}$, $\|\widetilde\rho_s\|_{L^5/3}$, $\|\nabla^l\widetilde{W}_{N,s}\|_{L^1((1+x^8+v^5)dxdv)}$ and $\|(1+x^2)\nabla^l\widetilde{W}_{N,s}\|_{L^2}$ for $l=0,1$.
\end{proposition}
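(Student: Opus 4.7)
The plan is to leverage the generalized Feffermann--de la Llave representation \eqref{eq:GFDL} to write the singular potential as a superposition of mollified layer functions $\phi_r=\chi_{(r/2)}\ast\chi_{(r/2)}$. Writing $(\rho_s-\widetilde\rho_s)(x)=\tfrac{1}{N}(\omega_{N,s}-\widetilde\omega_{N,s})(x;x)$ and inserting FDLL, the multiplication operator $\frac{1}{|\cdot|^\alpha}\ast(\rho_s-\widetilde\rho_s)$ becomes an $r,y$-integral of
\[g(r)\,\chi_{(r/2,y)}(x)\cdot\tfrac{1}{N}\tr\bigl[\chi_{(r/2,y)}(\omega_{N,s}-\widetilde\omega_{N,s})\bigr].\]
Its commutator with $\widetilde\omega_{N,s}$ factorises into a scalar trace piece, bounded by $\tr|\omega_{N,s}-\widetilde\omega_{N,s}|$ (which supplies the Gr\"onwall factor with constant $C_1$ after the $\e$ in \eqref{eq:est-tr} is put on the left-hand side), multiplied by the operator commutator $[\chi_{(r/2,y)},\widetilde\omega_{N,s}]$. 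The remaining task is therefore to control
\[\int_0^\infty |g(r)|\,dr\int dy\,\tr\bigl|[\chi_{(r/2,y)},\widetilde\omega_{N,s}]\bigr|,\]
together with the ``inhomogeneous'' pieces in which the same factor $[\chi_{(r/2,y)},\widetilde\omega_{N,s}]$ is integrated against (smooth, bounded in $L^{5/3}$) functions, producing $C_2,\dots,C_8$.

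To estimate $[\chi_{(r/2,y)},\widetilde\omega_{N,s}]$ I would exploit a symmetric rewriting of its kernel based on the identity
\[\chi_{(r,y)}(x)-\chi_{(r,y)}(x')=\int_0^1\frac{d}{ds}\bigl[\chi_{(r/\sqrt{s},y)}(x)\,\chi_{(r/\sqrt{1-s},y)}(x')\bigr]\,ds,\]
valid because at the endpoints $s=0,1$ one of the two balls has infinite radius and the corresponding characteristic function equals $1$. This distributes the jump between the two arguments and brings in an $s$-integration together with sphere-concentrated derivatives $\nabla_x\chi_{(r/\sqrt{s},y)}$. After integrating by parts in $x$ (producing $\Delta_x$ acting symmetrically on the product, hence the $\mathcal J_2,\mathcal J_3$ families) and in $v$ on the Weyl kernel of $\widetilde\omega_{N,s}$ (trading the explicit factor $(x-x')$ for $\e\,\nabla_v\widetilde W_{N,s}$), the commutator decomposes into precisely the operators $\mathcal J_1$, $\mathcal J_{2,j}$ ($j=1,\dots,5$) and $\mathcal J_{3,i}$ ($i=1,\dots,4$) of \eqref{eq:J1-J2-J3}--\eqref{eq:J3}. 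The prefactor $(1+x^2)$ in the $\mathcal J_k$ is inserted so that the final trace-norm bound can be closed via Cauchy--Schwarz against the integrable weight $(1+x^2)^{-1}$.

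I would then estimate each $\mathcal J_k$ in trace norm by passing through Hilbert--Schmidt, evaluating the $v$-integrals with the Gaussian bounds of Appendix~\ref{ap:int} and controlling the pointwise $L^2_v(L^\infty_x)$ norms of the derivatives of $\widetilde W_{N,s}$ via the weighted Sobolev assumptions (H3)--(H4). The ball-centre integration in $y$ is handled by the support of $\chi_{(r/\sqrt{s},y)}(x)$, which localises $y$ in a ball of radius $\sim r/\sqrt{s}$ around $x$; the radial integration against $|g(r)|$ is then split at $r=\e^\delta$. For $r\le\e^\delta$ one absorbs the $g(r)\sim r^{-4-\alpha}$ singularity using an $L^{5/3}$ bound on $\rho_s$ and $\widetilde\rho_s$ via Hardy--Littlewood--Sobolev, which is the origin of $C_3$ and $C_4$; for $r\ge\e^\delta$ each integration by parts produces an extra factor $\e$, yielding the hierarchy $C_5,\dots,C_8$ of subleading constants with increasingly many derivatives of $\widetilde W_{N,s}$.

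The main obstacle is the small-$r$ regime, where the singularity of $g$ has to be compensated by the semiclassical/geometric factors in $\mathcal J_k$ to give an integrable integrand with the correct $\e$-power. The condition $\alpha<\tfrac12-\tfrac65\delta$ in Theorem~\ref{thm:trace} is dictated precisely by this balance: after all integrations by parts and Cauchy--Schwarz, the leftover radial integrals $\int_0^{\e^\delta}|g(r)|\,r^{\kappa}\,dr$ must remain finite and produce an $\e$-power compatible with the claim, which constrains $\alpha$ in terms of the cutoff exponent $\delta$. Once all these integrals are controlled uniformly in $s\in[0,T]$, collecting the contributions yields the Gr\"onwall term $\e\,C_1\,\tr|\omega_{N,s}-\widetilde\omega_{N,s}|$ together with the seven subleading terms displayed in \eqref{eq:1term-tr}.
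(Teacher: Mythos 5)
Your outline follows the paper's architecture in broad strokes (Fefferman--de la Llave representation, the $s$-interpolation identity distributing the jump of $\chi_{(r,y)}$ between the two arguments, the decomposition into the operators of \eqref{eq:J1-J2-J3}--\eqref{eq:J3}, Gaussian integrals, extraction of the Gr\"onwall term from the scalar factor as in \eqref{eq:L1-tr}), but the two decisive mechanisms are missing or wrong. First, the passage from trace norm to Hilbert--Schmidt norms cannot be closed by ``Cauchy--Schwarz against the integrable weight $(1+x^2)^{-1}$'': a multiplication operator is never Hilbert--Schmidt, so this step fails as stated. The paper inserts $\mathbf 1=(1-\e^2\Delta)^{-1}(1+x^2)^{-1}(1+x^2)(1-\e^2\Delta)$ and uses $\|(1-\e^2\Delta)^{-1}(1+x^2)^{-1}\|_{\rm HS}\leq C\sqrt N$ (see \eqref{eq:trick}--\eqref{eq:bound-sqrtN}); it is this insertion -- not an integration by parts in $x$ -- that makes $(1-\e^2\Delta)$ act on the commutator kernel and generates the families $\mathcal J_2,\mathcal J_3$, and it is the $\sqrt N$ from this bound, combined with the $\sqrt N\,\e\,\sqrt r$-type bounds of Lemmas \ref{lemma:J1} and \ref{lemma:J21-J31} (which must be uniform in $y$ to be usable after \eqref{eq:L1-tr}), that produces the $N$ and $\e$ bookkeeping in \eqref{eq:1term-tr}. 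Relatedly, you work with the sharp indicators $\chi_{(r/2)}$ of \eqref{eq:GFDL}; differentiating these in $x$ yields surface measures, so your $\mathcal J$ kernels would not be functions. The proof needs the Gaussian-smoothed representation \eqref{eq:FDL}, $\chi_{(r,z)}(\cdot)=e^{-|\cdot-z|^2/r^2}$, for both the interpolation identity and the Laplacian computations to make sense.

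Second, the small-$r$ regime, where essentially all the work of Proposition \ref{prop:1term-tr} lies, is treated by assertion and with a mechanism that does not match the stated exponents. In the paper the radial integral is split at a fixed $k>0$, not at $r=\e^\delta$; the parameter $\delta$ enters as the interpolation exponent $\gamma=\tfrac34-\delta$ between \eqref{eq:interp-1} and \eqref{eq:interp-2} in the bound on $\int dy\,\|\mathcal J^{\rm err}\|_{\rm HS}^{5/2}$, combined with H\"older in $y$ with exponents $(5/3,5/2)$ against $\|\rho_s\|_{L^{5/3}}+\|\widetilde\rho_s\|_{L^{5/3}}$ (not Hardy--Littlewood--Sobolev). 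This is what produces the factors $r^{\frac12-\frac65\delta}$, the powers $\e^{\frac75+\frac35\delta}$ and $\e^{\frac{17}5+\frac35\delta}$, and the convergence condition $\tfrac12+\tfrac65\delta+\alpha<1$, i.e. $\alpha<\tfrac12-\tfrac65\delta$. Your alternative ($\e^\delta$ cutoff plus HLS, and ``each integration by parts produces an extra $\e$'') is not computed and there is no reason it reproduces these exponents; moreover the hierarchy $C_5,\dots,C_8$ arises in the paper from the large-$r$ part of the error terms, via the pointwise Gaussian bound \eqref{eq:interp-2} and the expansion of the auxiliary functions $f_s,g_s$ in \eqref{eq:f-s}--\eqref{eq:g-s}, not from repeated integrations by parts. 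As written, the proposal defers precisely the estimates (Lemmas \ref{lemma:J-2k}--\ref{lemma:J-33} and Proposition \ref{prop:gauss-int}) that constitute the proof.
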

%%%%%%%%%%%%%%%%%%%%%

The above Proposition relies on the following Lemma, whose proof is postponed at the end of the proof of Proposition \ref{prop:1term-tr}. 
 
%%%%%%%%%%%%%%%%%%%%%
\begin{lemma}\label{lemma:fund-est}
For every $r>0$, $y,z\in\R^3$, denote by $$\chi_{r,y}(z):=\exp\{-|y-z|^2/r^2\}$$
Under the same assumptions of Proposition \ref{prop:1term-tr}, there exists a constant $C>0$, such that
\be\label{eq:fund-est}
\|(1+x^2)\,(1-\e^2\Delta)\,[\chi_{(r,y)},\widetilde{\omega}_{N,s}]\|_{\rm HS}\leq C\sum_{j=1}^3\|\mathcal{J}_j\|_{\rm HS} 
\ee
where $\mathcal{J}_1,\ \mathcal{J}_2$ and $\mathcal{J}_3$ have been defined in \eqref{eq:J1-J2-J3}.
\end{lemma}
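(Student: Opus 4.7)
The plan is to show that the inequality in Lemma~\ref{lemma:fund-est} is in fact an identity at the level of integral kernels (up to signs absorbed in $C$), after which the triangle inequality for the Hilbert--Schmidt norm yields \eqref{eq:fund-est}. First, I would write the commutator kernel as
\[
[\chi_{(r,y)},\widetilde{\omega}_{N,s}](x;x') = \bigl(\chi_{(r,y)}(x)-\chi_{(r,y)}(x')\bigr)\,\widetilde{\omega}_{N,s}(x;x'),
\]
and insert the Weyl representation \eqref{eq:Weyl}, i.e.\ $\widetilde{\omega}_{N,s}(x;x') = N\!\int\!dv\, \widetilde{W}_{N,s}\bigl(\tfrac{x+x'}{2},v\bigr)e^{iv\cdot(x-x')/\e}$.

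The crucial step is a Gaussian semigroup identity tailor-made for the difference of Gaussians centred at the same point but evaluated at two different arguments. Consider the auxiliary function
\[
h(s) := \chi_{(r/\sqrt{s},y)}(x)\,\chi_{(r/\sqrt{1-s},y)}(x'),
\]
and observe that $h(1)-h(0) = \chi_{(r,y)}(x) - \chi_{(r,y)}(x')$, since as $s\to 0$ (resp. $s\to 1$) one Gaussian factor degenerates to $1$ while the other attains scale $r$. A direct differentiation together with the algebraic identity $|x'-y|^2-|x-y|^2 = -(x-x')\cdot\bigl[(x-y)+(x'-y)\bigr]$ yields the representation
\[
\chi_{(r,y)}(x)-\chi_{(r,y)}(x') = -\frac{1}{r^2}\int_0^1\!\! \bigl[(x-y)+(x'-y)\bigr]\!\cdot\!(x-x')\,\chi_{(r/\sqrt{s},y)}(x)\,\chi_{(r/\sqrt{1-s},y)}(x')\,ds.
\]
Substituting this into the commutator kernel and multiplying by $(1+x^2)$ reproduces exactly $\mathcal{J}_1(x;x')$ from \eqref{eq:J1-J2-J3}.

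Next, for the Laplacian piece, I would apply $-\e^2\Delta_x$ to the same representation (noting that $\Delta_x$ commutes with $(1+x^2)$ only up to a multiplicative factor outside, which is fine because $(1+x^2)$ sits outside of $\Delta_x$ in the statement of $\mathcal{J}_2$ and $\mathcal{J}_3$). Splitting the vector $(x-y)+(x'-y)$ into its two summands produces two terms that coincide with $\mathcal{J}_2(x;x')$ and $\mathcal{J}_3(x;x')$ respectively. Thus at the level of integral kernels one obtains the exact identity
\[
(1+x^2)(1-\e^2\Delta_x)[\chi_{(r,y)},\widetilde{\omega}_{N,s}] \;=\; \mathcal{J}_1 + \mathcal{J}_2 + \mathcal{J}_3,
\]
and \eqref{eq:fund-est} follows by the triangle inequality for $\|\cdot\|_{\mathrm{HS}}$.

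The main obstacle in the lemma is purely algebraic, namely recognising the Gaussian semigroup identity above: the factor $1/r^2$, which is singular as $r\to 0$, is decomposed into two Gaussian factors at scales $r/\sqrt{s}$ and $r/\sqrt{1-s}$, each associated separately with the $x$ and $x'$ variables. This asymmetric distribution of the singularity is what will make the subsequent Hilbert--Schmidt estimates on $\mathcal{J}_1,\mathcal{J}_2,\mathcal{J}_3$ tractable by allowing one to trade the singular factor against Sobolev regularity of $\widetilde{W}_{N,s}$ in $x$ and $v$ independently. The Laplacian computation itself is just product-rule bookkeeping against the many $x$-dependent factors in the integrand; although it generates the explicit subpieces $\mathcal{J}_{2,j}$ and $\mathcal{J}_{3,i}$ recorded in \eqref{eq:J2}--\eqref{eq:J3}, such further splittings are needed only in Proposition~\ref{prop:1term-tr}, not here.
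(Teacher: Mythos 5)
Your proof is correct and follows essentially the same route as the paper: the paper also starts from the integral representation $[\chi_{(r,y)},\widetilde\omega_{N,s}](x;x') = -\frac{1}{r^2}\int_0^1 ds\,\chi_{(r/\sqrt{s},y)}(x)\chi_{(r/\sqrt{1-s},y)}(x')\,[(x-y)+(x'-y)]\cdot(x-x')\,\widetilde\omega_{N,s}(x;x')$, observes the exact kernel identity $(1+x^2)(1-\e^2\Delta_x)[\chi_{(r,y)},\widetilde\omega_{N,s}]=\mathcal{J}_1+\mathcal{J}_2+\mathcal{J}_3$, and concludes by the triangle inequality. The only difference is cosmetic: you explicitly derive the Gaussian interpolation identity (via $h(s)$ and the algebraic identity $|x'-y|^2-|x-y|^2=-(x-x')\cdot[(x-y)+(x'-y)]$), whereas the paper simply states the resulting formula.
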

%%%%%%%%%%%%%%%%%%%%%
\medskip 

We recall here the generalization to radially symmetric functions of the Fefferman - de la Llave representation formula established in \cite{FDLL} for the Coulomb potential. This generalization has been proposed by Hainzl and Seiringer in \cite{HS} and we will make use of it in the proof of Proposition \ref{prop:1term-tr}.

\begin{proposition}[Theorem 1 in \cite{HS}]\label{prop:FDLL}
For $n\geq 2$, let $V:\R^n\to\R$ be a radial function that is $[n/2]+2$ times differentiable away from $x=0$. For $m\in\N_0$ denote $V^{(m)}(|x|)=d^m/d|x|^m\,V(x)$. Assume that $\lim_{|x|\to\infty}|x|^mV^{(m)}(|x|)=0$ for all $0\leq m\leq [n/2]+1$ and let $\chi_r(x)={\bf 1}_{\{|x|\leq r\}}$. Then
\be\label{eq:FDLL}
V(x)=\int_0^\infty dr\,g(r)\,\chi_{r/2}*\chi_{r/2}(x)
\ee
where 
\begin{equation*}
\begin{split}
g(r)=\frac{(-1)^{[n/2]}}{\Gamma\left(\frac{n-1}{2}\right)}\frac{2}{(\pi\,r^2)^{(n-1)/2}}&\left( \int_r^\infty ds\,V^{([n/2]+2)}(s)\,\left(\frac{d}{ds}\right)^{n-1-[n/2]}s(s^2-r^2)^{\frac{1}{2}(n-3)}\right.\\
&\quad\quad\quad\quad\quad\quad+\left.\delta_{\rm odd}V^{([n/2]+2)}(r)\,r(2r)^{\frac{1}{2}(n-3)}\Gamma\left(\frac{n-1}{2}\right)\right)
\end{split}
\end{equation*}
where $$\delta_{\rm odd}=\left\{\begin{array}{ll} 1, & n \mbox{ \rm odd, }\\  0, & n \mbox{ \rm even. } \end{array}\right.$$
\end{proposition}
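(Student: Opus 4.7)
The plan is to reduce the representation formula \eqref{eq:FDLL} to a one-dimensional Abel-type integral equation via radial symmetry, and then to invert it using iterated differentiation combined with an Abel inversion. Because both $V$ and $\chi_{r/2}*\chi_{r/2}$ are radial in $x$, I would set $t:=|x|$, write $F(t):=V(x)|_{|x|=t}$, and treat \eqref{eq:FDLL} as an identity for functions of one variable.

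The first step is a direct geometric computation: by slicing the intersection of two balls of radius $r/2$ whose centers are separated by $t$, one obtains
$$(\chi_{r/2}*\chi_{r/2})(x) = c_n \int_t^r (r^2-s^2)^{(n-1)/2}\,ds, \qquad 0\leq t\leq r,$$
and the convolution vanishes for $t>r$; here $c_n = \omega_{n-1}\,2^{-(n-1)}$ depends only on $n$. Substituting this into \eqref{eq:FDLL}, exchanging the order of integration (the domain is $\{(s,r):t\leq s\leq r\}$), and differentiating once in $t$ reduces the problem to an Abel-type integral equation for $g$:
$$-\frac{F'(t)}{c_n} = \int_t^\infty g(r)\,(r^2-t^2)^{(n-1)/2}\,dr.$$

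The second step is to invert this equation. The strategy is to differentiate in $t$ repeatedly to lower the exponent of $(r^2-t^2)$, followed by a final Abel-type integral step when needed. Each differentiation brings $\partial_t$ inside the integral (justified, along with the vanishing of boundary terms at infinity arising from the successive integrations by parts, by the decay hypotheses $|x|^m V^{(m)}(|x|)\to 0$ for $0\leq m\leq [n/2]+1$). The dichotomy in the stated formula between odd and even $n$ is tracked here. When $n$ is odd, $(n-1)/2=[n/2]$ is an integer, so $[n/2]$ differentiations reduce the kernel to a constant; one further differentiation produces a boundary contribution proportional to $g(t)$, which, when solved for $g$, yields the $\delta_{\rm odd}$ term in the statement. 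When $n$ is even, $(n-1)/2$ is a half-integer and after differentiation one still has a genuine Abel transform with half-integer exponent, which is inverted by a single further application of the Abel operator with kernel $s(s^2-r^2)^{(n-3)/2}$; the composition of these two half-order transforms is, up to an explicit normalization, the identity.

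Collecting the normalizing constants $\Gamma((n-1)/2)$, $(\pi r^2)^{-(n-1)/2}$ and the sign $(-1)^{[n/2]}$ produced during the iterated differentiation and the Abel inversion yields the claimed closed form for $g$. The main obstacles will be (i) the careful bookkeeping of constants, signs, and boundary terms through the $[n/2]+2$ differentiations and the one additional integration by parts, and (ii) cleanly unifying the even- and odd-dimensional cases, which are structurally distinct (a pointwise boundary contribution versus a genuine Abel integral); the role of the decay hypotheses on $V^{(m)}$ is precisely to guarantee that every boundary term at infinity vanishes uniformly in both cases, so that the final expression for $g$ takes the single unified form with the $\delta_{\rm odd}$ switch.
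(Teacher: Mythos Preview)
The paper does not prove this proposition: it is quoted verbatim as Theorem~1 of Hainzl--Seiringer \cite{HS} and used as a black box. So there is no ``paper's own proof'' to compare against.

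That said, your sketch is the right strategy and is essentially the argument in \cite{HS}: compute the ball--ball convolution $\chi_{r/2}*\chi_{r/2}$ explicitly as a one-variable function of $t=|x|$, reduce \eqref{eq:FDLL} to an Abel-type integral equation in $t$, and invert by $[n/2]$ differentiations followed by a classical Abel inversion (or, for odd $n$, one more differentiation producing the boundary term). Your identification of the two obstacles---constant bookkeeping and the odd/even dichotomy---is accurate, and the decay hypotheses on $|x|^m V^{(m)}$ are used exactly where you say, to kill boundary terms at infinity. The only point to double-check is the precise normalization $c_n$ in your formula for the convolution (your $c_n=\omega_{n-1}2^{-(n-1)}$ should be verified against a low-dimensional case), but this affects only the final constant, not the structure of the argument.
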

\begin{remark}
Inverse power law potentials $V(x)=1/|x|^\alpha$,  $\alpha>0$, obviously verify the assumptions in Proposition \ref{prop:FDLL} and the representation formula \eqref{eq:FDLL} takes a very simple coincise form in this case (see \eqref{eq:FDL}). Notice however that Proposition \ref{prop:FDLL} applies to more general functions, hence our result can be generalized to all such interaction potentials which fullfill the hypotheses of Proposition \ref{prop:FDLL} and do not exhibit a singularity at $x=0$ worst than  $|x|^{-k}$ for $k\in(0,1/2)$. 
\end{remark}
\begin{proof}[Proof of Proposition \ref{prop:1term-tr}]
We are now concerned with the estimate of the first term in the r.h.s. of \eqref{eq:est-tr}.
%
%\be\label{eq:dom-term}
%\tr\ \left|\left[ \frac{1}{|\cdot|^\alpha}*(\rho_s-\tilde{\rho}_s)\,,\,\tilde{\omega}_{N,s} \right]\right|
%\ee
% 
To deal with this term, we use a smooth version of the generalized Fefferman-de la Llave representation formula for radial potentials in Proposition \ref{prop:FDLL}
\be\label{eq:FDL}
\frac{1}{|x-y|^\alpha}=\frac{4}{\pi^2}\int_0^\infty \frac{1}{r^{4+\alpha}}\int_{\R^3}\chi_{(r,z)}(x)\,\chi_{(r,z)}(y)\,dz\,dr\,,
\ee
where the characteristic function ${\bf 1}_{\{|x-z|\leq r\}}$ is replaced by $\chi_{(r,z)}(\cdot)=e^{-|\cdot\,-z|^2/r^2}$.

We plug \eqref{eq:FDL} in the convolution term, thus obtaining 
\begin{equation}\label{eq:convolution}
\frac{1}{|\cdot|^\alpha}*(\rho_s-\tilde{\rho}_s)(x)=\frac{4}{\pi^2}\int_0^\infty\iint \frac{1}{r^{4+\alpha}}\,\chi_{(r,z)}(x)\,\chi_{(r,z)}(y)\,(\rho_s(y)-\tilde{\rho}_s(y))\,dz\,dy\,dr\,.
\end{equation} 
We recall that
\begin{equation*}
\int \chi_{(r,z)}(x)\,\chi_{(r,z)}(y)\,dz=C\,r^3\,\chi_{(r,y)}(x)
\end{equation*}
for some positive constant $C$. Thus the integral in the $z$ variable absorbs part of the power of $r$ and \eqref{eq:convolution} reduces to
\begin{equation*}
\frac{1}{|\cdot|^\alpha}*(\rho_s-\tilde{\rho}_s)(x)=C\int_0^\infty\int \frac{1}{r^{1+\alpha}}\,\chi_{(r,x)}(y)\,(\rho_s(y)-\tilde{\rho}_s(y))\,dy\,dr\,,
\end{equation*} 
for some positive constant $C$.\\
Therefore, we obtain the following expression for the kernel of the commutator in \eqref{eq:1term-tr}
\be
\begin{split}
&\left[\frac{1}{|\cdot|^\alpha}*(\rho_s-\tilde{\rho}_s)\,,\,\tilde{\omega}_{N,s}\right](x;x')\\ 
&\quad\quad =
C\int_0^\infty \int \frac{1}{r^{1+\alpha}}\,(\rho_s(y)-\tilde{\rho}_s(y))\,[\chi_{(r,y)}\,,\,\tilde{\omega}_{N,s}](x;x')\,dy\,dr\,.
\end{split}
\ee
Then, \eqref{eq:1term-tr} can be estimated by
\be
\eqref{eq:1term-tr}\leq C\int_0^\infty\frac{1}{r^{1+\alpha}}\int |\rho_s(y)-\tilde{\rho}_s(y)|\,\tr\ |[\chi_{(r,y)}\,,\,\tilde{\omega}_{N,s}]|\,dy\,dr\,.
\ee
To bound the trace norm of the commutator between $\tilde{\omega}_{N,s}$ and the multiplication operator $\chi_{(r,y)}$, we borrow an idea from \cite{BPSS}. Namely, we insert the identity operator 
$$
\mathbf{1}=(1-\e^2\Delta)^{-1}(1+x^2)^{-1}(1+x^2)\,(1-\e^2\Delta)
$$  
and perform Cauchy-Schwarz inequality
\be\label{eq:trick}
\tr\ |[\chi_{(r,y)},\tilde{\omega}_{N,s}]|\leq \|(1-\e^2\Delta)^{-1}(1+x^2)^{-1}\|_{\rm HS}\|(1+x^2)\,(1-\e^2\Delta)[\chi_{(r,y)},\tilde{\omega}_{N,s}]\|_{\rm HS}
\ee
Writing down explicitly the kernel of $(1-\e^2\Delta)^{-1}$, straightforward computations (see \cite{Evans}, Chapter 4.3) together with Eq. \eqref{eq:scaling} lead to
\be\label{eq:bound-sqrtN}
\|(1-\e^2\Delta)^{-1}(1+x^2)^{-1}\|_{\rm HS}\leq C\sqrt{N}
\ee
for some positive constant $C$. Hence
\begin{equation*}
\begin{split}
\tr\ &\left|\left[\frac{1}{|\cdot|^{\alpha}}*(\rho_s-\tilde{\rho}_s)\,,\,\tilde{\omega}_{N,s}\right]\right|\\
&\quad\quad\quad\leq C\sqrt{N}\int_0^\infty \frac{dr}{r^{1+\alpha}}\int dy\,|\rho_s(y)-\tilde{\rho}_s(y)|\,\|(1+x^2)(1-\e^2\Delta)[\chi_{(r,y)},\tilde{\omega}_{N,s}]\|_{\rm HS}
\end{split}
\end{equation*}
We make use of Lemma \ref{lemma:fund-est}
\be\label{eq:est-dom-err}
\begin{split}
\tr\ &\left|\left[\frac{1}{|\cdot|^{\alpha}}*(\rho_s-\tilde{\rho}_s)\,,\,\tilde{\omega}_{N,s}\right]\right|\\
&\quad\quad\quad\leq C\sqrt{N}\int_0^\infty \frac{dr}{r^{1+\alpha}}\int dy\,|\rho_s(y)-\tilde{\rho}_s(y)|\,(\|\mathcal{J}_1\|_{\rm HS}+\|\mathcal{J}_{2,1}\|_{\rm HS}+\|\mathcal{J}_{3,1}\|_{\rm HS})\\
&\quad\quad\quad+ C\sqrt{N}\int_0^\infty \frac{dr}{r^{1+\alpha}}\int dy\,|\rho_s(y)-\tilde{\rho}_s(y)|\,\|\mathcal{J}^{\rm err}\|_{\rm HS}
\end{split}
\ee
where $\mathcal{J}_1,\,\mathcal{J}_{2,1},\,\mathcal{J}_{\rm 3,1}$ are defined in \eqref{eq:J1}, \eqref{eq:J2}, \eqref{eq:J3}, while $\mathcal{J}^{\rm err}$ is defined as
\be\label{eq:J-err}
\mathcal{J}^{\rm err}:=\sum_{k=2}^5\mathcal{J}_{2,k}+\sum_{j=2}^4\mathcal{J}_{3,j}
\ee
where $\mathcal{J}_{2,k}$ and $\mathcal{J}_{3,j}$ are defined in \eqref{eq:J2}, \eqref{eq:J3}. 

We now divide the integrals in the $r$ variables into two parts: one close to zero and one far away from zero. More precisely, let us fix $k>0$ and consider:
\begin{itemize}
\item $r\in[0,k]$:
the difficulty here is to deal with the singularity at $r\sim 0$ in the above expression. Indeed we need to extract powers of $r$ from the Hilbert-Schmidt norms of $\mathcal{J}_i$, $i=1,2,3$, making the bounds as sharp as possible. 
In the first term in the r.h.s. of \eqref{eq:est-dom-err} we use Lemma \ref{lemma:J1} and Lemma \ref{lemma:J21-J31} to get
\be
\begin{split}
C&\sqrt{N}\int_0^k \frac{dr}{r^{1+\alpha}}\int dy\,|\rho_s(y)-\tilde{\rho}_s(y)|\,(\|\mathcal{J}_1\|_{\rm HS}+\|\mathcal{J}_{2,1}\|_{\rm HS}+\|\mathcal{J}_{3,1}\|_{\rm HS})\\
&\leq C\,N\,\e\int_0^k\frac{dr}{r^{\frac{1}{2}+\alpha}}\int dy\,|\rho_s(y)-\tilde{\rho}_s(y)|\,\left\{ \int dv\,\sup_x(1+x^2)^2\left[|\nabla_v\widetilde{W}_{N,s}(x,v)|^2\right.\right.\\
&\quad\quad\quad\quad\quad\quad\quad\quad\quad\quad\quad\quad\quad\quad\quad\quad\quad\left.\left.+|v\widetilde{W}_{N,s}(x,v)|^2+|\nabla_v(v^2\widetilde{W}_{N,s}(x,v))|^2\right]\right\}^\frac{1}{2}\\
&+C\,N\e^3\int_0^k\frac{dr}{r^{\frac{1}{2}+\alpha}}\int dy\,|\rho_s(y)-\tilde{\rho}_s(y)|\left\{ \int dv\,\sup_x\left[|\nabla^3_v\widetilde{W}_{N,s}(x,v)|^2\right.\right.\\
&\quad\quad\quad\quad\quad\quad\quad\quad\quad\quad\quad\quad\quad\quad\quad\quad\left.\left.+|v\nabla_v^2\widetilde{W}_{N,s}(x,v)|^2+|\nabla_v(v^2\nabla_v^2\widetilde{W}_{N,s}(x,v))|^2\right]\right\}^\frac{1}{2}
\end{split}
\ee
Moreover, we recall that
\be\label{eq:L1-tr}
\int dy\,|\rho_s(y)-\tilde{\rho}_s(y)|\leq \frac{C}{N}\tr\ |\omega_{N,s}-\tilde{\omega}_{N,s}|
\ee
Indeed we have that
\begin{equation*}
\int dy\,|\rho_s(y)-\tilde{\rho}_s(y)|\leq \sup_{\substack{O\in L^\infty(\R^3)\\ \|O\|_{L^{\infty}}\leq 1 }}\left|\int O(y)\,(\rho_s(y)-\widetilde{\rho}_s(y))\,dy\,\right|\leq \frac{1}{N}\sup_{\|O\|\leq 1} |\tr\ O(\omega_{N,s}-\widetilde{\omega}_{N,s})|,
\end{equation*}
where the supremum on the r.h.s. is taken over all bounded operators with operator norm less or equal than one. 

Therefore, the first term on the r.h.s. of \eqref{eq:est-dom-err} is bounded by
\be\label{eq:bound-main-k}
\begin{split}
C&\sqrt{N}\int_0^k \frac{dr}{r^{1+\alpha}}\int dy\,|\rho_s(y)-\tilde{\rho}_s(y)|\,(\|\mathcal{J}_1\|_{\rm HS}+\|\mathcal{J}_{2,1}\|_{\rm HS}+\|\mathcal{J}_{3,1}\|_{\rm HS})\\
&\leq C\,\e\,\tr\ |\omega_{N,s}-\tilde{\omega}_{N,s}|\,\left\{ \int dv\,\sup_x(1+x^2)^2\left[|\nabla_v\widetilde{W}_{N,s}(x,v)|^2\right.\right.\\
&\quad\quad\quad\quad\quad\quad\quad\quad\quad\quad\quad\quad\quad\quad\quad\quad\quad\quad\left.\left.+|v\widetilde{W}_{N,s}(x,v)|^2+|\nabla_v(v^2\widetilde{W}_{N,s}(x,v))|^2\right]\right\}^\frac{1}{2}\\
&+C\e^3\,\tr\ |\omega_{N,s}-\tilde{\omega}_{N,s}|\,\left\{ \int dv\,\sup_x\left[|\nabla^3_v\widetilde{W}_{N,s}(x,v)|^2\right.\right.\\
&\quad\quad\quad\quad\quad\quad\quad\quad\quad\quad\quad\quad\quad\quad\quad\quad\left.\left.+|v\nabla_v^2\widetilde{W}_{N,s}(x,v)|^2+|\nabla_v(v^2\nabla_v^2\widetilde{W}_{N,s}(x,v))|^2\right]\right\}^\frac{1}{2}
\end{split}
\ee
for $\alpha\in(0,1/2)$. 
\item  $r\in(k,\infty)$: as for this part of the integral in the $r$ variable, we do not need to extract further powers of $r$. We therefore use the point-wise bounds
$$
\chi_{(r,\sqrt{s},y)}(x)\frac{|x-y|^j}{r^j/\sqrt{s^j}}\leq 1,\quad\quad \chi_{(r,\sqrt{1-s},y)}(x')\frac{|x'-y|^j}{r^j/\sqrt{(1-s)^j}}\leq 1
$$ 
in $\mathcal{J}_1,\,\mathcal{J}_{2,1}$ and $\mathcal{J}_{3,1}$
and we are left with 
\be\label{eq:J1-infty}
\begin{split}
\|\mathcal{J}_1\|_{\rm HS}&\leq\frac{C\,\sqrt{N}\,\e}{r}\left[\iint dx\,dv\,(1+x^2)^2|\nabla_v\widetilde{W}_{N,s}(x,v)|^2\right]^{\frac{1}{2}}\\
&+\frac{C\,\sqrt{N}\,\e^3}{r}\left[\iint dx\,dv\,|\nabla^3_v\widetilde{W}_{N,s}(x,v)|^2\right]^{\frac{1}{2}}
\end{split}
\ee
where we have used \eqref{eq:mom-variables} and the change of variables \eqref{eq:change-var}.\\
The other terms can be handled analogously, thus producing the bounds
\be
\begin{split}
\|&\mathcal{J}_{i,1}\|_{\rm HS}\\
&\leq\frac{C\,\sqrt{N}\,\e}{r}\left\{\iint dx\,dv\,(1+x^2)^2\left[|2v\,\widetilde{W}_{N,s}(x,v)|^2+|v^2\nabla_v\widetilde{W}_{N,s}(x,v)|^2\right]\right\}^{\frac{1}{2}}\\
&+\frac{C\,\sqrt{N}\,\e^3}{r}\left\{\iint dx\,dv\,\left[|\nabla^2_v(2v\,\widetilde{W}_{N,s}(x,v))|^2+|\nabla_v^3(v^2\,\widetilde{W}_{N,s}(x,v))|^2\right]\right\}^{\frac{1}{2}}
\end{split}
\ee
for $i=2,3$. Therefore, for the first integral in \eqref{eq:est-dom-err} with $r\in(k,\infty)$ and $\alpha>0$ we obtain the bound
\be
\begin{split}
C&\sqrt{N}\int_k^\infty \frac{dr}{r^{1+\alpha}}\int dy\,|\rho_s(y)-\tilde{\rho}_s(y)|\,(\|\mathcal{J}_1\|_{\rm HS}+\|\mathcal{J}_{2,1}\|_{\rm HS}+\|\mathcal{J}_{3,1}\|_{\rm HS})\\
&\leq C\,\e\,\tr\ |\omega_{N,s}-\tilde{\omega}_{N,s}|\,\left\{ \iint dx\,dv\,(1+x^2)^2\left[|\nabla_v\widetilde{W}_{N,s}(x,v)|^2+|2v\,\widetilde{W}_{N,s}(x,v)|^2\right.\right.\\
&\left.\left.\quad\quad\quad\quad\quad\quad\quad\quad\quad\quad\quad\quad\quad\quad\quad\quad\quad\quad\quad\quad\quad\quad\quad\quad\quad\quad+|v^2\nabla_v\widetilde{W}_{N,s}(x,v)|^2\right]\right\}^\frac{1}{2}\\
&+C\e^3\,\tr\ |\omega_{N,s}-\tilde{\omega}_{N,s}|\,\left\{ \iint dx\,dv\,\left[|\nabla^3_v\widetilde{W}_{N,s}(x,v)|^2\right.\right.\\
&\quad\quad\quad\quad\quad\quad\quad\quad\quad\quad\quad\quad\quad\quad\quad\quad\left.\left.+|\nabla_v^2(2v\,\widetilde{W}_{N,s}(x,v))|^2+|\nabla^3_v(v^2\,\widetilde{W}_{N,s}(x,v))|^2\right]\right\}^\frac{1}{2}
\end{split}
\ee
\end{itemize}
As for the second integral in \eqref{eq:est-dom-err} we proceed differently. Indeed, being the singularity at $r= 0$ worst than the one in the first integral, we need to use also the integral in the $y$ variable to extract a sufficiently high power of $r$, which cancels the singularity at $r=0$. We therefore first make use of the triangular inequality to bound
$$
|\rho_s(y)-\tilde{\rho}_s(y)|\leq |\rho_s(y)|+|\tilde{\rho}_s(y)|
$$
so that we are left with the bound on
\be\label{eq:error-term}
C\sqrt{N}\int_0^\infty\frac{dr}{r^{1+\alpha}}\int dy\,(|\rho_s(y)|+|\tilde{\rho}_s(y)|)\,\|\mathcal{J}^{\rm err}\|_{\rm HS}.
\ee
We perform H\"{o}lder inequality in the $y$ variable with exponents $p=5/3$ and $p'=5/2$. Thus
\be
\begin{split}
\eqref{eq:error-term}&\leq C\sqrt{N}\int_0^\infty \frac{dr}{r^{1+\alpha}}\,(\|\rho_s\|_{L^{5/3}}+\|\tilde{\rho}_s\|_{L^{5/3}})\,\left(\int dy\,\|\mathcal{J}^{\rm err}\|_{\rm HS}^{5/2}\right)^\frac{2}{5}
\end{split}
\ee
We now split the integral in the $r$ variable into two parts. More precisely, for a fixed positive constant $k$, we consider
\begin{itemize}
\item $r\in[0,k]$: to cancel the singularity in $r$, we apply Lemmas \ref{lemma:J-2k}, \ref{lemma:J-25}, \ref{lemma:J-22}, \ref{lemma:J-32}, \ref{lemma:J-34} and \ref{lemma:J-33}, which lead together with Proposition \ref{prop:interpolation} and Young inequality to
\be\label{eq:bound-error-norms}
\begin{split}
\eqref{eq:error-term}&\leq C\,N\,\e^{\frac{7}{5}+\frac{3}{5}\delta}\int_0^k \frac{dr}{r^{\frac{1}{2}+\frac{6}{5}\delta+\alpha}}\,(\|\rho_s\|_{L^{\frac{5}{3}}}+\|\tilde{\rho}_s\|_{L^{\frac{5}{3}}})\\
&\quad\quad\quad\quad\quad\quad\quad\left(\|\widetilde{W}_{N,s}\|_{\mathcal{W}_8^{0,1}}^{\frac{3}{10}-\frac{2}{5}\delta}\|\widetilde{W}_{N,s}\|_{H_2^0}^{\frac{1}{10}+\frac{2}{5}\delta}\|(1+x^2)^2\widetilde{W}_{N,s}\|^{\frac{1}{10}}_{L_v^2(L^\infty_x)} \right.\\
&\left.\quad\quad\quad\quad\quad\quad\quad\quad\quad\quad\quad+\|f_{s}\|_{\mathcal{W}_8^{0,1}}^{\frac{3}{10}-\frac{2}{5}\delta}\|f_{s}\|_{H_2^0}^{\frac{1}{10}+\frac{2}{5}\delta}\|(1+x^2)^2f_{s}\|^{\frac{1}{10}}_{L_v^2(L^\infty_x)} \right)\\
&+ C\,N\,\e^{\frac{17}{5}+\frac{3}{5}\delta}\int_0^k \frac{dr}{r^{\frac{1}{2}+\frac{6}{5}\delta+\alpha}}\,(\|\rho_s\|_{L^\frac{5}{3}}+\|\tilde{\rho}_s\|_{L^\frac{5}{3}})\\
&\quad\quad\quad\quad\quad\quad\quad\left(\|\widetilde{W}_{N,s}\|_{\mathcal{W}^{2,1}}^{\frac{3}{10}-\frac{2}{5}\delta}\|\widetilde{W}_{N,s}\|_{H^2}^{\frac{1}{10}+\frac{2}{5}\delta}\|\widetilde{W}_{N,s}\|^{\frac{1}{10}}_{H_v^2(L^\infty_x)} \right.\\
&\left.\quad\quad\quad\quad\quad\quad\quad\quad\quad\quad\quad+\|f_{s}\|_{\mathcal{W}^{2,1}}^{\frac{3}{10}-\frac{2}{5}\delta}\|f_{s}\|_{H^2}^{\frac{1}{10}+\frac{2}{5}\delta}\|f_{s}\|^{\frac{1}{10}}_{H_v^2(L^\infty_x)} \right)\\
&+C\,N\,\e^{\frac{7}{5}+\frac{3}{5}\delta}\int_0^k \frac{dr}{r^{-\frac{1}{2}+\frac{6}{5}\delta+\alpha}}\,(\|\rho_s\|_{L^\frac{5}{3}}+\|\tilde{\rho}_s\|_{L^\frac{5}{3}})\\
&\quad\quad\quad\quad\quad\quad\quad\quad\quad\quad\quad\left(\|g_{s}\|_{\mathcal{W}_8^{0,1}}^{\frac{3}{10}-\frac{2}{5}\delta}\|g_{s}\|_{H_2^0}^{\frac{1}{10}+\frac{2}{5}\delta}\|(1+x^2)^2g_{s}\|^{\frac{1}{10}}_{L_v^2(L^\infty_x)} \right)\\
&+C\,N\,\e^{\frac{17}{5}+\frac{3}{5}\delta}\int_0^k \frac{dr}{r^{-\frac{1}{2}+\frac{6}{5}\delta+\alpha}}\,(\|\rho_s\|_{L^\frac{5}{3}}+\|\tilde{\rho}_s\|_{L^\frac{5}{3}})\\
&\quad\quad\quad\quad\quad\quad\quad\quad\quad\quad\quad\left(\|g_{s}\|_{\mathcal{W}_8^{2,1}}^{\frac{3}{10}-\frac{2}{5}\delta}\|g_{s}\|_{H^2}^{\frac{1}{10}+\frac{2}{5}\delta}\|(1+x^2)^2g_{s}\|^{\frac{1}{10}}_{H_v^2(L^\infty_x)} \right)
\end{split}
\ee
where
\be\label{eq:f-s}
f_s(x,v):=\widetilde{W}_{N,s}(x,v)+\e\nabla^2\widetilde{W}_{N,s}(x,v)+v\cdot\widetilde{W}_{N,s}(x,v),
\ee
\be\label{eq:g-s}
g_s(x,v):=\nabla\widetilde{W}_{N,s}(x,v)+\e\nabla^3\widetilde{W}_{N,s}(x,v)+v\nabla^2\widetilde{W}_{N,s}(x,v).
\ee   
We recall that $\rho_s,\,\tilde{\rho}_s\in L^{5/3}(\R^3)$, therefore there exists $C>0$ such that 
$$\|\rho_s\|_{L^{5/3}}\leq C\quad\quad\quad \|\tilde{\rho_s}\|_{L^{5/3}}\leq C$$
uniformly in $N$ (see observation 2. after Theorem \ref{thm:trace} and Remark \ref{rk:kin-energy} below).
Moreover for $\alpha\in(0,1/2)$ and $\delta$ very small, the integrals in the $r$ variable converge.  %To give a more explicit $\e$-dependence in the above estimate, 

%Hence,
%
%\be
%\begin{split}
%\eqref{eq:error-term}&\leq a_1\,N\,\e^{\frac{7}{5}+\frac{3}{5}\delta}+a_2\,N\,\e^{\frac{3}{2}+\delta}+a_3\,N\,\e^{\frac{17}{10}+\frac{1}{5}\delta}+a_4\,N\,\e^{\frac{9}{5}+\delta}\\
%&+ b_1\,N\,\e^{\frac{17}{5}+\frac{3}{5}\delta}+b_2\,N\,\e^{\frac{9}{2}+\delta}+b_3\,N\,\e^{\frac{37}{10}+\frac{1}{5}\delta}+b_4\,N\,\e^{\frac{19}{5}+\delta}
%\end{split}
%\ee
%%
%where $a_i, b_i$, for $i=1,\dots,4$, are positive constant depending on weighted Sobolev norms of $\widetilde{W}_{N,s}$ as follows: $a_1=a_1()$, $a_2=a_2()$, $a_3=a_3()$, $a_4=a_4()$, $b_1=b_1()$, $b_2=b_2()$, $b_3=b_3()$, $b_4=b_4()$. 

\item $r\in(k,\infty)$: as for this term we proceed analogously to the proof of the bound on the first integral in the r.h.s. of \eqref{eq:est-dom-err}. We estimate the integral in the $y$ variable of the Hilbert-Schmidt norms of the operators $\mathcal{J}_{2,j}$, $j=2,\dots,5$ and $\mathcal{J}_{3,i}$, $i=2,\dots,4$, by first integrating in the $y$ variable. This integration gives rise to $r^{3/2}$. Then, instead of gaining other powers of $r$ from the integral in the $x$ variable as done for instance in \eqref{eq:interp-1}, we use the point-wise bound on the Gaussian \eqref{eq:interp-2}. Therefore the integral in the $r$ variable converges for $\alpha\in(0,1/2)$. Gathering together all the estimates and applying Proposition \ref{prop:interpolation} and Young inequality, the second integral in the r.h.s. of \eqref{eq:est-dom-err} turns out to be bounded by
\be\label{eq:bound-main-infty}
\begin{split}
\eqref{eq:error-term}\leq & CN\e^{\frac{23}{10}} \int_k^{\infty} \frac{dr}{r^{\frac{9}{5}+\alpha}}(\|\rho_s\|_{L^{\frac{5}{3}}}+\|\tilde{\rho}_s\|_{L^\frac{5}{3}})\left(\|\widetilde{W}_{N,s}\|_{H^0_2}^{\frac{9}{10}}+\|f_s\|_{H_2^0}^{\frac{9}{10}} \right) \\
&+ CN\e^\frac{14}{5}\int_k^\infty \frac{dr}{r^{1+\alpha}}(\|\rho_s\|_{L^{\frac{5}{3}}}+\|\tilde{\rho}_s\|_{L^\frac{5}{3}})\,\|g_s\|_{H^0_2}^{\frac{9}{10}}\\
&+ CN\e^\frac{43}{10}\int_0^k \frac{dr}{r^{\frac{9}{5}+\alpha}}(\|\rho_s\|_{L^\frac{5}{3}}+\|\tilde{\rho}_s\|_{L^\frac{5}{3}})\left(\|\widetilde{W}_{N,s}\|_{H^2_2}^\frac{9}{10}+\|f_s\|_{H^2_2}^\frac{9}{10}\right)\\
&+CN\e^\frac{24}{5}\int_k^\infty \frac{dr}{r^{1+\alpha}}(\|\rho_s\|_{L^\frac{5}{3}}+\|\tilde{\rho}_s\|_{L^\frac{5}{3}})\,\|g_s\|_{H_0^2}^\frac{9}{10} 
\end{split}
\ee
\end{itemize}
Gathering together \eqref{eq:bound-main-k}, \eqref{eq:bound-error-norms} and \eqref{eq:bound-main-infty}, we conclude the proof.
\end{proof}

\begin{remark}
We recall definitions of $f_s$ \eqref{eq:f-s} and $g_s$ \eqref{eq:g-s}
and observe that the $\e$-dependence in \eqref{eq:bound-error-norms} can be  made more explicit by using the factor $\e$ in \eqref{eq:f-s}-\eqref{eq:g-s}.  
\end{remark}
%%%%%%%%%%%%%%%%%%%%%

\begin{proof}[Proof of Lemma \ref{lemma:fund-est}]
The Hilbert-Schmidt norm of the operator $(1+x^2)(1-\e^2\Delta)[\chi_{(r,z)},\tilde{\omega}_{N,t}]$ is given by the $L^2$ norm of its integral kernel.
We notice that 
\begin{equation}
\begin{split}
[\chi_{(r,y)},&\tilde{\omega}_{N,t}](x;x')\\
&=-\sum_{k=1}^3\int_0^1 ds\,\chi_{(r/\sqrt{s},y)}(x)\,\left(\frac{(x-y)_k}{r^2}+\frac{(x'-y)_k}{r^2}\right)\,\chi_{(r/\sqrt{1-s},y)}(x')\,[x_k,\widetilde{\omega}_{N,t}](x;x') \\
\end{split}
\end{equation}
Hence, 
\begin{equation}
\|(1+x^2)(1-\e^2\Delta)[\chi_{(r,y)},\tilde{\omega}_{N,t}]\|_{HS}^2=\int dx\int dx'\left| (1+x^2)(1-\e^2\Delta_x)[\chi_{(r,y)},\tilde\omega_{N,t}](x;x') \right|^2
\end{equation}
We observe that 
\begin{equation}\label{eq:kernel}
%\begin{split}
(1+x^2)(1-\e^2\Delta_x)[\chi_{(r,y)},\tilde\omega_{N,s}](x;x')=\mathcal{J}_1+\mathcal{J}_2+\mathcal{J}_3%\\
%=&-N\int_0^1 ds\,(1+x^2)\,\chi_{(r/\sqrt{s},y)}(x)\,\chi_{(r/\sqrt{1-s},y)}(x')\,\left[\frac{(x-y)}{r^2}+\frac{(x'-y)}{r^2}\right]\cdot (x-x')\\
%&\hspace{7.5cm}\times\int dv\,\tilde{W}_{N,s}\left(\frac{x+x'}{2},v\right)\,e^{iv\cdot\frac{(x-x')}{\e}}\\
%&-N\e^2\,(1+x^2)\,\int ds\,\Delta_x\left\{ \chi_{(r/\sqrt{s},y)}(x)\,\chi_{(r/\sqrt{1-s},y)}(x') \right.   \\
%&\hspace{3cm}\left. \times\frac{(x-y)}{r^2}\cdot (x-x')\int dv\,\tilde{W}_{N,s}\left(\frac{x+x'}{2},v\right)\,e^{iv\cdot\frac{(x-x')}{\e}}   \right\}\\
%&-N\e^2\,(1+x^2)\,\int ds\,\Delta_x\left\{ \chi_{(r/\sqrt{s},y)}(x)\,\chi_{(r/\sqrt{1-s},y)}(x') \right.   \\
%&\hspace{3cm}\left. \times\frac{(x'-y)}{r^2}\cdot (x-x')\int dv\,\tilde{W}_{N,s}\left(\frac{x+x'}{2},v\right)\,e^{iv\cdot\frac{(x-x')}{\e}}   \right\}
%\end{split}
\end{equation}
where we have introduced the shorthand notation introduced in \eqref{eq:J1-J2-J3}.
%%
%\begin{equation}\label{eq:J1-J2-J3}
%\begin{split}
%\mathcal{J}_1&:=-N\int_0^1 ds\,(1+x^2)\,\chi_{(r/\sqrt{s},y)}(x)\,\chi_{(r/\sqrt{1-s},y)}(x')\,\left[\frac{(x-y)}{r^2}+\frac{(x'-y)}{r^2}\right]\cdot (x-x')\\
%&\hspace{7.5cm}\times\int dv\,\widetilde{W}_{N,t}\left(\frac{x+x'}{2},v\right)\,e^{iv\cdot\frac{(x-x')}{\e}} \\
%\mathcal{J}_2&:= -N\e^2\,(1+x^2)\,\int ds\,\Delta_x\left\{ \chi_{(r/\sqrt{s},y)}(x)\,\chi_{(r/\sqrt{1-s},y)}(x') \right.   \\
%&\hspace{3cm}\left. \times\frac{(x-y)}{r^2}\cdot (x-x')\int dv\,\widetilde{W}_{N,t}\left(\frac{x+x'}{2},v\right)\,e^{iv\cdot\frac{(x-x')}{\e}}   \right\}\\
%\mathcal{J}_3&:= -N\e^2\,(1+x^2)\,\int ds\,\Delta_x\left\{ \chi_{(r/\sqrt{s},y)}(x)\,\chi_{(r/\sqrt{1-s},y)}(x') \right.   \\
%&\hspace{3cm}\left. \times\frac{(x'-y)}{r^2}\cdot (x-x')\int dv\,\widetilde{W}_{N,t}\left(\frac{x+x'}{2},v\right)\,e^{iv\cdot\frac{(x-x')}{\e}}   \right\}\\
%\end{split}
%\end{equation}
%%

Straightforward computations together with the observation  
\be\label{eq:v-der}
(x-x')\int dv\,\widetilde{W}_{N,t}\left(\frac{x+x'}{2},v\right)\,e^{iv\cdot\frac{(x-x')}{\e}}=-i\e\int dv\,\nabla_v\widetilde{W}_{N,t}\left(\frac{x+x'}{2},v\right)\,e^{iv\cdot\frac{(x-x')}{\e}}
\ee
leads to the following estimates for each term in \eqref{eq:kernel}:
\be\label{eq:J1}
\begin{split}
|\mathcal{J}_1|&\leq \frac{N}{r}\left|\int_0^1 ds\,(1+x^2)\,\chi_{(r/\sqrt{s},y)}(x)\chi_{(r/\sqrt{1-s},y)}(x')\right.\\
&\quad\quad\quad\left.\times\left[\frac{1}{\sqrt{s}}\frac{(x-y)}{r/\sqrt{s}}+\frac{1}{\sqrt{1-s}}\frac{(x'-y)}{r/\sqrt{1-s}}\right]\cdot(x-x')\int dv\,\widetilde{W}_{N,t}\left(\frac{x+x'}{2},v\right)\,e^{iv\cdot\frac{(x-x')}{\e}}\right|\quad\quad\quad\quad\quad\quad\quad\\
\end{split}
\ee
\be\label{eq:J2+}
\begin{split}
|\mathcal{J}_2|&\leq \frac{N\e}{r}\left|\int_0^1 \frac{ds}{\sqrt{s}}\,(1+x^2)\,\chi_{(r/\sqrt{s},y)}(x)\frac{(x-y)}{r/\sqrt{s}}\chi_{(r/\sqrt{1-s},y)}(x')\right.\\
&\quad\quad\quad\left.\times\int dv\,\left[2v\,\widetilde{W}_{N,s}\left(\frac{x+x'}{2},v\right)+v^2\nabla_v\widetilde{W}_{N,t}\left(\frac{x+x'}{2},v\right)\right]\,e^{iv\cdot\frac{(x-x')}{\e}}\right|\\
&+\frac{4\,N\e^2}{r^2}\left|\int_0^1 ds\,(1+x^2)\,\chi_{(r/\sqrt{s},y)}(x)\chi_{(r/\sqrt{1-s},y)}(x')\right.\\
&\quad\quad\quad\quad\left.\times\left[\frac{(x-y)^3}{r^3/\sqrt{s^3}}+2\frac{(x-y)}{r/\sqrt{s}}\right]\cdot\frac{(x-x')}{r/\sqrt{s}}\int dv\,\widetilde{W}_{N,t}\left(\frac{x+x'}{2},v\right)\,e^{iv\cdot\frac{(x-x')}{\e}}\right|\\
&+\frac{4\,N\e^2}{r^2}\left|\int_0^1 ds\,(1+x^2)\,\chi_{(r/\sqrt{s},y)}(x)\frac{(x-y)^2}{r^2/s}\chi_{(r/\sqrt{1-s},y)}(x')\,\int dv\,e^{iv\cdot\frac{(x-x')}{\e}}\right.\\
&\quad\quad\quad\quad\left.\times\left[\widetilde{W}_{N,t}\left(\frac{x+x'}{2},v\right)+\e\nabla^2_{v,x}\widetilde{W}_{N,t}\left(\frac{x+x'}{2},v\right)+v\cdot\nabla_v\widetilde{W}_{N,t}\left(\frac{x+x'}{2},v\right)\right]\,\right|\\
&+\frac{2\,N\e^2}{r^2}\left|\int_0^1 ds\,(1+x^2)\,\chi_{(r/\sqrt{s},y)}(x)\chi_{(r/\sqrt{1-s},y)}(x')\,\int dv\,e^{iv\cdot\frac{(x-x')}{\e}}\right.\\
&\quad\quad\quad\quad\left.\times\left[\widetilde{W}_{N,t}\left(\frac{x+x'}{2},v\right)+\e\nabla^2_{v,x}\widetilde{W}_{N,t}\left(\frac{x+x'}{2},v\right)+v\cdot\nabla_v\widetilde{W}_{N,t}\left(\frac{x+x'}{2},v\right)\right]\,\right|\\
&+\frac{4\,N\e^2}{r}\left|\int_0^1 \frac{ds}{\sqrt{s}}\,(1+x^2)\,\chi_{(r/\sqrt{s},y)}(x)\frac{(x-y)}{r/\sqrt{s}}\chi_{(r/\sqrt{1-s},y)}(x')\,\int dv\,e^{iv\cdot\frac{(x-x')}{\e}}\right.\\
&\quad\quad\quad\left.\times\left[\nabla_x\widetilde{W}_{N,s}\left(\frac{x+x'}{2},v\right)+\frac{\e}{2}\nabla^3_{v,x,x}\widetilde{W}_{N,t}\left(\frac{x+x'}{2},v\right)+v\nabla^2_{v,x}\widetilde{W}_{N,t}\left(\frac{x+x'}{2},v\right)\right]\,\right|\\
\end{split}
\ee
\be\label{eq:J3+}
\begin{split}
|\mathcal{J}_3|&\leq\frac{N\e}{r}\left|\int_0^1 \frac{ds}{\sqrt{1-s}}\,(1+x^2)\,\chi_{(r/\sqrt{s},y)}(x)\,\chi_{(r/\sqrt{1-s},y)}(x')\,\frac{(x'-y)}{r/\sqrt{1-s}}\int dv\,e^{iv\cdot\frac{(x-x')}{\e}}\right.\\
&\quad\quad\quad\left.\times\left[2v\,\widetilde{W}_{N,t}\left(\frac{x+x'}{2},v\right)+\nabla_{v}\left(v^2\,\widetilde{W}_{N,t}\left(\frac{x+x'}{2},v\right)\right)\right]\,\right|\\
&+\frac{N\e^2}{r^2}\left|\int_0^1 ds\,2s\,(1+x^2)\,\chi_{(r/\sqrt{s},y)}(x)\left[2\frac{|x-y|^2}{r^2/s}+1\right]\,\chi_{(r/\sqrt{1-s},y)}(x')\,\frac{(x'-y)}{r}\cdot\frac{(x-x')}{r}\right.\\
&\quad\quad\quad\quad\left.\times\int dv\,\widetilde{W}_{N,t}\left(\frac{x+x'}{2},v\right)\,e^{iv\cdot\frac{(x-x')}{\e}}\right|\\
&+\frac{4\,N\e^2}{r^2}\left|\int_0^1 ds\,\frac{\sqrt{s}}{\sqrt{1-s}}\,(1+x^2)\,\chi_{(r/\sqrt{s},y)}(x)\frac{(x-y)}{r/\sqrt{s}}\,\chi_{(r/\sqrt{1-s},y)}(x')\,\frac{(x'-y)}{r/\sqrt{1-s}}\int dv\,e^{iv\cdot\frac{(x-x')}{\e}}\right.\\
&\quad\quad\quad\left.\times\left[\widetilde{W}_{N,t}\left(\frac{x+x'}{2},v\right)+\e\nabla^2_{v,x}\widetilde{W}_{N,t}\left(\frac{x+x'}{2},v\right)+\nabla_{v}\left(v\,\widetilde{W}_{N,t}\left(\frac{x+x'}{2},v\right)\right)\right]\,\right|\\
&+\frac{2\,N\e^2}{r}\left|\int_0^1\frac{ds}{\sqrt{1-s}}\,(1+x^2)\,\chi_{(r/\sqrt{s},y)}(x)\,\chi_{(r/\sqrt{1-s},y)}(x')\,\frac{(x'-y)}{r/\sqrt{1-s}}\int dv\,e^{iv\cdot\frac{(x-x')}{\e}}\right.\\
&\quad\quad\quad\left.\times\left[\nabla_x\widetilde{W}_{N,t}\left(\frac{x+x'}{2},v\right)+\frac{\e}{2}\nabla^3_{v,x,x}\widetilde{W}_{N,t}\left(\frac{x+x'}{2},v\right)+\nabla^2_{v,x}\left(v\,\widetilde{W}_{N,t}\left(\frac{x+x'}{2},v\right)\right)\right]\,\right|\\
\end{split}
\ee
$\mathcal{J}_2$ and $\mathcal{J}_3$ are bounded respectively  by five and four terms which correspond to $\mathcal{J}_{2,k}$, for $k=1,\dots,5$, and $\mathcal{J}_{3,j}$, for $j=1,\dots,4$ defined in \eqref{eq:J2} and \eqref{eq:J3}.

\end{proof}

The rest of this section is devoted to prove bounds on the Hilbert-Schmidt norm of the operators $\mathcal{J}_i$, for $i=1,2,3$, defined in \eqref{eq:J1-J2-J3}.

%%%%%%%%%%%%%%%%%%%%%
\begin{lemma}\label{lemma:J1}
\be
\begin{split}
\|\mathcal{J}_1\|_{\rm HS}&\leq C\,\sqrt{N}\,\e\,\sqrt{r}\,\left\{ \int dv\,\sup_X\,(1+X^{2})^2|\nabla_v\widetilde{W}_{N,t}(X,v)|^2\right\}^{1/2} \\
&+C\,\sqrt{N}\,\e^3\,\sqrt{r}\,\left\{ \int dv\,\sup_X\,|\nabla^3_v\widetilde{W}_{N,t}(X,v)|^2\right\}^{1/2}
\end{split}
\ee
\end{lemma}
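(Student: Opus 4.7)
The key tool is the identity \eqref{eq:v-der}, which converts the factor $(x-x')$ multiplying the phase $e^{iv\cdot(x-x')/\e}$ into $-i\e\nabla_v$ acting on $\widetilde{W}_{N,s}$, producing the overall factor $\e$. Combined with $(x-y)+(x'-y)=2(X-y)$ where $X=(x+x')/2$, this rewrites
\[
\mathcal{J}_1(x,x')=-2i\e N\int_0^1\!ds\,(1+x^2)\,\chi_{(r/\sqrt{s},y)}(x)\,\chi_{(r/\sqrt{1-s},y)}(x')\frac{X-y}{r^2}\cdot g_s(X,Y),
\]
where $g_s(X,Y):=\int dv\,\nabla_v\widetilde{W}_{N,s}(X,v)\,e^{iv\cdot Y}$ and $Y=(x-x')/\e$.

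Squaring, applying Cauchy--Schwarz in $s$, and changing variables $(x,x')\to(X,Y)$ with Jacobian $\e^3$ yields
\[
\|\mathcal{J}_1\|_{\rm HS}^2\leq 4\e^5 N^2\int_0^1\!ds\iint dX\,dY\,(1+x^2)^2|\chi\chi'|^2\,\frac{|X-y|^2}{r^4}|g_s(X,Y)|^2.
\]
Since $x=X+\e Y/2$, one has $(1+x^2)^2\leq C[(1+X^2)^2+\e^4|Y|^4]$, splitting the estimate into a main piece and a remainder. For each piece, Plancherel in $Y$ (at fixed $X$) gives
\[
\int dY\,|g_s(X,Y)|^2=(2\pi)^3\!\int dv\,|\nabla_v\widetilde{W}_{N,s}(X,v)|^2,\qquad \int dY\,|Y|^4|g_s(X,Y)|^2\leq C\!\int dv\,|\nabla_v^3\widetilde{W}_{N,s}(X,v)|^2,
\]
the second identity following by interpreting $Y\leftrightarrow i\nabla_v$ under the phase and integrating by parts twice.

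Completing the square in the exponent, one checks that $|\chi\chi'|^2=\exp(-2|X-y+\e(2s-1)Y/2|^2/r^2)\cdot\exp(-2\e^2|Y|^2 s(1-s)/r^2)$, which exhibits $|\chi\chi'|^2$ as a Gaussian in $X$ of width $O(r)$ with a $Y$-dependent shift, multiplied by a factor $\leq 1$ in $Y$. Consequently the $X$-integral of $|\chi\chi'|^2|X-y|^2/r^4$ is bounded by a constant times $r$ (the shift is immaterial for the shifted $X$-integral). Taking the supremum in $X$ of $(1+X^2)^2|\nabla_v\widetilde W_{N,s}|^2$ (respectively $|\nabla_v^3\widetilde W_{N,s}|^2$) inside $\int dv$, and using $\e=N^{-1/3}$ so that $\e^5 N^2=\e^2 N$ and $\e^5 N^2\cdot\e^4=\e^6 N$, one obtains exactly the two claimed terms upon taking square roots.

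The main technical obstacle is controlling the $Y$-dependence of $\chi\chi'$ after the change of variables: the completion-of-square trick reduces this to a harmless $X$-shift plus a $Y$-decaying exponential. A further minor correction of order $\e^2|Y|^2/r$ arises from $|X-y|^2\leq 2|X-y+\e(2s-1)Y/2|^2+\e^2|Y|^2/2$; it is absorbed into the $\nabla_v^3$ term via the same Plancherel argument (since the extra $\e^2|Y|^2$ contributes $\e^2\nabla_v^2$, which together with the $\nabla_v$ already present yields $\nabla_v^3$ with a smaller $\e$-power than $\e^3$).
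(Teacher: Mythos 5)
Your overall architecture is the same as the paper's: extract $\e$ via \eqref{eq:v-der}, split $(1+x^2)^2\leq C[(1+X^2)^2+\e^4|Y|^4]$, change variables to $(X,Y)$, integrate the Gaussian in $X$, and close with Plancherel in $Y$. The one place you deviate — combining $(x-y)/r^2+(x'-y)/r^2=2(X-y)/r^2$ before estimating — is where the argument breaks down.

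The paper keeps the two terms separate and normalizes each polynomial by its \emph{own} Gaussian width, i.e.\ writes $\frac{x-y}{r^2}=\frac{1}{r\sqrt{s}}\cdot\frac{x-y}{r/\sqrt{s}}$ so that $\chi_{(r/\sqrt{s},y)}(x)\frac{|x-y|}{r/\sqrt{s}}$ is uniformly bounded and the polynomial weight sits at the same point as the Gaussian. After the change of variables the weight and the Gaussian are both at $X+\e X'/2$, and Proposition \ref{prop:gauss-int} applies directly, producing the clean $r^3\,s(1-s)$ factor with no leftover. By rewriting the bracket as $2(X-y)/r^2$, you place the polynomial at $X$ while the Gaussians sit at $x=X+\e Y/2$ and $x'=X-\e Y/2$, so the completion of the square leaves a shift: $|X-y|^2\leq 2|\tilde Z|^2+\tfrac12(2s-1)^2\e^2|Y|^2$ with $\tilde Z=X-y+(2s-1)\e Y/2$. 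Your handling of the resulting $\e^2|Y|^2/r$ remainder is wrong on two counts. First, $\int dY\,|Y|^2|g_s(X,Y)|^2=(2\pi)^3\int dv\,|\nabla_v\big(\nabla_v\widetilde W_{N,t}\big)(X,v)|^2$, i.e.\ $|Y|^2$ costs \emph{one} extra $v$-derivative, not two; so this remainder produces a $\nabla_v^2\widetilde W$ term, not the claimed $\nabla_v^3$. Second, the prefactor is $\e^5N^2\cdot\e^2/r$, whose square root scales like $r^{-1/2}$, not the $r^{1/2}$ in the lemma; a $\nabla_v^2$ term with weight $r^{-1/2}$ simply is not in the statement and cannot be absorbed into either claimed term. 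The remainder \emph{can} be controlled, but not by Plancherel on $|Y|^2$: you must keep the $(2s-1)^2$ and the Gaussian damping $e^{-2s(1-s)\e^2|Y|^2/r^2}$ together and integrate in $s$, using $\int_0^1(2s-1)^2e^{-2s(1-s)a}\,ds\lesssim\min(1,1/a)$ with $a=\e^2|Y|^2/r^2$; that yields $\frac{\e^2|Y|^2}{r}\cdot\min(1,r^2/(\e^2|Y|^2))\leq Cr$, so the shift correction folds back into the $\nabla_v$ term with the correct $\sqrt r$ scaling and no extra derivatives. Either make that $s$-integral argument explicit, or, more simply, do not combine the two terms and instead follow the paper's normalization, which avoids the shift entirely.
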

%%%%%%%%%%%%%%%%%%%%%
\begin{proof}
We observe that 
\be\label{eq:J1-est}
\begin{split}
\|\mathcal{J}_1\|_{\rm HS}&\leq \frac{C\,N\,\e}{r}\left\{ \int dx\int dx'\left|\int_0^1\frac{ds}{\sqrt{s}}(1+x^2)\,\chi_{(r/\sqrt{s},y)}(x)\,\frac{(x-y)}{r/\sqrt{s}}\,\chi_{(r/\sqrt{1-s},y)}(x')   \right.\right.\\
&\quad\quad\quad\quad\quad\quad\quad\quad\quad\quad\left.\left.  \cdot\int dv\,\nabla_v\widetilde{W}_{N,t}\left(\frac{x+x'}{2},v\right)\,e^{iv\cdot\frac{(x-x')}{\e}}\,\right|^2   \right\}^{1/2}\\
&+ \frac{C\,N\,\e}{r}\left\{ \int dx\int dx'\left|\int_0^1\frac{ds}{\sqrt{1-s}}(1+x^2)\,\chi_{(r/\sqrt{s},y)}(x)\,\chi_{(r/\sqrt{1-s},y)}(x')\frac{(x'-y)}{r/\sqrt{1-s}}   \right.\right.\\
&\quad\quad\quad\quad\quad\quad\quad\quad\quad\quad\left.\left.  \cdot\int dv\,\nabla_v\widetilde{W}_{N,t}\left(\frac{x+x'}{2},v\right)\,e^{iv\cdot\frac{(x-x')}{\e}}\,\right|^2   \right\}^{1/2}\\
\end{split}
\ee
where we used \eqref{eq:v-der} to extract a factor $\e$. This will turn out to be very important to look at time scales of order one. Indeed, such a $\e$ will be used to cancel the factor $\e^{-1}$ in front of the time integral in \eqref{eq:est-tr}.

We denote the two integrals on the r.h.s. of \eqref{eq:J1-est} respectively by $\mathcal{J}_{1,1}$ and $\mathcal{J}_{1,2}$, and we observe that for every $x'\in\R^3$ and $\e>0$
\be\label{eq:mom-variables}
(1+x^2)\leq1+2\left(\frac{x+x'}{2}\right)^2+\e^2\left(\frac{x-x'}{\e}\right)^2.
\ee
Hence, using again \eqref{eq:v-der},
\be\label{eq:J1-div-var}
\begin{split}
\mathcal{J}_{1,1}&\leq \frac{C\,N\,\e}{r}\left\{ \int dx\int dx'\left|\int_0^1\frac{ds}{\sqrt{s}}\left(1+\left(\frac{x+x'}{2}\right)^2\right)\,\chi_{(r/\sqrt{s},y)}(x)\,\frac{(x-y)}{r/\sqrt{s}}\,\chi_{(r/\sqrt{1-s},y)}(x')   \right.\right.\\
&\quad\quad\quad\quad\quad\quad\quad\quad\quad\quad\left.\left.  \cdot\int dv\,\nabla_v\widetilde{W}_{N,t}\left(\frac{x+x'}{2},v\right)\,e^{iv\cdot\frac{(x-x')}{\e}}\,\right|^2   \right\}^{1/2}\\
&+\frac{C\,N\,\e^3}{r}\left\{ \int dx\int dx'\left|\int_0^1\frac{ds}{\sqrt{s}}\,\chi_{(r/\sqrt{s},y)}(x)\,\frac{(x-y)}{r/\sqrt{s}}\,\chi_{(r/\sqrt{1-s},y)}(x')   \right.\right.\\
&\quad\quad\quad\quad\quad\quad\quad\quad\quad\quad\left.\left.  \cdot\int dv\,\nabla^3_v\widetilde{W}_{N,t}\left(\frac{x+x'}{2},v\right)\,e^{iv\cdot\frac{(x-x')}{\e}}\,\right|^2   \right\}^{1/2}\\
\end{split}
\ee
and 
\be\label{eq:J2-div-var}
\begin{split}
\mathcal{J}_{1,2}&\leq \frac{C\,N\,\e}{r}\left\{ \int dx\int dx'\left|\int_0^1\frac{ds}{\sqrt{1-s}}\left(1+\left(\frac{x+x'}{2}\right)^2\right)\,\chi_{(r/\sqrt{s},y)}(x)\,\chi_{(r/\sqrt{1-s},y)}(x')\,\frac{(x'-y)}{r/\sqrt{1-s}}   \right.\right.\\
&\quad\quad\quad\quad\quad\quad\quad\quad\quad\quad\left.\left.  \cdot\int dv\,\nabla_v\widetilde{W}_{N,t}\left(\frac{x+x'}{2},v\right)\,e^{iv\cdot\frac{(x-x')}{\e}}\,\right|^2   \right\}^{1/2}\\
&+\frac{C\,N\,\e^3}{r}\left\{ \int dx\int dx'\left|\int_0^1\frac{ds}{\sqrt{1-s}}\,\chi_{(r/\sqrt{s},y)}(x)\,\chi_{(r/\sqrt{1-s},y)}(x')\,\frac{(x'-y)}{r/\sqrt{1-s}}   \right.\right.\\
&\quad\quad\quad\quad\quad\quad\quad\quad\quad\quad\left.\left.  \cdot\int dv\,\nabla^3_v\widetilde{W}_{N,t}\left(\frac{x+x'}{2},v\right)\,e^{iv\cdot\frac{(x-x')}{\e}}\,\right|^2   \right\}^{1/2}\\
\end{split}
\ee
We first estimate $\mathcal{J}_{1,1}$ and focus on the first integral on the r.h.s. of 
\eqref{eq:J1-div-var}
\be
\begin{split}
{\mathcal{J}}_{1,1}^{(1)}:=& \frac{C\,N\,\e}{r}\left\{ \int dx\int dx'\left|\int_0^1\frac{ds}{\sqrt{s}}\left(1+\left(\frac{x+x'}{2}\right)^2\right)\,\chi_{(r/\sqrt{s},y)}(x)\,\frac{(x-y)}{r/\sqrt{s}}\,\chi_{(r/\sqrt{1-s},y)}(x')   \right.\right.\\
&\quad\quad\quad\quad\quad\quad\quad\quad\quad\quad\left.\left.  \cdot\int dv\,\nabla_v\widetilde{W}_{N,t}\left(\frac{x+x'}{2},v\right)\,e^{iv\cdot\frac{(x-x')}{\e}}\,\right|^2   \right\}^{1/2}\\
\end{split}
\ee
By using Jensen's inequality with measure $\frac{ds}{\sqrt{s}}$ and performing the change of variable 
\be\label{eq:change-var}
X=\frac{x+x'}{2}\,,\quad\quad X'=\frac{x-x'}{\e}\,,
\ee
with Jacobian $J=8\,\e^3$, we obtain the bound
\be
\begin{split}
\mathcal{J}_{1,1}^{(1)}
&\leq \frac{C\,\sqrt{N}\,\e}{r}\left\{ \int dX\int dX'\int_0^1\frac{ds}{\sqrt{s}}\,\chi_{(r/\sqrt{2(1-s)},y)}(X-\e\,X'/2) \right.\\
&\quad\quad\quad\quad\quad\quad \times \chi_{(r/\sqrt{2s},y)}(X+\e\,X'/2)\,\frac{|X+\e\,X'/2-y|^2}{r^2/2s}\\
&\quad\quad\quad\quad\quad\quad\left. \times \int dv\int dv'\,(1+X^{2})^2\nabla_v\widetilde{W}_{N,t}(X,v)\cdot\nabla_{v'}\widetilde{W}_{N,t}(X,v')\,e^{i(v-v')\cdot X'}\right\}^{1/2}
\end{split}
\ee
where we have used the identity $\e^3=N^{-1}$.
H\"{o}lder inequality in the $X$ variable with $p=1$ and $q=\infty$ yields 
\be
\begin{split}
\mathcal{J}_{1,1}^{(1)}&\leq \frac{C\,\sqrt{N}\,\e}{r}\left\{ \int dX'\int_0^1\frac{ds}{\sqrt{s}}\,r^3\,s\,(1-s)\,e^{s(1-s)\e^2|X'|^2/r^2}\left(1+\frac{\e^4|X'|^4}{r^4/\sqrt{s^4(1-s)^4}}\right) \right.\\
&\quad\quad\quad\quad\quad\quad\left. \times \int dv\int dv'\,\sup_X\{(1+X^{2})^2\nabla_v\widetilde{W}_{N,t}(X,v)\cdot\nabla_{v'}\widetilde{W}_{N,t}(X,v')\}\,e^{i(v-v')\cdot X'}\right\}^{1/2}
\end{split}
\ee
We observe that the Gaussian $e^{-s(1-s)\e^2|X'|^2/r^2}$  and the function $e^{-s(1-s)\e^2|X'|^2/r^2}\frac{\e^4|X'|^4}{r^4/\sqrt{s^4(1-s)^4}}$ are point-wise bounded. Moreover, the integral in the $X'$ variable produces a Dirac delta, i.e. $\delta(v-v')=(2\pi)^{-3}\int dX'\,e^{i(v-v')\cdot X'}$.  $\mathcal{J}_{1,1}^{(1)}$ is therefore bounded by
\be
\mathcal{J}_{1,1}^{(1)}\leq C\,\sqrt{N}\,\e\,\sqrt{r}\,\left\{ \int dv\,\sup_X\,(1+X^{2})^2|\nabla_v\widetilde{W}_{N,t}(X,v)|^2\right\}^{1/2}
\ee 
We now focus on 
\be
\begin{split}
\mathcal{J}_{1,1}^{(2)}&:= \frac{C\,N\,\e^3}{r}\left\{ \int dx\int dx'\left|\int_0^1\frac{ds}{\sqrt{s}}\,\chi_{(r/\sqrt{s},y)}(x)\,\frac{(x-y)}{r/\sqrt{s}}\,\chi_{(r/\sqrt{1-s},y)}(x')   \right.\right.\\
&\quad\quad\quad\quad\quad\quad\quad\quad\quad\quad\left.\left.  \cdot\int dv\,\nabla^3_v\widetilde{W}_{N,t}\left(\frac{x+x'}{2},v\right)\,e^{iv\cdot\frac{(x-x')}{\e}}\,\right|^2   \right\}^{1/2}
\end{split}
\ee
The same argument we used to estimate $\mathcal{J}_{1,1}^{(2)}$ leads to the bound
\be
\mathcal{J}_{1,1}^{(2)}\leq C\,\sqrt{N}\,\e^3\,\sqrt{r}\,\left\{ \int dv\,\sup_X\,|\nabla^3_v\widetilde{W}_{N,t}(X,v)|^2\right\}^{1/2}
\ee
The term $\mathcal{J}_{1,2}$ can be estimated following the same lines of the bound for $\mathcal{J}_{1,1}$. We therefore obtain
\be
\begin{split}
\mathcal{J}_{1,2}&\leq C\,\sqrt{N}\,\e\,\sqrt{r}\,\left\{ \int dv\,\sup_X\,(1+X^{2})^2|\nabla_v\widetilde{W}_{N,t}(X,v)|^2\right\}^{1/2} \\
&+C\,\sqrt{N}\,\e^3\,\sqrt{r}\,\left\{ \int dv\,\sup_X\,|\nabla^3_v\widetilde{W}_{N,t}(X,v)|^2\right\}^{1/2}
\end{split}
\ee
Collecting all the bounds we get
\begin{equation*}
\begin{split}
\|\mathcal{J}_1\|_{\rm HS}&\leq C\,\sqrt{N}\,\e\,\sqrt{r}\,\left\{ \int dv\,\sup_X\,(1+X^{2})^2|\nabla_v\widetilde{W}_{N,t}(X,v)|^2\right\}^{1/2} \\
&+C\,\sqrt{N}\,\e^3\,\sqrt{r}\,\left\{ \int dv\,\sup_X\,|\nabla^3_v\widetilde{W}_{N,t}(X,v)|^2\right\}^{1/2}
\end{split}
\end{equation*}
which concludes the proof.
\end{proof}
%%%%%%%%%%%%%%%%%%%%%
\begin{lemma}\label{lemma:J21-J31}
\begin{equation*}
\begin{split}
\|\mathcal{J}_{2,1}\|_{\rm HS}&\leq C\,\sqrt{N}\,\e\,\sqrt{r}\,\left\{ \int dv\,\sup_X\,(1+X^{2})^2\left[|v\,\widetilde{W}_{N,t}(X,v)|^2+|\nabla_v(v^2\widetilde{W}_{N,t}(X,v))|^2\right]\right\}^{1/2} \\
&+C\,\sqrt{N}\,\e^3\,\sqrt{r}\,\left\{ \int dv\,\sup_X\,\left[|v\,\nabla^2_v\widetilde{W}_{N,t}(X,v)|^2+|\nabla_v(v^2\nabla^2_v\widetilde{W}_{N,t}(X,v))|^2\right]\right\}^{1/2}\end{split}
\end{equation*}
\begin{equation*}
\begin{split}
\|\mathcal{J}_{3,1}\|_{\rm HS}&\leq C\,\sqrt{N}\,\e\,\sqrt{r}\,\left\{ \int dv\,\sup_X\,(1+X^{2})^2\left[|v\,\widetilde{W}_{N,t}(X,v)|^2+|\nabla_v(v^2\widetilde{W}_{N,t}(X,v))|^2\right]\right\}^{1/2} \\
&+C\,\sqrt{N}\,\e^3\,\sqrt{r}\,\left\{ \int dv\,\sup_X\,\left[|v\,\nabla^2_v\widetilde{W}_{N,t}(X,v)|^2+|\nabla_v(v^2\nabla^2_v\widetilde{W}_{N,t}(X,v))|^2\right]\right\}^{1/2}
\end{split}
\end{equation*}
\end{lemma}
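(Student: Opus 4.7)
The argument mirrors, almost verbatim, the one just given for Lemma \ref{lemma:J1}. Indeed, looking at the definition of $\mathcal{J}_{2,1}$ in \eqref{eq:J2}, the only structural difference with $\mathcal{J}_1$ is the replacement of $\widetilde{W}_{N,s}(\frac{x+x'}{2},v)$ by the combination $2v\,\widetilde{W}_{N,s} + v^2\,\nabla_v\widetilde{W}_{N,s}$, evaluated at $(\frac{x+x'}{2},v)$, and the presence of the already extracted factor $\e/r$ (multiplied by $(x-y)/(r/\sqrt{s})$) which is precisely the bounded Gaussian weight we want to exploit. The plan is therefore to repeat the five basic moves of the proof of Lemma \ref{lemma:J1}: split $(1+x^2)$ according to \eqref{eq:mom-variables}, convert any remaining $(x-x')/\e$ factors into $v$-derivatives via \eqref{eq:v-der}, apply Jensen's inequality in the $s$-variable, perform the change of variables \eqref{eq:change-var}, and finally use H\"older in $X$ together with the Gaussian point-wise bounds and the Fourier identity $\delta(v-v')=(2\pi)^{-3}\int dX'\,e^{i(v-v')\cdot X'}$.

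More explicitly, after using $(1+x^2)\leq (1+2((x+x')/2)^2)+\e^2((x-x')/\e)^2$, the first piece gives a contribution in which the weight is $1+2X^2$ and the $v$-integrand is left unchanged: this produces the term controlled by $\sup_X(1+X^2)^2[\,|v\widetilde W|^2+|\nabla_v(v^2\widetilde W)|^2\,]$. The second piece carries an additional $\e^2$ together with a factor $((x-x')/\e)^2$; integrating by parts twice in $v$ (as in \eqref{eq:v-der}) this factor becomes $-\nabla_v^2$ acting on the bracket $[2v\widetilde W+v^2\nabla_v\widetilde W]$, which is exactly $[v\nabla_v^2\widetilde W$ and $\nabla_v(v^2\nabla_v^2\widetilde W)]$-type derivatives, and the total prefactor combines to $\e^3$. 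Jensen's inequality with measure $ds/\sqrt{s}$ (which is integrable on $(0,1)$) lets us pull the square out of the $s$-integral; the subsequent change of variables $X=(x+x')/2$, $X'=(x-x')/\e$ with Jacobian $8\e^3$ turns the $N\e/r$ prefactor into $\sqrt{N}\e/r$ times $\sqrt{\e^3\cdot N}=1$, and the remaining Gaussian integral $\int_X\chi_{r/\sqrt{2s}}\chi_{r/\sqrt{2(1-s)}}\,|X-y|^2/r^2\,dX$ provides a factor of order $r^3$, which after taking the square root produces the $\sqrt{r}$ announced in the statement. Finally, H\"older in $X$ pulls out $\sup_X$ and the integral in $X'$ collapses to $\delta(v-v')$, leaving a single integral in $v$.

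For $\mathcal{J}_{3,1}$ the argument is fully symmetric: the factor $(x-y)/(r/\sqrt{s})$ is replaced by $(x'-y)/(r/\sqrt{1-s})$, so one applies Jensen with measure $ds/\sqrt{1-s}$ (still integrable on $(0,1)$) and otherwise follows the same five moves. The change of variables, the Gaussian bounds and the Fourier identity all apply unchanged because the roles of $(s,x)$ and $(1-s,x')$ are perfectly interchangeable in the Gaussian integral over $X$.

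The only point that requires minor care, and which is the closest thing to an obstacle, is the bookkeeping of the extra $v$-weights: since the integrand contains $2v\widetilde W+v^2\nabla_v\widetilde W$ rather than $\widetilde W$, one must be slightly pedantic in checking that after the two integrations by parts in $v$ from the $\e^2((x-x')/\e)^2$ piece, the resulting derivatives fall on the product in such a way that each summand is controlled by either $v\nabla_v^2\widetilde W$ or $\nabla_v(v^2\nabla_v^2\widetilde W)$. Once this Leibniz expansion is written out, the final bound in the statement is obtained by triangular inequality on the two pieces coming from \eqref{eq:mom-variables} and then by the Cauchy--Schwarz/Fourier argument above.
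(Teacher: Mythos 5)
Your proposal is correct and takes essentially the same route as the paper, whose own proof of this lemma is literally the instruction to repeat the proof of Lemma \ref{lemma:J1} line by line: you reproduce exactly those steps (splitting $(1+x^2)$ via \eqref{eq:mom-variables}, trading factors of $(x-x')/\e$ for $v$-derivatives through \eqref{eq:v-der}, Jensen in $s$ with the measures $ds/\sqrt{s}$ and $ds/\sqrt{1-s}$, the change of variables \eqref{eq:change-var}, and the H\"older/Gaussian/Dirac-delta argument giving the $\sqrt{N}\,\e\,\sqrt{r}$ and $\sqrt{N}\,\e^{3}\,\sqrt{r}$ prefactors). The only delicate point, which you correctly flag, is the Leibniz bookkeeping when $\nabla_v^2$ falls on the bracket $2v\widetilde W+v^2\nabla_v\widetilde W$; the paper glosses over this in exactly the same way, so your treatment is on par with the paper's.
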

%%%%%%%%%%%%%%%%%%%%%
The proof of Lemma \ref{lemma:J21-J31} can be obtained following line by line the proof of Lemma \ref{lemma:J1}.
%%%%%%%%%%%%%%%%%%%%%
\begin{lemma}\label{lemma:J-2k}
For every $\delta\in(0,3/4)$, $k=3,4$, and $f_t=\widetilde{W}_{N,t}+\e\nabla^2_{v,x}\widetilde{W}_{N,t}+v\cdot\nabla_v\widetilde{W}_{N,t}$, the following bound holds
\be\label{eq:J-2k}
\begin{split}
\left(\int dy \|\mathcal{J}_{2,k}\|_{\rm HS}^{\frac{5}{2}}\right)^{\frac{2}{5}}&\leq C\,\sqrt{N}\,\e^{\frac{7}{5}+\frac{3}{5}\delta}\,r^{\frac{1}{2}-\frac{6}{5}\delta}\left(\int dX\left(\int dv\,(1+2\,X^2)\,|f_t(X,v)|\right)^2\right)^{\frac{3}{10}-\frac{2}{5}\delta} \\
&\quad\quad\quad\quad\quad\quad\quad\quad\quad\times\,\left(\int dX\int dv\,(1+2\,X^2)^2|f_t(X,v)|^2\right)^{\frac{1}{10}+\frac{2}{5}\delta} \\
&\quad\quad\quad\quad\quad\quad\quad\quad\quad\times\,\left(\int dv\,\sup_X (1+2\,X^2)^2|f_t(X,v)|^2\right)^{\frac{1}{10}} \\
&+C\,\sqrt{N}\,\e^{\frac{17}{5}+\frac{3}{5}\delta}\,r^{\frac{1}{2}-\frac{6}{5}\delta}\left(\int dX\left(\int dv\,|\nabla_v^2 f_t(X,v)|\right)^2\right)^{\frac{3}{10}-\frac{2}{5}\delta} \\
&\quad\quad\quad\quad\quad\quad\quad\quad\quad\times\,\left(\int dX\int dv\,|\nabla^2_v f_t(X,v)|^2\right)^{\frac{1}{10}+\frac{2}{5}\delta}\\
&\quad\quad\quad\quad\quad\quad\quad\quad\quad\times\,\left(\int dv\,\sup_X |\nabla^2_v f_t(X,v)|^2\right)^{\frac{1}{10}} 
\end{split}
\ee
\end{lemma}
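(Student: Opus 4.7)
The strategy is to adapt the template of Lemma \ref{lemma:J1}: change variables to $X = (x+x')/2$, $X' = (x-x')/\e$ (Jacobian $\e^3 = 1/N$), apply Jensen in $s$, and exploit Fourier identities in $v$. Two additional ingredients are needed: an integration by parts in $v$ to handle the weight $(1+x^2)$ in $\mathcal{J}_{2,3}, \mathcal{J}_{2,4}$, and an interpolation argument to produce the $L^{5/2}_y$ norm with the three-norm right-hand side of \eqref{eq:J-2k}.

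\emph{Kernel decomposition and integration by parts.} Starting from \eqref{eq:J2}, I would split the weight via $(1+x^2) \leq 1 + 2X^2 + \e^2|X'|^2$. The first piece gives the contribution with $(1+2X^2)\,f_t$. For the piece carrying $\e^2|X'|^2$, use $X'\, e^{iv\cdot X'} = -i\nabla_v e^{iv\cdot X'}$ and integrate by parts twice in $v$ to convert that factor into $\e^2 \nabla_v^2$ acting on $f_t$ (with no spatial weight). This accounts for the two separate lines in \eqref{eq:J-2k} and the extra factor $\e^2$ between them.

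\emph{Core HS estimate and $L^{5/2}_y$ norm.} After squaring $|\mathcal{J}_{2,k}(x;x')|$, changing variables, and applying Jensen, one obtains for each of the two pieces a core expression of the form
\begin{equation*}
\frac{C N \e^4}{r^4} \int_0^1 ds \int dX\, dX'\, \chi^2_{(r/\sqrt{s}, y)}(X + \e X'/2)\, \chi^2_{(r/\sqrt{1-s}, y)}(X - \e X'/2)\, \bigl|\widehat{F}(X, X')\bigr|^2,
\end{equation*}
where $\widehat F$ denotes the $v$-Fourier transform of $(1+2X^2) f_t$ (respectively of $\nabla_v^2 f_t$ for the second line); for $\mathcal{J}_{2,3}$ the additional factor $|x-y|^2 s/r^2$ is bounded pointwise by $1$ inside the Gaussian support. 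Taking square root and the $L^{5/2}_y$ norm, Minkowski inside the $X, X'$ integrals yields an $L^{5/4}_y$ norm of the Gaussian product. Since the latter is an explicit Gaussian integral in $y$, it produces a factor $r^{\beta}$ times a Gaussian tail in $\e X'/r$.

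\emph{Interpolation.} I would then estimate $\int dX'\, |\widehat F|^2$ in three complementary ways: (i) the trivial bound $|\widehat F(X,X')| \leq \|F(X,\cdot)\|_{L^1_v}$, giving an $L^2_X L^1_v$ norm; (ii) Plancherel in $X'$, giving an $L^2_{X,v}$ norm; (iii) as in Lemma \ref{lemma:J1}, pointwise bounding the Gaussians and integrating $X'$ to produce $\delta(v-v')$, giving an $L^\infty_X L^2_v$ norm. Combining these three estimates through Proposition \ref{prop:interpolation} and Young's inequality yields the stated three-norm product, with $\delta$ parametrizing the trade-off between $r$-integrability (producing $r^{1/2 - 6\delta/5}$) and regularity of $f_t$. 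The main obstacle is the careful bookkeeping of all exponents so as to recover precisely $r^{1/2 - 6\delta/5}$ and the exact triple $(3/10 - 2\delta/5,\, 1/10 + 2\delta/5,\, 1/10)$, whose sum $1/2$ doubles to $1$ at the level of the squared HS norm; coordinating the $L^{5/2}_y$ Minkowski step with the $(X,v)$-interpolation coherently is the technical heart of the argument.
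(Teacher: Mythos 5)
Your first step (splitting $(1+x^2)\leq 1+2X^2+\e^2|X'|^2$ and converting the $\e^2|X'|^2$ piece into $\e^2\nabla_v^2 f_t$ via integration by parts in $v$) is exactly what the paper does, and your endpoints (i) and (ii) correspond to the paper's interpolation between \eqref{eq:interp-1} and \eqref{eq:interp-2}. But the way you generate the $L^{5/2}_y$ norm is where the argument breaks. The paper does \emph{not} apply Minkowski in $y$ to the whole expression; it splits the power as $\int dy\,\|\mathcal{J}_{2,k}\|_{\rm HS}^{5/2}\leq\bigl(\sup_y\|\mathcal{J}_{2,k}\|_{\rm HS}^{1/2}\bigr)\int dy\,\|\mathcal{J}_{2,k}\|_{\rm HS}^{2}$, estimating the half power uniformly in $y$ (see \eqref{eq:J-23-1/2}) and only the square by a $y$-integration through Proposition \ref{prop:gauss-int} (see \eqref{eq:J-23-2}). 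The sup-in-$y$ piece is where the factor $\bigl(\int dv\,\sup_X(1+2X^2)^2|f_t|^2\bigr)^{1/10}$ comes from: it is produced by H\"older in $X$ with $p=1$, $q=\infty$, putting the Gaussians (as functions of $X$ at fixed $y$) into $L^1_X$, exactly as in Lemma \ref{lemma:J1}. In your scheme this endpoint (your (iii)) is no longer available: once you have taken the $L^{5/4}_y$ norm of the Gaussian product by Minkowski, the $y$-integration has consumed all the $X$-localization, and what remains is $r^{12/5}$ times a Gaussian tail in $\e X'/r$ with no $X$-dependence at all; bounding that tail pointwise and producing $\delta(v-v')$ then gives again $\int dX\int dv\,|F|^2$ (i.e.\ your endpoint (ii)), not an $L^\infty_X L^2_v$ norm. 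So the three-norm product on the right of \eqref{eq:J-2k} cannot be reached along the route you set up.

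The gap is not only structural but quantitative. Along the pure Minkowski route one gets, after interpolating between your (i) and (ii) with parameter $\gamma$, a bound of order $\sqrt{N}\,\e^{2-\frac{3\gamma}{2}}\,r^{\frac{3\gamma}{2}-\frac{4}{5}}$, with the $s$-integral $\int_0^1[s(1-s)]^{-3\gamma/2}ds$ forcing $\gamma<2/3$: the attainable pairs are $\e^{\frac{6}{5}-\beta}r^{\beta}$ with $\beta<\frac15$, and the stated combination $r^{\frac12-\frac65\delta}\e^{\frac75+\frac35\delta}$ (a \emph{positive} power of $r$ is essential, since this is integrated against $dr/r^{1+\alpha}$ near $r=0$ in Proposition \ref{prop:1term-tr}) is out of reach for the small $\delta$ relevant to Theorem \ref{thm:trace}. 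The paper's splitting evades this in two ways: the $\sup_y\|\mathcal{J}\|^{1/2}$ factor contributes $\e^{2/5}r^{-1/10}$ at very low cost because the $X'$-integral there is turned into a Dirac delta instead of being paid against the Gaussian tail (which costs $\e^{-3}$), and the $y$-integration in the squared piece, done via Proposition \ref{prop:gauss-int}, gains a factor $s(1-s)$ that relaxes the constraint on the interpolation parameter to $\gamma<4/3$, allowing the choice $\gamma=\frac34-\delta$ that yields the exponents in \eqref{eq:J-2k}. Also note that discarding the factor $\chi_{(r/\sqrt{s},y)}(x)\,|x-y|^2 s/r^2$ pointwise, as you propose for $\mathcal{J}_{2,3}$, wastes precisely the structure that Proposition \ref{prop:gauss-int} exploits. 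To repair your proof you would have to reinstate the paper's $\|\mathcal{J}\|^{2}\cdot\|\mathcal{J}\|^{1/2}$ splitting (or an equivalent device that keeps some $X$-localization for a H\"older step before integrating in $y$), at which point you are essentially reproducing the paper's argument.
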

%%%%%%%%%%%%%%%%%%%%%
\begin{proof}
We first observe that
\be
\|\mathcal{J}_{2,3}\|_{\rm HS}\leq C\|\mathcal{J}_{2,3}^{(1)}\|_{\rm HS}+ C\|\mathcal{J}_{2,3}^{(2)}\|_{\rm HS}
\ee
where we denote
\begin{displaymath}
\begin{split}
\|\mathcal{J}_{2,3}^{(1)}\|_{\rm HS}:=&\frac{C\,N\,\e^2}{r^2}\left\{ \int dx\int dx'\left|\int_0^1 ds\,\,\chi_{(r/\sqrt{s},y)}(x)\,\frac{|x-y|^2}{r^2/s}\,\chi_{(r/\sqrt{1-s},y)}(x')\,\right. \right.\\
&\quad\quad\quad\quad\quad\quad\quad\quad\quad\quad\left.\left. \times\,\int dv\,\left(1+2\left(\frac{x+x'}{2}\right)^2\right)\,f_t\left(\frac{x+x'}{2},v\right)\,e^{iv\cdot\frac{(x-x')}{\e} }\right|^2\right\}^{1/2}
\end{split}
\end{displaymath}
\begin{displaymath}
\begin{split}
\|\mathcal{J}_{2,3}^{(2)}\|_{\rm HS}:=&\frac{C\,N\,\e^4}{r^2}\left\{ \int dx\int dx'\left|\int_0^1 ds\,\chi_{(r/\sqrt{s},y)}(x)\,\frac{|x-y|^2}{r^2/s}\,\,\chi_{(r/\sqrt{1-s},y)}(x')\right. \right.\quad\quad\quad\quad\quad\quad\quad\\
&\quad\quad\quad\quad\quad\quad\quad\quad\quad\quad\quad\left.\left. \times\,\int dv\,\nabla_v^2 f_t\left(\frac{x+x'}{2},v\right)\,e^{iv\cdot\frac{(x-x')}{\e} }\right|^2\right\}^{1/2}
\end{split}
\end{displaymath}
We first bound $\mathcal{J}_{2,3}^{(1)}$. Jensen's inequality leads to
\begin{equation*}
\begin{split}
\|\mathcal{J}_{2,3}^{(1)}\|_{\rm HS}&\leq\frac{C\,N\,\e^2}{r^2}\left\{ \int dx\int dx'\int_0^1 ds\,\,\chi_{(r/\sqrt{2s},y)}(x)\,\frac{|x-y|^4}{r^4/s^2}\,\chi_{(r/\sqrt{2(1-s)},y)}(x')\, \right.\\
&\quad\quad\left.\times\int dv\int dv'\left(1+2\left(\frac{x+x'}{2}\right)^2\right)^2f_t\left(\frac{x+x'}{2},v\right)f_t\left(\frac{x+x'}{2},v'\right)e^{i(v-v')\cdot\frac{(x-x')}{\e} }\right\}^{1/2}\end{split}
\end{equation*}
We start by estimating $\|\mathcal{J}_{2,3}^{1}\|_{\rm HS}^{1/2}$ in such a way that it results to be bounded uniformly in $y$. To this end, we perform the change of variables \eqref{eq:change-var}, use Young inequality in the $s$ variable and integrate in the $X$ variable using Proposition \ref{prop:gauss-int}, thus obtaining 
\be\label{eq:J-23-1/2}
\begin{split}
\|\mathcal{J}_{2,3}\|_{\rm HS}^{1/2}&\leq \frac{C\e}{r}\left\{\ \int dX'\int_0^1 ds\,r^3\,s(1-s)e^{-s(1-s)\e^2|X'|^2/r^2}\left(1+\frac{\e^4|X'|^4}{r^4/[s(1-s)]^2}\right)\right.\\
&\quad\quad\quad\quad\quad\quad\quad\quad\times\left. \int dv\int dv'\sup_X\,(1+2X^2)^2f_t(X,v)\,f_t(X,v')\,e^{i(v-v')\cdot X'} \right\}^\frac{1}{4}\\
&\leq C\,\e\,r^{-\frac{1}{4}}\,\left(\int dv\,\sup_X\,(1+X^2)^2|f_t(X,v)|^2\right)^\frac{1}{4}
\end{split}
\ee
where in the last inequality we used that the function $s(1-s)e^{-s(1-s)\e^2|X'|^2/r^2}\left(1+\frac{\e^4|X'|^4}{r^4/[s(1-s)]^2}\right)$ is uniformly bounded and  that $\int dX'\,e^{i(v-v')\cdot X'}$ is proportional to the Dirac delta in $(v-v')$.

Now we are interested in estimating $\int dy\,\|\mathcal{J}_{2,3}^{(1)}\|_{\rm HS}^2$.  
Therefore, we perform the change of variables \eqref{eq:change-var}, use Young inequality in the $s$ variable  and integrate in the $y$ variable as done in Proposition \ref{prop:gauss-int}. It results 
\be
\begin{split}
&\int dy \|\mathcal{J}_{2,3}^{(1)}\|_{\rm HS}^2\\
&\quad\quad\quad\leq \frac{C\,{N}\,\e^4}{r^4} \int dX\int dX'\int_0^1 ds\,r^3\,s(1-s)\,e^{-s(1-s)\e^2|X'|^2/r^2}\left(1+\frac{\e^4|X'|^4}{r^4/[s(1-s)]^2}\right)\\
&\quad\quad\quad\quad\quad\quad\quad\quad\quad\quad\quad\quad\quad\quad\quad\times\,\int dv\int dv' (1+2\,X^2)^2f_t(X,v)\,f_t(X,v')\,e^{i(v-v')\cdot X'}  
\end{split}
\ee
Now we observe that on the one hand the integral in the $X'$ variable is bounded by
\be\label{eq:interp-1}
%\begin{split}
\int dX'\,e^{-s(1-s)\e^2|X'|^2/r^2}\left(1+\frac{\e^4|X'|^4}{r^4/\sqrt{s^4(1-s)^4}}\right)\leq \frac{C\,r^3}{\e^3\,s^{3/2}(1-s)^{3/2}}
%\end{split}
\ee
On the other hand the integrand is bounded uniformly in $r$ and $\e$, i.e. there exists $C\geq 0$ such that
\be\label{eq:interp-2}
e^{-s(1-s)\e^2|X'|^2/r^2}\left(1+\frac{\e^4|X'|^4}{r^4/\sqrt{s^4(1-s)^4}}\right)\leq C.
\ee
Therefore, interpolating between \eqref{eq:interp-1} and \eqref{eq:interp-2} we obtain
\be\label{eq:J-23-2}
\begin{split}
\int dy \|\mathcal{J}_{2,3}^{(1)}\|_{\rm HS}^2&\leq \frac{C\,N\,\e^4}{r}\int_0^1 ds\,\frac{r^{3\gamma}}{\e^{3\gamma}[s(1-s)]^{\frac{3}{2}\gamma-1}}\left(\int dX\left(\int dv\,(1+2\,X^2)\,|f_t(X,v)|\right)^{2}\right)^\gamma\\
&\quad\quad\quad\quad\quad\quad\quad\quad\quad\quad\quad\quad\quad\quad\,\quad\quad\quad\times\,\left(\int dv\,(1+2\,X^2)^2\,|f_t(X,v)|^2\right)^{1-\gamma}  
\end{split}
\ee
for $\gamma\in(0,1)$ to be chosen below. 
Therefore, gathering together \eqref{eq:J-23-1/2} and \eqref{eq:J-23-2}, we get the bound
\begin{equation*}
\begin{split}
\left(\int dy\,\|\mathcal{J}_{2,3}^{(1)}\|_{\rm HS}^2\|\mathcal{J}_{2,3}^{(1)}\|_{\rm HS}^\frac{1}{2}\right)^\frac{2}{5}&\leq C\,\sqrt{N}\,\e^{\frac{23}{10}-\frac{6}{5}\gamma}\,r^{\frac{6}{5}\gamma-\frac{2}{5}}\left(\int_0^1 \frac{ds}{[s(1-s)]^\frac{3}{2}\gamma-1}\right)^\frac{2}{5}\\
&\quad\quad\quad\quad\quad\quad\times\left( \int dv\,\sup_X\,(1+2X^2)^2|f_t(X,v)|^2 \right)^\frac{1}{10}\\
&\quad\quad\quad\quad\quad\quad\times\left(\int dX\left(\int dv\,(1+2X^2)\,|f_t(X,v)|\right)^2\right)^{\frac{2}{5}\gamma}\\
&\quad\quad\quad\quad\quad\quad\times\left(\int dX\int dv\,(1+2X^2)^2|f_t(X,v)|^2\right)^{\frac{2}{5}(1-\gamma)}
\end{split}
\end{equation*}
%

%
%We are now ready to perform the integrals in the $X'$ variables, thus obtaining the Dirac delta $\delta(v-v')$, hence
%
%\be
%\begin{split}
%\left(\int dy \|\mathcal{J}_{2,3}^{(1)}\|_{\rm HS}^2\right)^{\frac{1}{2}}&\leq \frac{C\,\sqrt{N}\,\e^2}{r^2}\left\{ \int_0^1 \frac{ds}{(1-2\sqrt{s}\sqrt{1-s})^{3\alpha/2}}\,r^{3(1+\alpha)}\,\right.\\
%&\quad\quad\quad\quad\quad\quad\left.\times\left(\int dv\,\sup_X\,(1+2\,X^2)^2|f(X,v)|^2\right)^\alpha  \right.\\
%&\quad\quad\quad\quad\quad\quad\left.\times\,\left(\int dX\int dv\,(1+2\,X^2)^2|f(X,v)|^2\right)^{1-\alpha} \right\}^{\frac{1}{2}}
%\end{split}
%\ee
%
We choose $\gamma=\frac{3}{4}-\delta$, for $\delta\in(0,3/4)$. Thus the integral in the $s$ variable is bounded and we obtain 
\be
\begin{split}
\left(\int dy \|\mathcal{J}_{2,3}^{(1)}\|_{\rm HS}^\frac{5}{2}\right)^{\frac{2}{5}}&\leq C\,\sqrt{N}\,\e^{\frac{7}{5}+\frac{3}{5}\delta}\,r^{\frac{1}{2}-\frac{6}{5}\delta}\left(\int dv\,\sup_X\,(1+2X^2)^2|f_t(X,v)|^2\right)^\frac{1}{10}\\
&\quad\quad\quad\quad\quad\quad\quad\quad\quad\times\,\left(\int dX\left(\int dv\,(1+2\,X^2)|f_t(X,v)|\right)^2\right)^{\frac{3}{10}-\frac{2}{5}\delta} \\
&\quad\quad\quad\quad\quad\quad\quad\quad\quad\times\,\left(\int dX\int dv\,(1+2\,X^2)^2|f_t(X,v)|^2\right)^{\frac{1}{10}+\frac{2}{5}\delta} 
\end{split}
\ee
Following the exact same lines of the proof for the bound on $\mathcal{J}_{2,3}^{(1)}$, we obtain
\be
\begin{split}
\left(\int dy \|\mathcal{J}_{2,3}^{(2)}\|_{\rm HS}^\frac{5}{2}\right)^{\frac{2}{5}}&\leq C\,\sqrt{N}\,\e^{\frac{17}{5}+\frac{3}{5}\delta}\,r^{\frac{1}{2}-\frac{6}{5}\delta}\left(\int dv\,\sup_X\,|\nabla_v^2 f_t(X,v)|^2\right)^\frac{1}{10}\\
&\quad\quad\quad\quad\quad\quad\quad\quad\quad\times\,\left(\int dX\left(\int dv\,|\nabla_v^2 f_t(X,v)|\right)^2\right)^{\frac{3}{10}-\frac{2}{5}\delta} \\
&\quad\quad\quad\quad\quad\quad\quad\quad\quad\times\,\left(\int dX\int dv\,|\nabla^2_v f_t(X,v)|^2\right)^{\frac{1}{10}+\frac{2}{5}\delta} 
\end{split}
\ee
To bound  
\begin{displaymath}
\begin{split}
\mathcal{J}_{2,4}:=& \frac{2\,N\e^2}{r^2}\left|\int_0^1 ds\,(1+x^2)\,\chi_{(r/\sqrt{s},y)}(x)\chi_{(r/\sqrt{1-s})}(x')\,\int dv\,f_t\left(\frac{x+x'}{2},v\right)\,e^{iv\cdot\frac{(x-x')}{\e}}\right|
\end{split}
\end{displaymath}
we notice that the function $\chi_{(r/\sqrt{s},y)}(x)$ scales exactly as 
\be\label{eq:scale-chi}
\chi_{(r/\sqrt{s},y)}(x)\frac{(x-y)^j}{r^j/\sqrt{s^{j}}},
\ee
 for every $j\in\N$. Proposition \ref{prop:gauss-int} allows us to mimic here the estimates we have performed to achieve the bound on $\mathcal{J}_{2,3}$, thus leading to \eqref{eq:J-2k} for $k=4$.
\end{proof}

%%%%%%%%%%%%%%%%%%%%%
\begin{lemma}\label{lemma:J-25}
For $\delta\in(0,3/4)$ and the vector valued function $$g_t(x,v)=\nabla_x\widetilde{W}_{N,t}(x,v)+\frac{\e}{2}\nabla^3_{v,x,x}\widetilde{W}_{N,t}(x,v)+\nabla^2_{v,x}\left(v\,\widetilde{W}_{N,t}(x,v)\right),$$ it holds
\begin{displaymath}
\begin{split}
\left(\int dy\,\|\mathcal{J}_{2,5}\|^\frac{5}{2}_{\rm HS}\right)^{\frac{2}{5}}&\leq C\,\sqrt{N}\,\e^{\frac{7}{5}+\frac{3}{5}\delta}\,r^{\frac{3}{2}-\frac{6}{5}\delta}\left(\int dX\left(\int dv\,(1+2\,X^2)\,|g_t(X,v)|\right)^2\right)^{\frac{3}{10}-\frac{2}{5}\delta} \\
&\quad\quad\quad\quad\quad\quad\quad\quad\quad\times\,\left(\int dX\int dv\,(1+2\,X^2)^2|g_t(X,v)|^2\right)^{\frac{1}{10}+\frac{2}{5}\delta} \\
&\quad\quad\quad\quad\quad\quad\quad\quad\quad\times\,\left(\int dv\,\sup_X (1+2\,X^2)^2|g_t(X,v)|^2\right)^{\frac{1}{10}} \\
&+C\,\sqrt{N}\,\e^{\frac{17}{5}+\frac{3}{5}\delta}\,r^{\frac{3}{2}-\frac{6}{5}\delta}\left(\int dX\left(\int dv\,|\nabla_v^2 g_t(X,v)|\right)^2\right)^{\frac{3}{10}-\frac{2}{5}\delta} \\
&\quad\quad\quad\quad\quad\quad\quad\quad\quad\times\,\left(\int dX\int dv\,|\nabla^2_v g_t(X,v)|^2\right)^{\frac{1}{10}+\frac{2}{5}\delta}\\
&\quad\quad\quad\quad\quad\quad\quad\quad\quad\times\,\left(\int dv\,\sup_X |\nabla^2_v g_t(X,v)|^2\right)^{\frac{1}{10}} 
\end{split}
\end{displaymath}
\end{lemma}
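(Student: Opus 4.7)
The plan is to mirror the proof of Lemma \ref{lemma:J-2k} almost verbatim: the only structural difference is that the prefactor of $\mathcal{J}_{2,5}$ is $\frac{N\e^2}{r}$ instead of $\frac{N\e^2}{r^2}$, so an additional power of $r$ propagates through all the estimates and the exponent $\tfrac{1}{2}-\tfrac{6}{5}\delta$ obtained there is replaced by $\tfrac{3}{2}-\tfrac{6}{5}\delta$ here.

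First I would rewrite the bracket in $\mathcal{J}_{2,5}$ in terms of $g_t$. The product rule $\nabla^2_{v,x}(v\,\widetilde{W}_{N,t})=\nabla_x\widetilde{W}_{N,t}+v\cdot\nabla^2_{v,x}\widetilde{W}_{N,t}$ implies that the bracket equals $g_t-\nabla_x\widetilde{W}_{N,t}$, hence is pointwise controlled by a linear combination of components of $g_t$. Next I apply \eqref{eq:mom-variables} to split
\[(1+x^2)\le 1+2X^2+\e^2|X'|^2,\qquad X:=\tfrac{x+x'}{2},\ X':=\tfrac{x-x'}{\e},\]
and decompose $\|\mathcal{J}_{2,5}\|_{\rm HS}\le C\|\mathcal{J}_{2,5}^{(1)}\|_{\rm HS}+C\|\mathcal{J}_{2,5}^{(2)}\|_{\rm HS}$, where the first piece carries $(1+2X^2)\,g_t$ and the second $\e^2|X'|^2\,g_t$. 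In the latter I use $|X'|^2\,e^{iv\cdot X'}=-\Delta_v\,e^{iv\cdot X'}$ and integrate by parts twice in $v$, so that $\nabla^2_v$ lands on $g_t$ and an extra $\e^2$ is produced; this supplies the factor $\e^{17/5+3\delta/5}$ in the stated bound.

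For each of the two pieces I then proceed exactly as in the proof of Lemma \ref{lemma:J-2k}: Jensen's inequality in the $s$-integral with measure $ds/\sqrt{s}$, the change of variables with Jacobian $8\e^3=8/N$, the $y$-integration performed first via Proposition \ref{prop:gauss-int}, and interpolation on the $X'$-integral between its pointwise bound of order $1$ and its integrated bound of order $r^3/(\e^3[s(1-s)]^{3/2})$. Choosing the interpolation exponent $\gamma=\tfrac{3}{4}-\delta$ keeps the $s$-integral finite and produces the exponents $\tfrac{3}{10}-\tfrac{2}{5}\delta$, $\tfrac{1}{10}+\tfrac{2}{5}\delta$, $\tfrac{1}{10}$ on the right-hand side; combining with the $L^\infty_y$-type estimate for $\|\mathcal{J}_{2,5}\|_{\rm HS}^{1/2}$ (built exactly as in \eqref{eq:J-23-1/2}) and taking the $L^{5/2}$-norm in $y$ yields the claimed inequality. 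The only new bookkeeping item is the extra multiplier $\chi_{(r/\sqrt{s},y)}(x)\frac{(x-y)}{r/\sqrt{s}}$ distinguishing $\mathcal{J}_{2,5}$ from $\mathcal{J}_{2,4}$, but it is pointwise bounded by $1$ and is absorbed transparently into the Gaussian integrals, contributing only an $O(1)$ constant. Thus the modified $\frac{1}{r}$ prefactor is what accounts for the final $r^{3/2-6\delta/5}$ exponent, and I do not anticipate any genuinely new analytic difficulty beyond those already met in Lemma \ref{lemma:J-2k}.
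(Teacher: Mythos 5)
Your proposal is correct and follows essentially the paper's own route: the paper proves this lemma precisely by repeating the argument of Lemma \ref{lemma:J-2k}, taking Jensen's inequality with respect to the normalized measure $(2\sqrt{s})^{-1}ds$ and handling the extra factor $\chi_{(r/\sqrt{s},y)}(x)\,\frac{(x-y)}{r/\sqrt{s}}$ through Proposition \ref{prop:gauss-int}, with the additional power of $r$ (giving $r^{3/2-\frac{6}{5}\delta}$ instead of $r^{1/2-\frac{6}{5}\delta}$) coming from the prefactor $N\e^2/r$ exactly as you say. The only cosmetic point is your identification of the bracket with $g_t-\nabla_x\widetilde{W}_{N,t}$ (and the remark that the extra Gaussian-normalized factor contributes only constants), which mirrors an imprecision already present in the paper's statement and does not affect the estimate, since all steps are carried out on the actual kernel via the Gaussian integrals of Proposition \ref{prop:gauss-int}.
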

%%%%%%%%%%%%%%%%%%%%%
\begin{proof}
The proof is analogous to the one of Lemma \ref{lemma:J-2k} with the following modifications: 
 when performing Jensen inequality, use the measure $({2\sqrt{s}})^{-1}ds$;
notice that $\chi_{(r/\sqrt{s},y)}(x)\,\frac{(x-y)^2}{r^2/s}$ scales as $\chi_{(r/\sqrt{s},y)}(x)\,\frac{(x-y)}{r/\sqrt{s}}$ and apply Proposition \ref{prop:gauss-int}.
\end{proof}
%%%%%%%%%%%%%%%%%%%%%
\begin{lemma}\label{lemma:J-22}
For $\delta\in(0,3/4)$, 
\begin{displaymath}
\begin{split}
\left(\int dy\,\|\mathcal{J}_{2,2}\|_{\rm HS}^\frac{5}{2}\right)^{\frac{2}{5}}&\leq C\,\sqrt{N}\,\e^{\frac{7}{5}+\frac{3}{5}\delta}\,r^{\frac{1}{2}-\frac{6}{5}\delta}\left(\int dv\,\sup_X\,(1+2X^2)^2|\widetilde{W}_{N,t}(X,v)|^2\right)^\frac{1}{10}\\
&\quad\quad\quad\quad\quad\quad\quad\quad\quad\times\,\left(\int dX\left(\int dv\,(1+2\,X^2)\,|\widetilde{W}_{N,t}(X,v)|\right)^2\right)^{\frac{3}{10}-\frac{2}{5}\delta} \\
&\quad\quad\quad\quad\quad\quad\quad\quad\quad\times\,\left(\int dX\int dv\,(1+2\,X^2)^2|\widetilde{W}_{N,t}(X,v)|^2\right)^{\frac{1}{10}+\frac{2}{5}\delta}  \\
&+C\,\sqrt{N}\,\e^{\frac{17}{5}+\frac{3}{5}\delta}\,r^{\frac{1}{2}-\frac{6}{5}\delta}\left(\int dv\,\sup_X\,|\nabla_v^2\widetilde{W}_{N,t}(X,v)|^2\right)^\frac{1}{10}   \\
&\quad\quad\quad\quad\quad\quad\quad\quad\quad\times\,\left(\int dX\left(\int dv\,|\nabla_v^2 \widetilde{W}_{N,t}(X,v)|\right)^2\right)^{\frac{3}{10}-\frac{2}{5}\delta} \\
&\quad\quad\quad\quad\quad\quad\quad\quad\quad\times\,\left(\int dX\int dv\,|\nabla^2_v \widetilde{W}_{N,t}(X,v)|^2\right)^{\frac{1}{10}+\frac{2}{5}\delta}
\end{split}
\end{displaymath}
\end{lemma}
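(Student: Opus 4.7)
The proof follows exactly the blueprint already executed for $\mathcal{J}_{2,3}$ in Lemma \ref{lemma:J-2k}; the role of this plan is to explain how the extra factor $\frac{(x-x')}{r/\sqrt{s}}$ appearing in $\mathcal{J}_{2,2}$ is absorbed without altering the final bound. The starting point is the same splitting $(1+x^2)\leq 1+2\bigl(\tfrac{x+x'}{2}\bigr)^2+\e^2\bigl(\tfrac{x-x'}{\e}\bigr)^2$ used in \eqref{eq:mom-variables}. After integrating by parts twice in $v$ via \eqref{eq:v-der} to convert the last $\e^2(X')^2$ piece into $-\e^2\nabla_v^2\widetilde{W}_{N,t}$, the kernel of $\mathcal{J}_{2,2}$ naturally splits into a term carrying $(1+2X^2)\widetilde{W}_{N,t}$ with prefactor of size $N\e^2/r^2$ and a term carrying $\nabla_v^2\widetilde{W}_{N,t}$ with prefactor of size $N\e^4/r^2$. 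These will produce the two summands $\e^{7/5+3\delta/5}$ and $\e^{17/5+3\delta/5}$ on the right-hand side.

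The extra factor $\frac{(x-x')}{r/\sqrt{s}}$ is then dealt with via the triangle decomposition
\[
\frac{(x-x')}{r/\sqrt{s}}\;=\;\frac{(x-y)}{r/\sqrt{s}}\;-\;\sqrt{\frac{1-s}{s}}\,\frac{(x'-y)}{r/\sqrt{1-s}}.
\]
The first summand, together with $\chi_{(r/\sqrt{s},y)}(x)$, enjoys the same Gaussian-moment scaling invoked throughout the proof of Lemma \ref{lemma:J-2k} (cf.\ the observation \eqref{eq:scale-chi} and Proposition \ref{prop:gauss-int}); likewise the second summand pairs with $\chi_{(r/\sqrt{1-s},y)}(x')$ and transfers the Gaussian weight to $x'$, at the price only of the $s$-integrable factor $\sqrt{(1-s)/s}$. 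In both cases the outcome is structurally identical to the estimates treated for $\mathcal{J}_{2,3}$ and $\mathcal{J}_{2,4}$, with the same $\e$ and $r$ powers out front.

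From here one reproduces verbatim the steps of Lemma \ref{lemma:J-2k}: apply Jensen's inequality in $s$ with the weight $ds/\sqrt{s}$, perform the change of variables \eqref{eq:change-var} with Jacobian $8\e^3=8/N$, and estimate the resulting Hilbert--Schmidt norm in two complementary ways. The first is a pointwise bound on the Gaussian envelope $e^{-s(1-s)\e^2|X'|^2/r^2}$ combined with H\"older in $X$, yielding an estimate of $\|\mathcal{J}_{2,2}\|_{\rm HS}^{1/2}$ uniform in $y$ of the form $C\e\,r^{-1/4}\bigl(\int dv\sup_X(1+2X^2)^2|\widetilde{W}_{N,t}|^2\bigr)^{1/4}$, as in \eqref{eq:J-23-1/2}. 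The second is the $y$-integrated bound on $\|\mathcal{J}_{2,2}\|_{\rm HS}^2$, obtained by integrating first in $y$ and then interpolating between \eqref{eq:interp-1} and \eqref{eq:interp-2} with a free exponent $\gamma\in(0,1)$, exactly as in \eqref{eq:J-23-2}.

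The conclusion is then reached by writing $\|\mathcal{J}_{2,2}\|_{\rm HS}^{5/2}=\|\mathcal{J}_{2,2}\|_{\rm HS}^2\cdot\|\mathcal{J}_{2,2}\|_{\rm HS}^{1/2}$, inserting the two bounds above and choosing $\gamma=3/4-\delta$ so that the integral $\int_0^1 ds\,[s(1-s)]^{1-3\gamma/2}$ converges for $\delta\in(0,3/4)$. The only subtlety that is not a verbatim copy of Lemma \ref{lemma:J-2k} is the verification that the $\sqrt{(1-s)/s}$ factor introduced by the splitting above does not spoil the $s$-integrability after combining with the $[s(1-s)]^{1-3\gamma/2}$ weight; this is what I expect to be the main technical check, but it is easily accommodated within the range $\delta\in(0,3/4)$. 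The claimed estimate follows.
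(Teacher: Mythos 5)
Your approach is essentially the same as the paper's: you correctly see that $\mathcal{J}_{2,2}$ already carries $\e^2$ in front, so one should not spend another application of \eqref{eq:v-der} on the extra factor $\frac{(x-x')}{r/\sqrt{s}}$; instead you absorb it via the triangle inequality $|x-x'|\le|x-y|+|x'-y|$, exactly as in the paper's Eq.~\eqref{eq:triangular}, and you use the $(1+x^2)$-splitting to produce a $\widetilde{W}_{N,t}$ piece with prefactor $N\e^2/r^2$ and a $\nabla_v^2\widetilde{W}_{N,t}$ piece with prefactor $N\e^4/r^2$, yielding the two summands of the claimed bound. Two small slips to correct when you write this up. First, the displayed decomposition has the wrong power of $s$: since $\frac{(x'-y)}{r/\sqrt{s}}=\sqrt{\frac{s}{1-s}}\cdot\frac{(x'-y)}{r/\sqrt{1-s}}$, the correct identity is
\begin{equation*}
\frac{(x-x')}{r/\sqrt{s}}=\frac{(x-y)}{r/\sqrt{s}}-\sqrt{\frac{s}{1-s}}\,\frac{(x'-y)}{r/\sqrt{1-s}},
\end{equation*}
with $\sqrt{s/(1-s)}$ rather than $\sqrt{(1-s)/s}$; both are integrable on $[0,1]$, so the conclusion is unaffected, but the singularity sits at $s=1$ rather than $s=0$. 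Second, because $\mathcal{J}_{2,2}$ involves $\int_0^1 ds$ and not $\int_0^1 ds/\sqrt{s}$, the Jensen weight is $ds$ for the $(x-y)$-piece and $\frac{\sqrt{s}}{2\sqrt{1-s}}\,ds$ for the $(x'-y)$-piece (the paper uses $\int_0^1 \frac{\sqrt{s}}{\sqrt{1-s}}ds=2$); the weight $ds/\sqrt{s}$ you cite is appropriate for $\mathcal{J}_{2,1}$ and $\mathcal{J}_{2,5}$, not here. With these two adjustments, the combination with the interpolation factor $[s(1-s)]^{1-\frac{3}{2}\gamma}$ at $\gamma=3/4-\delta$ remains integrable over $[0,1]$ for $\delta\in(0,3/4)$, and the argument closes exactly as the paper's does.
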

%%%%%%%%%%%%%%%%%%%%%
\begin{proof}
We observe that, with respect to the other terms, $\mathcal{J}_{2,2}$ contains an extra $r^{-1}$, thus it is more singular. To get reed of such a singularity and reduce the problem to one of the terms that have been already estimated in the previous Lemmas, we notice that we do not need to extract an extra $\e$ using \eqref{eq:v-der} in $\mathcal{J}_{2,2}$ (indeed, the Laplacian comes with a factor $\e^2$ in front that is enough to our purposes). Therefore we observe that for every $y\in\R^3$
\be\label{eq:triangular}
|x-x'|\leq |x-y|+|x'-y|\,.
\ee
Then the proof proceeds essentially following the same lines of the proof of Lemma \ref{lemma:J-2k}, up to some subtile differences. More precisely, we use \eqref{eq:mom-variables} to define 
\begin{displaymath}
\begin{split}
&\|\mathcal{J}_{2,2}^{(1)}\|_{\rm HS}\\
&:=\frac{4\,N\,\e^2}{r^2}\left\{\int dx\int dx'\left|\int_0^1 \sqrt{s}\,\chi_{(r/\sqrt{s},y)}(x)\,\chi_{(r/\sqrt{1-s},y)}(x')\,\left[\frac{(x-y)^3}{r^3/\sqrt{s^3}}+2\frac{(x-y)}{r/\sqrt{s}}\right]\cdot\frac{(x-x')}{r}\right.\right.\\
&\quad\quad\quad\quad\quad\quad\quad\quad\quad\quad\quad\quad\quad\quad\quad  \left.\left. \int dv\left(1+2\left(\frac{x+x'}{2}\right)^2\right)\,\widetilde{W}_{N,t}\left(\frac{x+x'}{2},v\right)\,e^{iv\cdot\frac{(x-x')}{\e}}\right|^2\right\}^{\frac{1}{2}}    
\end{split}
\end{displaymath}
\begin{displaymath}
\begin{split}
&\|\mathcal{J}_{2,2}^{(2)}\|_{\rm HS} \\
&:=\frac{4\,N\,\e^4}{r^2}\left\{\int dx\int dx'\left|\int_0^1 \sqrt{s}\,\chi_{(r/\sqrt{s},y)}(x)\,\chi_{(r/\sqrt{1-s},y)}(x')\,\left[\frac{(x-y)^3}{r^3/\sqrt{s^3}}+2\frac{(x-y)}{r/\sqrt{s}}\right]\cdot\frac{(x-x')}{r}\right.\right.\\
&\quad\quad\quad\quad\quad\quad\quad\quad\quad\quad\quad\quad\quad\quad\quad\quad\quad\quad\quad\quad\quad\quad\quad  \left.\left. \int dv\,\nabla_v^2\widetilde{W}_{N,t}\left(\frac{x+x'}{2},v\right)\,e^{iv\cdot\frac{(x-x')}{\e}}\right|^2\right\}^{\frac{1}{2}} 
\end{split}
\end{displaymath}
so that
\be
\|\mathcal{J}_{2,2}\|_{\rm HS}\leq C\|\mathcal{J}_{2,2}^{(1)}\|_{\rm HS}+C\|\mathcal{J}_{2,2}^{(2)}\|_{\rm HS}
\ee
We first focus on $\mathcal{J}_{2,2}^{(1)}$. By using the triangular inequality \eqref{eq:triangular}, we obtain the bound
\be\label{eq:J-22-1}
\begin{split}
\mathcal{J}_{2,2}^{(1)}&\leq \sum_{j=2,4}\frac{C\,N\,\e^2}{r^2}\left|\int_0^1 ds
\,\chi_{(r/\sqrt{s},y)}(x)\,\frac{|x-y|^j}{r^j/\sqrt{s^j}}\,\chi_{(r/\sqrt{1-s},y)}(x')\right.\\
&\quad\quad\quad\quad\quad\quad\quad\quad\quad\left.\times\int dv\,\left(1+\left(\frac{x+x'}{2}\right)^2\right)\,\widetilde{W}_{N,t}\left(\frac{x+x'}{2},v\right)\,e^{iv\cdot\frac{(x-x')}{\e}}\right|\\
&+\sum_{j=1,3}\frac{C\,N\,\e^4}{r^2}\left|\int_0^1 ds\,\frac{\sqrt{s}}{\sqrt{1-s}}
\,\chi_{(r/\sqrt{s},y)}(x)\,\frac{|x-y|^j}{r^j/\sqrt{s^j}}\,\chi_{(r/\sqrt{1-s},y)}(x')\,\frac{|x'-y|}{r/\sqrt{1-s}}\right.\\
&\quad\quad\quad\quad\quad\quad\quad\quad\quad\left.\times\int dv\,\left(1+\left(\frac{x+x'}{2}\right)^2\right)\,\nabla_v^2\widetilde{W}_{N,t}\left(\frac{x+x'}{2},v\right)\,e^{iv\cdot\frac{(x-x')}{\e}}\right|\\
\end{split}
\ee
 The first sum on the r.h.s. in \eqref{eq:J-22-1} can be bounded following the same lines of the proof of Lemma \ref{lemma:J-2k} thanks to Proposition \ref{prop:gauss-int}. As for the second sum, we note that
 $$
 \int_0^1 ds\,\frac{\sqrt{s}}{\sqrt{1-s}}=2
 $$
 Therefore we apply Jensen's inequality with the measure $\frac{\sqrt{s}}{2\sqrt{1-s}}ds$. Proceeding as in Lemma \ref{lemma:J-2k}, we obtain
 \be
 \begin{split}
 \left(\int dy\,\|\mathcal{J}_{2,2}^{(1)}\|_{\rm HS}^\frac{5}{2}\right)^{\frac{2}{5}}&\leq  C\,\sqrt{N}\,\e^{\frac{7}{5}+\frac{3}{5}\delta}\,r^{\frac{1}{2}-\frac{6}{5}\delta}\left(\int dv\,\sup_X\,(1+2X^2)^2|\widetilde{W}_{N,t}(X,v)|^2\right)^\frac{1}{10}\\
&\quad\quad\quad\quad\quad\quad\quad\quad\quad\times\,\left(\int dX\left(\int dv\,(1+2\,X^2)\,|\widetilde{W}_{N,t}(X,v)|\right)^2\right)^{\frac{3}{10}-\frac{2}{5}\delta} \\
&\quad\quad\quad\quad\quad\quad\quad\quad\quad\times\,\left(\int dX\int dv\,(1+2\,X^2)^2|\widetilde{W}_{N,t}(X,v)|^2\right)^{\frac{1}{10}+\frac{2}{5}\delta}  
\end{split}
\ee
$\mathcal{J}_{2,2}^{(2)}$ can be handled analogously, hence the bound
 \be
 \begin{split}
 \left(\int dy\,\|\mathcal{J}_{2,2}^{(2)}\|_{\rm HS}^\frac{5}{2}\right)^{\frac{2}{5}}\leq
&C\,\sqrt{N}\,\e^{\frac{17}{5}+\frac{3}{5}\delta}\,r^{\frac{1}{2}-\frac{6}{5}\delta}\left(\int dv\,\sup_X\,|\nabla_v^2\widetilde{W}_{N,t}(X,v)|^2\right)^\frac{1}{10}   \\
&\quad\quad\quad\quad\quad\quad\quad\quad\quad\times\,\left(\int dX\left(\int dv\,|\nabla_v^2 \widetilde{W}_{N,t}(X,v)|\right)^2\right)^{\frac{3}{10}-\frac{2}{5}\delta} \\
&\quad\quad\quad\quad\quad\quad\quad\quad\quad\times\,\left(\int dX\int dv\,|\nabla^2_v \widetilde{W}_{N,t}(X,v)|^2\right)^{\frac{1}{10}+\frac{2}{5}\delta}
\end{split}
 \ee
\end{proof}
%%%%%%%%%%%%%%%%%%%%%
\begin{lemma}\label{lemma:J-32}
For $\delta\in(0,3/4)$ it holds
\be\label{eq:J-32}
\begin{split}
\left(\int dy\,\|\mathcal{J}_{3,2}\|^\frac{5}{2}_{\rm HS}\right)^{\frac{2}{5}}&\leq C\,\sqrt{N}\,\e^{\frac{7}{5}+\frac{3}{5}\delta}\,r^{\frac{1}{2}-\frac{6}{5}\delta}\left(\int dv\,\sup_X\,(1+2X^2)^2|\widetilde{W}_{N,t}(X,v)|^2\right)^\frac{1}{10}\\
&\quad\quad\quad\quad\quad\quad\quad\quad\quad\times\,\left(\int dX\left(\int dv\,(1+2\,X^2)\,|\widetilde{W}_{N,t}(X,v)|\right)^2\right)^{\frac{3}{10}-\frac{2}{5}\delta} \\
&\quad\quad\quad\quad\quad\quad\quad\quad\quad\times\,\left(\int dX\int dv\,(1+2\,X^2)^2|\widetilde{W}_{N,t}(X,v)|^2\right)^{\frac{1}{10}+\frac{2}{5}\delta}  \\
&+C\,\sqrt{N}\,\e^{\frac{17}{5}+\frac{3}{5}\delta}\,r^{\frac{1}{2}-\frac{6}{5}\delta}\left(\int dv\,\sup_X\,|\nabla_v^2\widetilde{W}_{N,t}(X,v)|^2\right)^\frac{1}{10}   \\
&\quad\quad\quad\quad\quad\quad\quad\quad\quad\times\,\left(\int dX\left(\int dv\,|\nabla_v^2 \widetilde{W}_{N,t}(X,v)|\right)^2\right)^{\frac{3}{10}-\frac{2}{5}\delta} \\
&\quad\quad\quad\quad\quad\quad\quad\quad\quad\times\,\left(\int dX\int dv\,|\nabla^2_v \widetilde{W}_{N,t}(X,v)|^2\right)^{\frac{1}{10}+\frac{2}{5}\delta}
\end{split}
\ee
\end{lemma}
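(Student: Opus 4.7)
The plan is to mirror the proof of Lemma \ref{lemma:J-22}, adapting it to the slightly different algebraic structure of $\mathcal{J}_{3,2}$. Observe first that the bound claimed for $\mathcal{J}_{3,2}$ is identical in form to the one for $\mathcal{J}_{2,2}$, so the strategy will be the same; what changes is only the bookkeeping of the $s$-dependence and the placement of the factors $(x-y)$ versus $(x'-y)$.

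First I would use \eqref{eq:mom-variables} to write $(1+x^2)\leq 1+2\bigl(\tfrac{x+x'}{2}\bigr)^2+\e^2\bigl(\tfrac{x-x'}{\e}\bigr)^2$, so that $\mathcal{J}_{3,2}$ splits into a ``main'' piece carrying $(1+2X^2)\widetilde{W}_{N,t}$ and a ``correction'' piece with an extra $\e^2$ in front. In the correction piece I absorb the two factors $(x-x')/\e$ via the identity \eqref{eq:v-der} applied twice, turning them into $\nabla_v^2 \widetilde{W}_{N,t}$ and producing the second summand on the right-hand side of \eqref{eq:J-32}. At this stage the two pieces $\mathcal{J}_{3,2}^{(1)}$, $\mathcal{J}_{3,2}^{(2)}$ carry respectively an $\e^2$ and an $\e^4$, exactly as in Lemma \ref{lemma:J-22}.

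Next I would handle the factor $(x-x')/r$ through the triangular inequality $|x-x'|\leq |x-y|+|x'-y|$, producing two sub-contributions: one in which $(x-x')/r$ is replaced by $(x-y)/r$ and one by $(x'-y)/r$. Combining with the already present $(x'-y)/r$ and with $[2|x-y|^2/(r^2/s)+1]$, every resulting term has the form $\chi_{(r/\sqrt{s},y)}(x)\,\tfrac{|x-y|^j}{(r/\sqrt{s})^j}\,\chi_{(r/\sqrt{1-s},y)}(x')\,\tfrac{|x'-y|^k}{(r/\sqrt{1-s})^k}$ times powers of $s$ and $(1-s)$. By the scaling observation \eqref{eq:scale-chi} each such factor is pointwise bounded by a pure Gaussian, so Proposition \ref{prop:gauss-int} applies and we can absorb them as in Lemma \ref{lemma:J-22}. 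Because $\int_0^1 2s\,ds=1$, I apply Jensen's inequality with the probability measure $2s\,ds$ (or, after the split, $\tfrac{\sqrt{s}}{\sqrt{1-s}}\,ds/2$ whenever the triangular inequality produces a $(x'-y)/r$ factor that forces $(1-s)^{-1/2}$), which keeps the $s$-integrals harmless.

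After the change of variables $X=(x+x')/2$, $X'=(x-x')/\e$ (Jacobian $8\e^3=8/N$) and integration in $y$ using Proposition \ref{prop:gauss-int}, the Hilbert-Schmidt computation reduces, exactly as in \eqref{eq:J-23-2}, to estimating $\int dX'\,e^{-s(1-s)\e^2|X'|^2/r^2}\bigl(1+\tfrac{\e^4|X'|^4}{r^4/[s(1-s)]^2}\bigr)$; interpolating between the integrated bound \eqref{eq:interp-1} (giving $r^{3\gamma}/\e^{3\gamma}$) and the pointwise bound \eqref{eq:interp-2} produces an arbitrary $\gamma\in(0,1)$. Finally, combining with the $L^2_v(L^\infty_x)$-type factor obtained as in \eqref{eq:J-23-1/2} raises the power to the $5/2$, and choosing $\gamma=3/4-\delta$ renders the $s$-integral convergent and yields precisely the exponents $\tfrac{7}{5}+\tfrac{3}{5}\delta$ (resp.\ $\tfrac{17}{5}+\tfrac{3}{5}\delta$) in $\e$ and $\tfrac{1}{2}-\tfrac{6}{5}\delta$ in $r$ claimed in \eqref{eq:J-32}. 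The only mildly delicate step is keeping track that, among the several sub-terms generated by \eqref{eq:mom-variables} and the triangular inequality, every single one admits a measure on $[0,1]$ under which Jensen applies and which, after the interpolation, leaves the same $s(1-s)$-integrability threshold $3\gamma/2-1<1$; this is precisely why $\gamma=3/4-\delta$ with $\delta\in(0,3/4)$ works uniformly for all sub-terms.
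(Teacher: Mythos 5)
Your proposal is correct and follows essentially the same route as the paper: split via \eqref{eq:mom-variables} into a main piece with $(1+2X^2)\widetilde{W}_{N,t}$ and an $\e^2$-correction converted to $\nabla_v^2\widetilde{W}_{N,t}$ by \eqref{eq:v-der}, bound $|x-x'|\leq|x-y|+|x'-y|$, apply Jensen with suitable probability measures in $s$, and reduce to the Gaussian integration of Proposition \ref{prop:gauss-int} plus the interpolation between \eqref{eq:interp-1} and \eqref{eq:interp-2} with $\gamma=3/4-\delta$, exactly as in Lemmas \ref{lemma:J-2k} and \ref{lemma:J-22}. The only cosmetic difference is your choice of measure $2s\,ds$ where the paper bounds the factor $2s$ and uses $ds$; this does not affect the argument.
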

%%%%%%%%%%%%%%%%%%%%%
\begin{proof}
As for the previous terms, we start by splitting the integral into two parts according to \eqref{eq:mom-variables}, thus defining 
\begin{displaymath}
\begin{split}
&\|\mathcal{J}_{3,2}^{(1)}\|_{\rm HS}\\
&:=\sum_{j=0,2}\frac{C\,N\,\e^2}{r^2}\left\{\int dx\int dx'\left|\int_0^1 \frac{2\,ds}{\sqrt{1-s}}\,s\,\chi_{(r/\sqrt{s},y)}(x)\,\frac{(x-y)^j}{r^j/\sqrt{s^j}}\right.\right.\\
&\quad\quad\quad\quad\quad\quad\quad\quad\quad\quad\quad\quad\quad\quad  \left.\left.\times\,\chi_{(r/\sqrt{1-s},y)}(x')\,\frac{(x'-y)}{r/\sqrt{1-s}}\cdot\frac{(x-x')}{r}\right.\right.\\
&\quad\quad\quad\quad\quad\quad\quad\quad\quad\quad\quad\quad\quad\quad  \left.\left. \int dv\left(1+2\left(\frac{x+x'}{2}\right)^2\right)\,\widetilde{W}_{N,t}\left(\frac{x+x'}{2},v\right)\,e^{iv\cdot\frac{(x-x')}{\e}}\right|^2\right\}^{\frac{1}{2}}    
\end{split}
\end{displaymath}
\begin{displaymath}
\begin{split}
&\|\mathcal{J}_{3,2}^{(2)}\|_{\rm HS} \\
&:=\sum_{j=0,2}\frac{C\,N\,\e^4}{r^2}\left\{\int dx\int dx'\left|\int_0^1 \frac{2\,ds}{\sqrt{1-s}}\,s\,\chi_{(r/\sqrt{s},y)}(x)\,\frac{(x-y)^j}{r^j/\sqrt{s^j}}\right.\right.\\
&\quad\quad\quad\quad\quad\quad\quad\quad\quad\quad\quad\quad\quad\quad\quad\quad\quad\quad\quad\quad\quad  \left.\left.\times\,\chi_{(r/\sqrt{1-s},y)}(x')\,\frac{(x'-y)}{r/\sqrt{1-s}}\cdot\frac{(x-x')}{r}\right.\right.\\
&\quad\quad\quad\quad\quad\quad\quad\quad\quad\quad\quad\quad\quad\quad\quad\quad\quad\quad\quad\quad\quad  \left.\left. \int dv\,\nabla_v^2\widetilde{W}_{N,t}\left(\frac{x+x'}{2},v\right)\,e^{iv\cdot\frac{(x-x')}{\e}}\right|^2\right\}^{\frac{1}{2}} 
\end{split}
\end{displaymath}
so that
\be
\|\mathcal{J}_{3,2}\|_{\rm HS}\leq C\|\mathcal{J}_{3,2}^{(1)}\|_{\rm HS}+C\|\mathcal{J}_{3,2}^{(2)}\|_{\rm HS}
\ee
We proceed as in Lemma \ref{lemma:J-22}.%, otherwise the integral in the $s$ variable would diverge. We therefore modify \eqref{eq:triangular} by observing that 
For every $y\in\R^3$ and $s\in(0,1)$
\be\label{eq:convexinterp+1}
%|x-x'|\leq (1+s)|x-y|+(1-s)|x'-y|
|x-x'|\leq |x-y|+|x'-y|
\ee
Thus, plugging \eqref{eq:convexinterp+1} into $\mathcal{J}_{3,2}^{(1)}$ and $\mathcal{J}_{3,2}^{(2)}$ we obtain respectively
\begin{displaymath}
\begin{split}
\mathcal{J}_{3,2}^{(1)}&\leq\sum_{j=0,2}\frac{C\,N\,\e^2}{r^2}\left|\int_0^1 \frac{2\,ds}{\sqrt{1-s}}\,\sqrt{s}\,\chi_{(r/\sqrt{s},y)}(x)\,\frac{|x-y|^{j+1}}{r^{j+1}/\sqrt{s^{j+1}}}\,\chi_{(r/\sqrt{1-s},y)}(x')\,\frac{|x'-y|}{r/\sqrt{1-s}}\right.\\
&\quad\quad\quad\quad\quad\quad\quad\quad\quad\quad\quad\quad\quad\quad  \left. \int dv\left(1+2\left(\frac{x+x'}{2}\right)^2\right)\,\widetilde{W}_{N,s}\left(\frac{x+x'}{2},v\right)\,e^{iv\cdot\frac{(x-x')}{\e}}\right| \\
&+\sum_{j=0,2}\frac{C\,N\,\e^2}{r^2}\left|\int_0^1 ds\,2\,s\,\chi_{(r/\sqrt{s},y)}(x)\,\frac{|x-y|^{j}}{r^{j}/\sqrt{s^{j}}}\,\chi_{(r/\sqrt{1-s},y)}(x')\,\frac{|x'-y|^2}{r^2/(1-s)}\right.\\
&\quad\quad\quad\quad\quad\quad\quad\quad\quad\quad\quad\quad\quad\quad  \left. \int dv\left(1+2\left(\frac{x+x'}{2}\right)^2\right)\,\widetilde{W}_{N,t}\left(\frac{x+x'}{2},v\right)\,e^{iv\cdot\frac{(x-x')}{\e}}\right|   
\end{split}
\end{displaymath}
and
\begin{displaymath}
\begin{split}
\mathcal{J}_{3,2}^{(2)}&\leq\sum_{j=0,2}\frac{C\,N\,\e^4}{r^2}\left|\int_0^1 \frac{2\,ds}{\sqrt{1-s}}\,\sqrt{s}\,\chi_{(r/\sqrt{s},y)}(x)\,\frac{|x-y|^{j+1}}{r^{j+1}/\sqrt{s^{j+1}}}\,\chi_{(r/\sqrt{1-s},y)}(x')\,\frac{|x'-y|}{r/\sqrt{1-s}}\right.\\
&\ \ \quad\quad\quad\quad\quad\quad\quad\quad\quad\quad\quad\quad\quad\quad\quad\quad\quad\quad\quad\quad\quad  \left. \int dv\,\nabla_v^2\widetilde{W}_{N,t}\left(\frac{x+x'}{2},v\right)\,e^{iv\cdot\frac{(x-x')}{\e}}\right| \\
&+\sum_{j=0,2}\frac{C\,N\,\e^4}{r^2}\left|\int_0^1 ds\,2\,s\,\chi_{(r/\sqrt{s},y)}(x)\,\frac{|x-y|^{j}}{r^{j}/\sqrt{s^{j}}}\,\chi_{(r/\sqrt{1-s},y)}(x')\,\frac{|x'-y|^2}{r^2/(1-s)}\right.\\
&\ \ \quad\quad\quad\quad\quad\quad\quad\quad\quad\quad\quad\quad\quad\quad\quad\quad\quad\quad\quad\quad\quad  \left. \int dv\,\nabla_v^2\widetilde{W}_{N,t}\left(\frac{x+x'}{2},v\right)\,e^{iv\cdot\frac{(x-x')}{\e}}\right|
\end{split}
\end{displaymath}
For $k=1,2$, the first sum in $\mathcal{J}_{3,2}^{(k)}$ can be treated as in Lemma \ref{lemma:J-2k} using Jensen's inequality with measure $\frac{\sqrt{s}}{2\sqrt{1-s}}ds$, while we can deal with the second sum in $\mathcal{J}_{3,2}^{(k)}$ by performing Jensen's inequality with measure $ds$ and then proceed by following the same lines in the proof of Lemma \ref{lemma:J-2k}. The key observation to adapt the proof of Lemma \ref{lemma:J-2k} to the terms we are considering now is that the functions
\be\label{eq:scale-j}
\chi_{(r/\sqrt{s},y)}(x)\frac{(x-y)^j}{r^j/\sqrt{s^j}}\quad\mbox{and}\quad\chi_{(r/\sqrt{1-s},y)}(x')\frac{(x'-y)^j}{r^j/\sqrt{(1-s)^j}} 
\ee
scale in the same way for all $j\in\N$. Proposition \ref{prop:gauss-int} allows to proceed as in the proof of Lemma \ref{lemma:J-2k}.\\
To sum up, we collect all the terms and obtain the desired bound \eqref{eq:J-32}.
\end{proof}
%%%%%%%%%%%%%%%%%%%%%
\begin{lemma}\label{lemma:J-34}
For every $\delta\in(0,3/4)$ and $f_t=\widetilde{W}_{N,t}+\e\nabla^2_{v,x}\widetilde{W}_{N,t}+v\cdot\nabla_v\widetilde{W}_{N,t}$, the following bound holds
\be\label{eq:J-34}
\begin{split}
\left(\int dy \|\mathcal{J}_{3,4}\|_{\rm HS}^\frac{5}{2}\right)^{\frac{2}{5}}&\leq C\,\sqrt{N}\,\e^{\frac{7}{5}+\frac{3}{5}\delta}\,r^{\frac{1}{2}-\frac{6}{5}\delta}\left(\int dv\,\sup_X\,(1+2X^2)^2|f_{t}(X,v)|^2\right)^\frac{1}{10}\\
&\quad\quad\quad\quad\quad\quad\quad\quad\quad\times\,\left(\int dX\left(\int dv\,(1+2\,X^2)\,|f_{t}(X,v)|\right)^2\right)^{\frac{3}{10}-\frac{2}{5}\delta} \\
&\quad\quad\quad\quad\quad\quad\quad\quad\quad\times\,\left(\int dX\int dv\,(1+2\,X^2)^2|f_{t}(X,v)|^2\right)^{\frac{1}{10}+\frac{2}{5}\delta}  \\
&+C\,\sqrt{N}\,\e^{\frac{17}{5}+\frac{3}{5}\delta}\,r^{\frac{1}{2}-\frac{6}{5}\delta}\left(\int dv\,\sup_X\,|\nabla_v^2 f_{t}(X,v)|^2\right)^\frac{1}{10}   \\
&\quad\quad\quad\quad\quad\quad\quad\quad\quad\times\,\left(\int dX\left(\int dv\,|\nabla_v^2 f_{t}(X,v)|\right)^2\right)^{\frac{3}{10}-\frac{2}{5}\delta} \\
&\quad\quad\quad\quad\quad\quad\quad\quad\quad\times\,\left(\int dX\int dv\,|\nabla^2_v f_{t}(X,v)|^2\right)^{\frac{1}{10}+\frac{2}{5}\delta}
\end{split}
\ee
\end{lemma}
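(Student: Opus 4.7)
The plan is to follow the same template developed in Lemmas \ref{lemma:J-2k}, \ref{lemma:J-25}, and \ref{lemma:J-32}, adapted to the specific structure of $\mathcal{J}_{3,4}$. This term carries the prefactor $N\e^2/r$, the measure $ds/\sqrt{1-s}$, and the spatial weight $\chi_{(r/\sqrt{1-s},y)}(x')\,(x'-y)/(r/\sqrt{1-s})$ on the $x'$ variable. Since the bracket inside the integrand already lives at order $\e^2$, no further integration by parts in $v$ via \eqref{eq:v-der} is required to extract additional powers of $\e$ beyond those coming from the splitting \eqref{eq:mom-variables}.

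First, I would apply \eqref{eq:mom-variables} to split $(1+x^2)$ into $(1+2((x+x')/2)^2)$ plus $\e^2 ((x-x')/\e)^2$. The first piece leads to the $\e^{7/5+3\delta/5}$ contribution in \eqref{eq:J-34}, while the second piece, after two applications of \eqref{eq:v-der} converting the $(x-x')/\e$ factors into $v$-derivatives of the bracket, produces the $\e^{17/5+3\delta/5}$ contribution. Denoting the resulting two pieces by $\mathcal{J}_{3,4}^{(1)}$ and $\mathcal{J}_{3,4}^{(2)}$, it suffices to estimate each separately.

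Next, I would apply Jensen's inequality in $s$ with the normalized measure $ds/(2\sqrt{1-s})$ (since $\int_0^1 ds/\sqrt{1-s}=2$), then perform the change of variables $X=(x+x')/2$, $X'=(x-x')/\e$ with Jacobian $8\e^3=8/N$. The resulting Gaussian spatial integrals are handled by Proposition \ref{prop:gauss-int}; the scaling observation \eqref{eq:scale-j} ensures that the extra factor $(x'-y)/(r/\sqrt{1-s})$ does not alter the overall $r$-scaling relative to the simpler $\chi_{(r/\sqrt{1-s},y)}(x')$ case. After integrating in $y$ to bring in the $L^{5/2}$ norm of the Hilbert-Schmidt kernel, one is left with an $X'$-integral that can be bounded either by the $L^1$ estimate \eqref{eq:interp-1} or by the $L^\infty$ pointwise bound \eqref{eq:interp-2}.

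Interpolating between these two bounds with the exponent $\gamma = 3/4-\delta$ for $\delta\in(0,3/4)$, exactly as in the proof of Lemma \ref{lemma:J-2k}, and then applying H\"older together with Fourier inversion in $v$ yields the three-factor structure in the right-hand side of \eqref{eq:J-34}: an $L^1_x L^1_v$-type factor with exponent $3/10-2\delta/5$, an $L^2_{x,v}$-type factor with exponent $1/10+2\delta/5$, and an $L^2_v L^\infty_x$-type factor with exponent $1/10$, all weighted by $(1+X^2)^2$ for the first summand and by the identity for the second. The main obstacle is purely bookkeeping: tracking the powers of $r$, $\e$ and $N$ through the chain of inequalities to recover exactly the exponents stated in \eqref{eq:J-34}. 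Since no conceptually new ingredient beyond those in the preceding lemmas is needed, the argument reduces to a careful accounting of prefactors.
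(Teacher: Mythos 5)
Your proposal is correct and follows essentially the same route as the paper: the paper's own proof of this lemma is a one-line reference stating that one repeats the argument of Lemma \ref{lemma:J-2k}, using the scaling observation \eqref{eq:scale-j} and the Gaussian integral estimate of Proposition \ref{prop:gauss-int}, which is exactly the template you spell out (splitting $(1+x^2)$ via \eqref{eq:mom-variables}, using \eqref{eq:v-der} on the $\e^2\,((x-x')/\e)^2$ piece, Jensen with the normalized measure $ds/(2\sqrt{1-s})$, the change of variables \eqref{eq:change-var}, and interpolation between \eqref{eq:interp-1} and \eqref{eq:interp-2} with $\gamma=3/4-\delta$). Your bookkeeping of the two contributions $\e^{7/5+3\delta/5}$ and $\e^{17/5+3\delta/5}$ and of the three-norm structure matches the paper's argument, so no new ingredient is missing.
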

%%%%%%%%%%%%%%%%%%%%%
\begin{proof}
The bound is obtained by repeating the same proof of Lemma \ref{lemma:J-2k}, thanks to the observation \eqref{eq:scale-j} and Proposition \ref{prop:gauss-int}.
\end{proof}
%%%%%%%%%%%%%%%%%%%%%
\begin{lemma}\label{lemma:J-33}
For $\delta\in(0,3/4)$ and the vector valued function $$g_t(x,v)=\nabla_x\widetilde{W}_{N,t}(x,v)+\frac{\e}{2}\nabla^3_{v,x,x}\widetilde{W}_{N,t}(x,v)+\nabla^2_{v,x}\left(v\,\widetilde{W}_{N,t}(x,v)\right),$$ it holds
\begin{displaymath}
\begin{split}
\left(\int dy\,\|\mathcal{J}_{3,3}\|^\frac{5}{2}_{\rm HS}\right)^{\frac{2}{5}}&\leq C\,\sqrt{N}\,\e^{\frac{7}{5}+\frac{3}{5}\delta}\,r^{\frac{3}{2}-\frac{6}{5}\delta}\left(\int dv\,\sup_X\,(1+2X^2)^2|g_{t}(X,v)|^2\right)^\frac{1}{10}\\
&\quad\quad\quad\quad\quad\quad\quad\quad\quad\times\,\left(\int dX\left(\int dv\,(1+2\,X^2)\,|g_{t}(X,v)|\right)^2\right)^{\frac{3}{10}-\frac{2}{5}\delta} \\
&\quad\quad\quad\quad\quad\quad\quad\quad\quad\times\,\left(\int dX\int dv\,(1+2\,X^2)^2|g_t(X,v)|^2\right)^{\frac{1}{10}+\frac{2}{5}\delta}  \\
&+C\,\sqrt{N}\,\e^{\frac{17}{5}+\frac{3}{5}\delta}\,r^{\frac{3}{2}-\frac{6}{5}\delta}\left(\int dv\,\sup_X\,|\nabla_v^2 g_{t}(X,v)|^2\right)^\frac{1}{10}   \\
&\quad\quad\quad\quad\quad\quad\quad\quad\quad\times\,\left(\int dX\left(\int dv\,|\nabla_v^2 g_t(X,v)|\right)^2\right)^{\frac{3}{10}-\frac{2}{5}\delta} \\
&\quad\quad\quad\quad\quad\quad\quad\quad\quad\times\,\left(\int dX\int dv\,|\nabla^2_v g_t(X,v)|^2\right)^{\frac{1}{10}+\frac{2}{5}\delta}
\end{split}
\end{displaymath}
\end{lemma}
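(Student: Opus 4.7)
The plan is to proceed by closely mimicking the proof of Lemma \ref{lemma:J-34}, and more generally that of Lemma \ref{lemma:J-2k}, exploiting the fact that $\mathcal{J}_{3,3}$ differs from $\mathcal{J}_{3,4}$ only by an extra factor $\chi_{(r/\sqrt{s},y)}(x)\frac{(x-y)}{r/\sqrt{s}}$, a modification of the $s$-weight (from $ds/\sqrt{1-s}$ to $\sqrt{s}\,ds/\sqrt{1-s}$), and by the precise form of the integrand in $v$. The key geometric point is that, when integrating in $y$ via Proposition \ref{prop:gauss-int}, the presence of two Gaussian-type factors $\chi_{(r/\sqrt{s},y)}(x)\frac{(x-y)}{r/\sqrt{s}}$ and $\chi_{(r/\sqrt{1-s},y)}(x')\frac{(x'-y)}{r/\sqrt{1-s}}$, instead of only one as in $\mathcal{J}_{3,4}$, yields an additional factor of $r$, which is exactly what upgrades the power $r^{1/2-6\delta/5}$ of Lemma \ref{lemma:J-34} into $r^{3/2-6\delta/5}$. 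The function $g_t$ appearing in the final bound arises after an integration by parts in $v$ via \eqref{eq:v-der}, which reorganises the integrand $\widetilde{W}_{N,t}+\e\nabla^2_{v,x}\widetilde{W}_{N,t}+\nabla_v(v\widetilde{W}_{N,t})$ into the derivative-type structure of $g_t$.

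First, I would use \eqref{eq:mom-variables} to split $\|\mathcal{J}_{3,3}\|_{\rm HS}\leq C\|\mathcal{J}_{3,3}^{(1)}\|_{\rm HS}+C\|\mathcal{J}_{3,3}^{(2)}\|_{\rm HS}$, where the first piece carries the weight $(1+2((x+x')/2)^2)$ and the second picks up an extra $\e^2((x-x')/\e)^2$. Applying \eqref{eq:v-der} to this second piece converts the $(x-x')^2$ factor into $\e^2\nabla_v^2$ acting on the integrand, which explains the two additional powers of $\e$ and the two extra velocity derivatives on $g_t$ that appear in the second block of the claimed bound.

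Second, I would apply Jensen's inequality with the probability measure $\frac{\sqrt{s}}{2\sqrt{1-s}}\,ds$ on $(0,1)$, then perform the change of variables $X=(x+x')/2$, $X'=(x-x')/\e$ with Jacobian $8\e^3=8/N$, and compute the $y$-integral of the two Gaussian factors via Proposition \ref{prop:gauss-int}. This integration produces a factor of order $r^4\,s(1-s)\,e^{-s(1-s)\e^2|X'|^2/r^2}$ up to polynomial corrections in $\e X'/r$; the gain of one additional power of $r$ compared with the single-Gaussian case of Lemma \ref{lemma:J-34} is precisely the point. Integrating the oscillating phase in $X'$ then yields a Dirac mass $\delta(v-v')$, up to correction terms controlled pointwise by the Gaussian factor as in \eqref{eq:interp-1}--\eqref{eq:interp-2}.

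Third, following the interpolation argument of Lemma \ref{lemma:J-2k}, I would interpolate between the pointwise estimate \eqref{eq:interp-2} and the $L^1$-bound \eqref{eq:interp-1} with parameter $\gamma\in(0,1)$, choose $\gamma=3/4-\delta$ with $\delta\in(0,3/4)$ so that $\int_0^1 [s(1-s)]^{1-3\gamma/2}\,ds$ converges, and apply H\"older's inequality together with Proposition \ref{prop:interpolation} to control the mixed $L^1$--$L^2$--$L^\infty$ norms of $g_t$ and $\nabla_v^2 g_t$ appearing on the right-hand side. The main technical difficulty is the bookkeeping needed to match the integrand of $\mathcal{J}_{3,3}$ with the exact expression of $g_t$ after the integration by parts in $v$; once this identification is performed, the argument is a direct adaptation of the techniques already developed in Lemmas \ref{lemma:J-2k}, \ref{lemma:J-25} and \ref{lemma:J-34}, and no new analytical input is required.
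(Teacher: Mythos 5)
Your general template --- splitting via \eqref{eq:mom-variables} and \eqref{eq:v-der}, Jensen's inequality with the normalised $s$-measure, the change of variables \eqref{eq:change-var}, the $y$-integration through Proposition \ref{prop:gauss-int}, and the interpolation between \eqref{eq:interp-1} and \eqref{eq:interp-2} with $\gamma=\frac34-\delta$ --- is the same as the paper's, which reduces this lemma to Lemma \ref{lemma:J-25} and hence to Lemma \ref{lemma:J-2k}. However, the two mechanisms you invoke to reach the \emph{stated} right-hand side do not work. First, the extra power of $r$ you attribute to the presence of the two factors $\chi_{(r/\sqrt{s},y)}(x)\frac{(x-y)}{r/\sqrt{s}}$ and $\chi_{(r/\sqrt{1-s},y)}(x')\frac{(x'-y)}{r/\sqrt{1-s}}$ is not there: Proposition \ref{prop:gauss-int} gives $C\,r^{3}\,s(1-s)\,e^{-s(1-s)|x-x'|^{2}/r^{2}}\bigl(1+\cdots\bigr)$ for \emph{every} $j,k$, i.e.\ these factors are scale invariant (this is precisely the content of \eqref{eq:scale-j}), and the $y$-integration produces $r^{3}$, not the $r^{4}$ you assert. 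The power of $r$ in this family of lemmas is therefore dictated solely by the explicit prefactor: terms with $N\e^{2}/r^{2}$ give $r^{\frac12-\frac65\delta}$, terms with $N\e^{2}/r$ give $r^{\frac32-\frac65\delta}$; since $\mathcal{J}_{3,3}$ as defined in \eqref{eq:J3} carries $4N\e^{2}/r^{2}$, your scheme yields $r^{\frac12-\frac65\delta}$, not $r^{\frac32-\frac65\delta}$. Second, \eqref{eq:v-der} converts powers of $(x-x')$ into $\e\nabla_{v}$, so it can only add velocity derivatives; it cannot reorganise the integrand $\widetilde W_{N,t}+\e\nabla^{2}_{v,x}\widetilde W_{N,t}+\nabla_{v}(v\,\widetilde W_{N,t})$ of $\mathcal{J}_{3,3}$ into $g_{t}$, which carries one additional $x$-derivative on every term; in the paper $g_{t}$ appears only in the bounds for the terms whose integrands already contain $\nabla_{x}$ (namely $\mathcal{J}_{2,5}$ and $\mathcal{J}_{3,4}$), those derivatives having been produced by $\e^{2}\Delta_{x}$ in Lemma \ref{lemma:fund-est}, not by any manipulation at this stage.

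What your two unjustified steps are compensating for is a mismatch between the printed statement and the definitions \eqref{eq:J3}: the estimate your (and the paper's) method actually yields for $\mathcal{J}_{3,3}$ is the one printed as Lemma \ref{lemma:J-34} ($f_{t}$-norms, $r^{\frac12-\frac65\delta}$), while the bound stated here, with $g_{t}$-norms and $r^{\frac32-\frac65\delta}$, is what the method gives for $\mathcal{J}_{3,4}$, whose prefactor is $2N\e^{2}/r$ and whose integrand is literally $g_{t}$; the two conclusions appear to be interchanged, which is immaterial for the final assembly in \eqref{eq:bound-error-norms} but means the printed statement cannot be derived from the printed definition of $\mathcal{J}_{3,3}$ along the route you describe. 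The correct argument is the short reduction the paper intends: treat the $1/r$-prefactor, $g$-type term verbatim as in Lemma \ref{lemma:J-25} (Jensen with the appropriate normalised measure, \eqref{eq:scale-j}, Proposition \ref{prop:gauss-int}, interpolation with $\gamma=\frac34-\delta$), rather than extracting from $\mathcal{J}_{3,3}$ powers of $r$ or $x$-derivatives that are not available.
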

%%%%%%%%%%%%%%%%%%%%%
\begin{proof}
Thanks to \eqref{eq:scale-j} and Proposition \ref{prop:gauss-int}, we follow the same lines in the proof of Lemma \ref{lemma:J-25}, thus achieving the desired bound.
\end{proof}

%%%%%%%%%%%%%%%%%%%%%
%%%%%%%%%%%%%%%%%%%%%
%%%%%%%%%%%%%%%%%%%%%

\subsection{Estimates on the subleading term}\label{subsect:subleading}
%%%%%%%%%%%%%%%%
\begin{proposition}\label{prop:B-tr}
Under the same assumptions of Theorem \ref{thm:trace}, there exists a positive constant $C$ depending on $T$, $\|\widetilde\rho_0\|_{L^1}$ and $(H3)$ such that
$$\tr\ |B_s|\leq C\,N\,\e^2\left[\|\widetilde{W}_{N,s}\|_{H^2_4}+\e\,\|\widetilde{W}_{N,s}\|_{H^3_4}+\e^2\,\|\widetilde{W}_{N,s}\|_{H^4_4}+\e^3\,\|\widetilde{W}_{N,s}\|_{H^5_4}+\e^4\|\widetilde{W}_{N,s}\|_{H^6_4}\right]$$
\end{proposition}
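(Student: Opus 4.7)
The key observation is that the multiplier $\Phi(x,y) := U(x) - U(y) - \nabla U(m)\cdot(x-y)$ in the definition of $B_s$, with $U := \frac{1}{|\cdot|^\alpha}*\widetilde\rho_s$ and $m := (x+y)/2$, is a \emph{symmetric} Taylor remainder around the midpoint: decomposing $U(x) = U(m) + \nabla U(m)(x-m) + R_2(x,m)$ and similarly at $y$, the first-order terms cancel because $x-m=-(y-m)$, and
\[
\Phi(x,y) = R_2(x,m) - R_2(y,m) = \int_0^1(1-t)\bigl[D^2U\bigl(m+\tfrac{t}{2}(x-y)\bigr) - D^2U\bigl(m-\tfrac{t}{2}(x-y)\bigr)\bigr]\Bigl(\tfrac{x-y}{2}\Bigr)^{\otimes 2}dt.
\]
In particular $\Phi$ carries (at least) two factors of $(x-y)$. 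To convert this into an estimate on $\tr|B_s|$, I would use the same weighted-trace trick as in \eqref{eq:trick}: writing $B_s = (1-\e^2\Delta)^{-1}(1+x^2)^{-1}\cdot(1+x^2)(1-\e^2\Delta_x)B_s$ and applying Cauchy--Schwarz in the Hilbert--Schmidt inner product together with \eqref{eq:bound-sqrtN},
\[
\tr|B_s| \;\le\; C\sqrt N\,\bigl\|(1+x^2)(1-\e^2\Delta_x)B_s\bigr\|_{\rm HS},
\]
reducing the proposition to a Hilbert--Schmidt estimate on the weighted kernel of $B_s$.

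Next, I would expand $(1-\e^2\Delta_x)(\Phi\widetilde\omega_{N,s}) = \Phi\widetilde\omega_{N,s} - \e^2(\Delta_x\Phi)\widetilde\omega_{N,s} - 2\e^2\nabla_x\Phi\cdot\nabla_x\widetilde\omega_{N,s} - \e^2\Phi\,\Delta_x\widetilde\omega_{N,s}$ by Leibniz, and handle each of the four pieces with the strategy already used in Subsection~\ref{subsect:dominant}: insert the Weyl representation $\widetilde\omega_{N,s}(x;y) = N\int\widetilde W_{N,s}(m,v)e^{iv(x-y)/\e}dv$, use the identity $(x-y)\,e^{iv(x-y)/\e} = -i\e\,\nabla_v e^{iv(x-y)/\e}$ to trade the factors of $(x-y)$ inherited from $\Phi,\nabla_x\Phi,\Delta_x\Phi$ for $v$-derivatives of $\widetilde W_{N,s}$ (each trade yielding a factor of $\e$), and change variables to $X = m$, $V = (x-y)/\e$ with Jacobian $\e^3 = N^{-1}$. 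Plancherel in $V$ then bounds the HS norms by weighted $L^2$-norms of $\widetilde W_{N,s}$; the polynomial prefactor $(1+x^2)^2$ is absorbed via \eqref{eq:mom-variables} into $(1+X^2+v^2)^4$, matching the weight of the norm $H^k_4$. Each time $\e^2\Delta_x$ hits $\widetilde\omega_{N,s}$ it produces factors $v/\e$ in the Fourier representation that compensate the $\e^2$ front and cost one extra $v$-weight and one extra $v$-derivative on $\widetilde W_{N,s}$. The five terms in the statement correspond to the possible ways of distributing the Laplacian between $\Phi$ and $\widetilde\omega_{N,s}$ combined with the order up to which the Taylor representation of $\Phi$ (resp. of $\nabla_x\Phi,\Delta_x\Phi$) is developed: each extra order costs one power of $\e$ and one $v$-derivative on $\widetilde W_{N,s}$, producing the progression $\e^k\|\widetilde W_{N,s}\|_{H^k_4}$ for $k=2,\dots,6$. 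The overall $N$-factor combines the $\sqrt N$ from the trace trick with the $\sqrt N$-scaling of the HS norm via $\e^3=N^{-1}$.

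The principal obstacle is that the spatial derivatives $D^kU$, $k\ge 2$, appearing in the Taylor formulas for $\Phi$, $\nabla_x\Phi$ and $\Delta_x\Phi$ are singular (since $V(x)=|x|^{-\alpha}$ is) and not in $L^\infty$, so they cannot be estimated pointwise. I would deal with this exactly as in Subsection~\ref{subsect:dominant} by plugging the Fefferman--de la Llave representation \eqref{eq:FDL} into $U$,
\[
U(x) = C\int_0^\infty \frac{dr}{r^{1+\alpha}}\,(\chi_{(r,\cdot)}*\widetilde\rho_s)(x),
\]
and carrying all the HS estimates inside the $r$-integral, where $\chi_{(r,\cdot)}$ is the smooth Gaussian of width $r$. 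Every spatial derivative of $\chi_{(r,\cdot)}$ costs a factor $1/r$, but the localization of the Gaussian to a ball of radius $r$ together with the $(x-y)$-factors already produced by $\Phi$ and its derivatives yields compensating powers of $r$; splitting $\int_0^\infty dr = \int_0^k + \int_k^\infty$ as in Lemmas~\ref{lemma:J-2k}--\ref{lemma:J-33} makes the $r$-integrals convergent for $\alpha\in(0,1/2)$, the tail at $r\to\infty$ being controlled by the uniform bound on $\|\widetilde\rho_s\|_{L^{5/3}}$ granted by the kinetic energy hypothesis and the polynomial factors $|x-y|^j$ killing the singularity at $r\to 0$. Combining these ingredients with the change-of-variables and Plancherel steps above yields the announced bound.
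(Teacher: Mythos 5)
Your proposal is correct and follows essentially the same route as the paper: the weighted trace trick of \eqref{eq:trick} reducing to a Hilbert--Schmidt bound on $(1+x^2)(1-\e^2\Delta)B_s$, a Taylor expansion of the midpoint remainder $U_s(x)-U_s(x')-\nabla U_s\!\left(\tfrac{x+x'}{2}\right)\cdot(x-x')$ to extract two factors of $(x-x')$ in terms of $D^2U_s$, the Fefferman--de la Llave representation \eqref{eq:FDL} to handle the singular derivatives of $U_s$, the $(x-x')\,e^{iv(x-x')/\e}=-i\e\nabla_v e^{iv(x-x')/\e}$ trade to convert factors of $(x-x')$ into powers of $\e$ and $v$-derivatives, the change of variables \eqref{eq:change-var}, Plancherel in $X'$, and the splitting $\int_0^\infty dr=\int_0^k+\int_k^\infty$. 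The only difference is cosmetic: you write the remainder as $R_2(x,m)-R_2(y,m)$ in symmetric form around the midpoint, whereas the paper writes it as $\int_0^1 d\lambda\,(\lambda-\tfrac12)\int_0^1 d\mu\,\partial_i\partial_j U_s(\cdots)(x-x')_i(x-x')_j$; both yield the same two powers of $(x-x')$ and second derivatives $\partial_i\partial_j U_s$. Likewise, you expand $(1-\e^2\Delta_x)$ via Leibniz into four pieces, while the paper's \eqref{eq:termsB} further splits these (using the Weyl representation) into seven terms $\widetilde B_1,\dots,\widetilde B_7$ before estimating each; your grouping compresses this bookkeeping but reaches the same progression $\e^k\|\widetilde W_{N,s}\|_{H^k_4}$, $k=2,\dots,6$.
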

%%%%%%%%%%%%%%
\begin{proof}
We proceed as done in Proposition \ref{prop:1term-tr} Eq. \eqref{eq:trick} to get
\begin{equation}\label{eq:tr-B} 
\tr \, |B_s| \leq \| (1-\eps^2\Delta)^{-1} (1+x^2)^{-1} \|_{\rm HS} \, \| (1+x^2) (1-\eps^2 \Delta) B_s \|_{\rm HS} \leq C \sqrt{N} \| (1+x^2) (1-\eps^2\Delta) B_s \|_{\rm HS} 
\end{equation}
We denote  by $U_s$ the convolution of the interaction with the spatial density at time $s$
$$U_s:= \frac{1}{|\cdot|^\alpha}* \widetilde{\rho}_s.$$ 
With this notation, the kernel of the operator $\widetilde{B} := (1-\eps^2 \Delta) B_s$ reads 
\be\label{eq:termsB}
\begin{split}
\widetilde{B} &(x;x')\\
 &= N \left[U_s(x) - U_s(x') - \nabla U_s\left(\frac{x+x'}{2}\right)\cdot(x-x')\right]\int \widetilde{W}_{N,s}\left(\frac{x+x'}{2},v\right)e^{i\,v\cdot\frac{(x-x')}{\e}}dv   \\
& - N\e^2\left[\Delta U_s(x) - \frac{1}{4}\Delta \nabla U_s\left(\frac{x+x'}{2}\right)\cdot(x-x') - \frac{1}{2} \Delta U_s \left(\frac{x+x'}{2}\right)\right] \int \widetilde{W}_{N,s}\left(\frac{x+x'}{2},v\right)e^{i\,v\cdot\frac{(x-x')}{\e}}dv \\
& - \frac{N\e^2}{4} \left[U_s(x) - U_s(x') - \nabla U_s\left(\frac{x+x'}{2}\right)\cdot(x-x')\right] \int (\Delta_1 \widetilde{W}_{N,s}) \left(\frac{x+x'}{2},v\right)e^{i\,v\cdot\frac{(x-x')}{\e}}dv \\
& + N \left[U_s(x) - U_s(x') - \nabla U_s\left(\frac{x+x'}{2}\right)\cdot(x-x')\right] \int \widetilde{W}_{N,s}\left(\frac{x+x'}{2},v\right) v^2 e^{i\,v\cdot \frac{(x-x')}{\e}}dv \\
& - \frac{N\e^2}{2} \left[\nabla U_s(x) - \frac{1}{2} \nabla^2 U_s\left(\frac{x+x'}{2}\right) (x-x') - \nabla U_s\left(\frac{x+x'}{2}\right)\right] \int (\nabla_1 \widetilde{W}_{N,s}) \left(\frac{x+x'}{2},v\right) e^{i\,v\cdot \frac{(x-x')}{\e}}dv \\
& - N\e \left[\nabla U_s(x) - \frac{1}{2} \nabla^2 U_s\left(\frac{x+x'}{2}\right)(x-x') - \nabla U_s\left(\frac{x+x'}{2}\right)\right] \int \widetilde{W}_{N,s}\left(\frac{x+x'}{2},v\right) v e^{i\,v\cdot \frac{(x-x')}{\e}}dv \\
& - N\e \left[U_s(x) - U_s(x') - \nabla U_s\left(\frac{x+x'}{2}\right)\cdot(x-x') \right] \int (v\cdot \nabla_1 \widetilde{W}_{N,s}) \left(\frac{x+x'}{2},v\right) e^{i\,v\cdot \frac{(x-x')}{\e}}dv\\
 & =: \sum_{j=1}^7 \widetilde{B}_j (x;x')  
\end{split}
\ee
where $\nabla_1$ and $\Delta_1$ stand for derivatives with respect to the first variable.

The terms $\widetilde B_1, \widetilde B_4, \widetilde B_6, \widetilde B_7$ come with powers of $\e$ which are not enough to our purpose (actually, we need in the end $\e^a$ where $a>1$). For this reason we write  
\begin{equation*}
\begin{split} 
&U_s (x) - U_s (x') - \nabla U_s \left( \frac{x+x'}{2} \right) \cdot (x-x') \\ 
&= \int_0^1 d\lambda \left[ \nabla U_s \left(\lambda x + (1-\lambda) x'\right) - \nabla U_s \left((x+x')/2\right) \right] \cdot (x-x')  \\ 
&= \sum_{i,j=1}^3\int_0^1 d\lambda\left(\lambda-\frac{1}{2}\right) \int_0^1 d\mu\,\partial_i \partial_j U_s\left(\mu (\lambda x + (1-\lambda) x') + (1-\mu) \frac{(x+x')}{2}\right) (x-x')_i (x-x')_j.
\end{split} 
\end{equation*}
Being $U_s$ a convolution, we can perform derivatives on $\widetilde{\rho}_s$, that is bounded thanks to Assumptions $(H1)-(H5)$. Therefore, the generalised Feffermann - de la Llave representation formula for $\frac{1}{\ |\,\cdot\,|^\alpha}$ leads to
\be\label{eq:U-taylor}
\begin{split}
U_s &(x) - U_s (x') - \nabla U_s \left( \frac{x+x'}{2} \right) \cdot (x-x') \\
&= \sum_{i,j=1}^3\int_0^1 d\lambda\,\left(\lambda-\frac{1}{2}\right) \int_0^1 d\mu\,\int_0^\infty \frac{dr}{r^{1+\alpha}}\\
&\ \times\int dy\,\chi_{(r,y)}(\mu (\lambda x + (1-\lambda) x') + (1-\mu) (x+x')/2)\partial_i\partial_j\widetilde{\rho_s}(y)\, (x-x')_i (x-x')_j.
\end{split}
\ee
Plugging \eqref{eq:U-taylor} into the definition of $\widetilde{B}_1$ and using twice Eq. \eqref{eq:v-der} and Young inequality, we get
\begin{equation*}
\begin{split}
|\widetilde{B}_1&(x;x')|\\
\leq & C\,N\,\e^2\sum_{i,j=1}^3\int_0^1d\lambda\,\left|\lambda-\frac{1}{2}\right|\int_0^1 d\mu\left| \int_0^\infty\frac{dr}{r^{1+\alpha}}\right.\\
\times&\left.\int dy\,\chi_{(r,y)}(\mu(\lambda x+(1-\lambda)x')+(1-\mu)(x+x')/2)\partial^2_{i,j}\widetilde{\rho}_s(y)\int dv\,\partial^2_{v_i,v_j}\widetilde{W}_{N,s}\left(\frac{x+x'}{2},v\right)e^{iv\cdot\frac{(x-x')}{\e}} \right|
\end{split}
\end{equation*}
We aim to bound
\be
\begin{split}
\|(1+x^2)&\widetilde{B}_1\|_{\rm HS}^2\\
=& N\e^4\int dX\int dX'\left[1+X^2+\e^2 (X')^2\right]^2\left|\sum_{i,j=1}^3\int_0^1d\lambda\,\left(\lambda-\frac{1}{2}\right)\int_0^1 d\mu \int_0^\infty\frac{dr}{r^{1+\alpha}}\right.\\
\times&\left.\int dy\,\chi_{(r,y)}(X+\e\mu(\lambda-1/2)X')\partial^2_{v_i,v_j}\widetilde{\rho}_s(y)\int dv\,\partial^2_{v_i,v_j}\widetilde{W}_{N,s}\left(X,v\right)e^{iv\cdot X'} \right|^2
\end{split}
\ee
where we performed the change of variables \eqref{eq:change-var}.
For a fixed $k>0$, we split the integral in the $r$ variable into two parts, $r\in(0,k)$ and $r\in(k,\infty)$, and we estimate them separately. Thus, for $r\in(0,k)$ we use Young inequality, recall that Assumptions $(H1)-(H5)$ imply $\|\nabla^2\rho_s\|_{L^\infty}\leq C$ and then integrate in the $y$ variable to extract $r^3$ which cancels the singularity, thus leading to the bound 
\be\label{eq:HS-B1-r0}
CN\e^4\int dX\int dv(1+X^2)^2|\nabla^2_v\widetilde{W}_{N,s}(X,v)|^2+CN\e^8\int dX\int dv\,|\nabla^4_v\widetilde{W}_{N,s}(X,v)|^2
\ee
where $C$ depends on $\|\nabla^2\widetilde{\rho}_s\|_{L^\infty}$.\\
For $r\in(k,\infty)$, we integrate by parts in the $y$ variable twice and recall that  $e^{-|z-y|^2/r^2}(1+|z-y|^2/r^2)\leq C$ for every $z\in\R^3$. Since $\widetilde{\rho}_s\in L^1(\R^3)$ we get the bound
\be\label{eq:HS-B1-rinf}
CN\e^4\int dX\int dv(1+X^2)^2|\nabla^2_v\widetilde{W}_{N,s}(X,v)|^2+CN\e^8\int dX\int dv\,|\nabla^4_v\widetilde{W}_{N,s}(X,v)|^2
\ee
where $C$ depends on $\|\widetilde{\rho}_s\|_{L^1}$.\\
Whence, gathering together the two estimates \eqref{eq:HS-B1-r0}, \eqref{eq:HS-B1-rinf}, we get
\be\label{eq:bound-HS-tildeB1}
\|(1+x^2)\widetilde{B}_1\|_{\rm HS}
\leq C\sqrt{N}\e^2\|\widetilde{W}_{N,s}\|_{H_2^2}
+C\sqrt{N}\e^4\|\widetilde{W}_{N,s}\|_{H^4}
\ee
where $C=C(\|\widetilde{\rho}_s\|_{L^1},\|\nabla^2\widetilde{\rho}_s\|_{L^\infty})$.

The Hilbert-Schmidt norms $\|(1+x^2)\widetilde{B}_3\|_{\rm HS}$, $\|(1+x^2)\widetilde{B}_4\|_{\rm HS}$ and $\|(1+x^2)\widetilde{B}_7\|_{\rm HS}$ can be estimated in the exact same way, thus obtaining 
\be
\|(1+x^2)\widetilde{B}_3\|_{\rm HS}\leq C\sqrt{N}\e^4\|\widetilde{W}_{N,s}\|_{H_4^4}
+C\sqrt{N}\e^6\|\widetilde{W}_{N,s}\|_{H_4^6}
\ee
\be
\|(1+x^2)\widetilde{B}_4\|_{\rm HS}\leq C\sqrt{N}\e^2\|\widetilde{W}_{N,s}\|_{H_4^2}
+C\sqrt{N}\e^4\|\widetilde{W}_{N,s}\|_{H_4^4}
\ee
\be
\|(1+x^2)\widetilde{B}_7\|_{\rm HS}\leq C\sqrt{N}\e^3\|\widetilde{W}_{N,s}\|_{H_3^2}
+C\sqrt{N}\e^5\|\widetilde{W}_{N,s}\|_{H_2^5}
\ee
We now estimate $\widetilde{B}_6$, in which we have to deal with higher order derivatives of $U$. Proceeding as for $\widetilde{B}_1$, we first use \eqref{eq:U-taylor} and then divide the integral in the $r$ variables into two parts:
\be\label{eq:tildeB6}
\begin{split}
|\widetilde{B}_6&(x;x')|\\
\leq & C\,N\,\e^3\int_0^1d\lambda\,\left|\lambda-1/2\right|\int_0^1 d\mu\left| \int_0^k\frac{dr}{r^{1+\alpha}}\right.\\
&\quad\quad\quad\quad\quad\quad\times\left.\int dy\,\nabla\chi_{(r,y)}(\mu(\lambda x+(1-\lambda)x')+(1-\mu)(x+x')/2)\nabla^2\widetilde{\rho}_s(y)\right.\\
&\quad\quad\quad\quad\quad\quad\times\left.\int dv\,\nabla^2_{v}\widetilde{W}_{N,s}\left(\frac{x+x'}{2},v\right)e^{iv\cdot\frac{(x-x')}{\e}} \right|\\
+&C\,N\,\e^3\int_0^1d\lambda\,\left|\lambda-1/2\right|\int_0^1 d\mu\left| \int_k^\infty\frac{dr}{r^{1+\alpha}}\right.\\
&\quad\quad\quad\quad\quad\quad\times\left.\int dy\,\nabla^3\chi_{(r,y)}(\mu(\lambda x+(1-\lambda)x')+(1-\mu)(x+x')/2)\widetilde{\rho}_s(y)\right.\\
&\quad\quad\quad\quad\quad\quad\times\left.\int dv\,\nabla^2_{v}\widetilde{W}_{N,s}\left(\frac{x+x'}{2},v\right)e^{iv\cdot\frac{(x-x')}{\e}} \right|\\
\end{split}
\ee
where in the second term we have integrated by parts twice in the $y$ variable. 

We aim to bound
$\|(1+x^2)\widetilde{B}_6\|_{\rm HS}^2$. Therefore we consider the first term in \eqref{eq:tildeB6}, perform the change of variables \eqref{eq:change-var} and choose $k$ so that $\int_0^k r^{-\alpha}dr=1$. Then we can apply Young inequality with measure  $r^{-\alpha}dr$ and we get the bound
\be\label{eq:HS-B6-r0}
\begin{split}
C\,N\,\e^6&\int dX\int dX'[1+X^2+\e^2(X')^2]^2\int_0^1d\lambda\,|\lambda-1/2|^2\int_0^1 d\mu \int_0^{k}\frac{dr}{r^\alpha}\frac{1}{r^4}\\
&\quad\quad\times\int dy\,\chi_{(r,y)}(X+\e\mu(\lambda-1/2)X')\frac{|X+\e\mu(\lambda-1/2)X'-y|}{r}\\
&\quad\quad\times\int dy'\,\chi_{(r,y')}(X+\e\mu(\lambda-1/2)X')\frac{|X+\e\mu(\lambda-1/2)X'-y'|}{r}\\
&\quad\quad\times\iint dv\,dv'\,\nabla^2_{v}\widetilde{W}_{N,s}(X,v)\,\nabla^2_{v'}\widetilde{W}_{N,s}(X,v')\,e^{i(v-v')\cdot X'} \\
\leq C\,N&\,\e^6\|\widetilde{W}_{N,s}\|_{H^2_2}^2+C\,N\,\e^{10}\|\widetilde{W}_{N,s}\|_{H^4}^2
\end{split}
\ee
where $C$ depends on $\|\nabla^2\widetilde{\rho}_s\|_{L^\infty}$.\\
For $r\in(k,\infty)$, we perform the change of variables \eqref{eq:change-var} and recall that  $e^{-|z-y|^2/r^2}(|z-y|^k/r^k)\leq C$ for every $z\in\R^3$ and $k\in\N$. Since $\widetilde{\rho}_s\in L^1(\R^3)$ we get the bound
\be\label{eq:HS-B6-rinf}
C\,N\,\e^6\|\widetilde{W}_{N,s}\|_{H^2_2}^2+C\,N\,\e^{10}\|\widetilde{W}_{N,s}\|_{H^4}^2
\ee
where $C$ depends on $\|\widetilde{\rho}_s\|_{L^1}$.\\
Gathering together the two estimates \eqref{eq:HS-B6-r0}, \eqref{eq:HS-B6-rinf}, we get
\be\label{eq:bound-HS-tildeB6}
\|(1+x^2)\widetilde{B}_6\|_{\rm HS}
\leq C\sqrt{N}\e^3\|\widetilde{W}_{N,s}\|_{H_2^2}
+C\sqrt{N}\e^5\|\widetilde{W}_{N,s}\|_{H^4}
\ee
where $C=C(\|\widetilde{\rho}_s\|_{L^1},\|\nabla^2\widetilde{\rho}_s\|_{L^\infty})$.

The norms $\|(1+x^2)\widetilde{B}_j\|_{\rm HS}$, $j=2,5$, can be dealt analogously, thus obtaining
\be\label{eq:bound-HS-tildeB6}
\|(1+x^2)\widetilde{B}_2\|_{\rm HS}
\leq C\sqrt{N}\e^4\|\widetilde{W}_{N,s}\|_{H_2^2}
+C\sqrt{N}\e^6\|\widetilde{W}_{N,s}\|_{H^4}
\ee
and 
\be\label{eq:bound-HS-tildeB6}
\|(1+x^2)\widetilde{B}_5\|_{\rm HS}
\leq C\sqrt{N}\e^4\|\widetilde{W}_{N,s}\|_{H_2^4}
+C\sqrt{N}\e^6\|\widetilde{W}_{N,s}\|_{H^6}
\ee
where $C=C(\|\widetilde{\rho}_s\|_{L^1},\|\nabla^2\widetilde{\rho}_s\|_{L^\infty})$.

The final result is:
\be \| (1 + x^{2}) \widetilde{B} \|_\text{HS} \leq C \sqrt{N} \left[ \eps^2 \| \widetilde{W}_{N,s} \|_{H^2_{4}} + \eps^3  \| \widetilde{W}_{N,s} \|_{H^3_{4}} + \eps^4 \| \widetilde{W}_{N,s} \|_{H^4_{4}} + \eps^5 \| \widetilde{W}_{N,s} \|_{H^5_4}+\eps^6 \| \widetilde{W}_{N,s} \|_{H^6_4} \right] 
\ee
Assumptions $(H1)-(H5)$ imply
\be\label{eq:estimateB}
\tr \, |B_s| \leq C  N \eps^2 \left[ \|W_N\|_{H_4^2}  + \eps \| W_N \|_{H^3_{4}} + \eps^2 \| W_N \|_{H^4_4} + \eps^3 \| W_N \|_{H_4^5}+\e^6\|W_N\|_{H_4^6} \right]
\ee
where $C$ depends on $T$, $\|\widetilde\rho_0\|_{L^1}$ and assumption $(H3)$ for $k=0$ and $l=3$.

\end{proof}

\subsection{Theorem \ref{thm:trace}: end of the proof}\label{subsect:end}

To conclude we apply Proposition \ref{prop:vlasov-pde} to bound the weighted Sobolev norms appearing in the estimates proved in Proposition \ref{prop:1term-tr} and Proposition \ref{prop:B-tr} in terms of  the initial data $W_N$ and constants depending on $T$. Let us denote by $C_i'$, for $i=1,\dots,8$, the bounds on $C_i$ (defined in Proposition \ref{prop:1term-tr}) obtained through Proposition \ref{prop:vlasov-pde}. Analogously, let us denote by $D_i=\|\widetilde{W}_{N,s}\|_{H_4^{2+i}}$ for $i=0,\dots,4$, and by $D_i'$ the bounds on $D_i$ obtained through Proposition \ref{prop:vlasov-pde}. Hence, gathering together Proposition \ref{prop:1term-tr} and Proposition \ref{prop:B-tr}, Lemma \ref{lem:est-tr} leads to
\begin{equation*}
\begin{split}
\tr\ |\omega_{N,t}-\widetilde{\omega}_{N,t}|\leq& C_1'\int_0^t \tr\ |\omega_{N,s}-\widetilde{\omega}_{N,s}|\\
&+ N\,\e^{\frac{2}{5}}\left[C_2'\e^\frac{3}{5}+C_3'\e^{\frac{3}{5}\delta}+C_4'\e^{2+\frac{3}{5}}+C_5'\e^{\frac{9}{10}}+C_6'\e^\frac{7}{5}+C_7'\e^{\frac{29}{10}}+C_8'\e^\frac{17}{5}\right.\\
&\quad\quad\quad\quad\quad\quad\quad\quad\quad\quad+\left. D_0'\e^\frac{3}{5}+D_1'\e^\frac{8}{5}+D_2'\e^\frac{13}{5}+D_3'\e^\frac{18}{5}+D_4'\e^\frac{23}{5}\right]
\end{split}
\end{equation*}
Hence, applying Gr\"{o}nwall lemma  we get
\begin{equation*}
\begin{split}
\tr\ |\omega_{N,t}-\widetilde{\omega}_{N,t}|\leq e^{C_1't}\,N\,\e^\frac{2}{5}&\left[C_2'\e^{\frac{3}{5}}+C_3'\e^{\frac{3}{5}\delta}+C_4'\e^{2+\frac{3}{5}\delta}+C_5'\e^{\frac{9}{10}}+C_6'\e^{\frac{7}{5}}+C_7'\e^{\frac{29}{10}}+C_8'\e^{\frac{17}{5}}\right.\\
&\quad\quad\quad\quad\quad\quad\quad\quad\left.+D_0'\e^{\frac{3}{5}}+D_1'\e^{\frac{8}{5}}+D_2'\e^{\frac{13}{5}}+D_3'\e^{\frac{18}{5}}+D_4'\e^{\frac{23}{5}}     \right]
\end{split}
\end{equation*}
which concludes the proof.

\section{Hilbert-Schmidt norm convergence: proof of Theorem \ref{thm:HS}}\label{section:HS}

Recalling \eqref{eq:omega-omegatilde} and taking its Hilbert-Schmidt norm, we easily obtain
\be
\begin{split}
\|\omega_{N,t}-\tilde{\omega}_{N,t}\|_{\rm HS}&\leq\frac{1}{\e}\int_0^t ds\,\int dy\,|\rho_s(y)-\tilde{\rho}_s(y)|\int\frac{dr}{r^{1+\alpha}}\|[\chi_{r,y},\tilde{\omega}_{N,s}]\|_{\rm HS}+\frac{1}{\e}\int_0^t ds\,\|B_s\|_{\rm HS}\\
&\leq \frac{1}{N\e}\int_0^t ds\,\tr\ |\omega_{N,s}-\tilde{\omega}_{N,s}|\int\frac{dr}{r^{1+\alpha}}\|[\chi_{r,y},\tilde{\omega}_{N,s}]\|_{\rm HS}+\frac{1}{\e}\int_0^t ds\,\|B_s\|_{\rm HS}
\end{split}
\ee
where we have used Proposition \ref{prop:FDLL} in the first inequality and \eqref{eq:L1-tr} in the second bound. For $r$ close to zero we use Lemma \ref{lemma:fund-est}, which implies  
\be\label{eq:commutator-HS}
\|[\chi_{r,y},\widetilde{\omega}_{N,s}]\|_{\rm HS}\leq \|\mathcal{J}_1\|_{\rm HS}\leq \sqrt{N}\e\sqrt{r}(C+C'\e^2)
\ee
where in the second inequality we used Lemma \ref{lemma:J1} and $C,\,C'$ are  positive constants depending only on $T>0$ and on the initial data $W_N$. For $r$ at infinity we bound $\|\mathcal{J}_1\|_{\rm HS}$ as in \eqref{eq:J1-infty}. Moreover, the Hilbert-Schmidt norm of $B_s$ in \eqref{eq:HS-esti} can be handled following the same steps in the proof of Proposition \ref{prop:B-tr}. Of course here we are in a favourable situation because we have two derivatives less than in the case of Proposition \ref{prop:B-tr} and no spatial moments to control, thus the bound on $\|B_s\|_{\rm HS}$ basically reduced in a bound on $\|B_1\|_{\rm HS}$ (where we have used the same notations as in the proof of Proposition \ref{prop:B-tr}). 

The assumptions on $W_N$ allow us to apply Theorem \ref{thm:trace}  and to get
\be\label{eq:HS-esti}
\begin{split}
\|\omega_{N,t}-\widetilde{\omega}_{N,t}\|_{\rm HS}\leq e^{C_1't}\,\sqrt{N}\,\e^{\frac{2}{5}}&\left[ C_2'\e^{\frac{3}{5}}+C_3'\e^{\frac{3}{5}\delta}+C_4'\e^{2+\frac{3}{5}\delta}+C_5'\e^{\frac{9}{10}}+C_6'\e^{\frac{7}{5}}+C_7'\e^{\frac{29}{10}}+C_8'\e^{\frac{17}{5}}\right.\\
&\quad+D_0'\e^{\frac{3}{5}}+D_1'\e^{\frac{8}{5}}+D_2'\e^{\frac{13}{5}}+D_3'\e^{\frac{18}{5}}+D_4'\e^{\frac{23}{5}}\\
 &\quad+E_1\e^{4+\frac{3}{5}\delta}+E_2\e^{\frac{49}{10}}+E_3\e^{\frac{27}{5}}+E_4\e^{\frac{13}{5}}+E_5'\e^{\frac{18}{5}}\\
 &\quad\left.+E_6'\e^\frac{23}{5}+E_7'\e^\frac{28}{5}+E_8'\e^\frac{33}{5}
     \right]
\end{split}
\ee
where $C_i'$, $D_j'$, $E_i'$, $i=1,\dots,8$, $j=0,\dots,4$,  are constants depending on $T>0$, on the assumptions on the initial data $W_N$ and its derivatives up to order 6.

\section{$L^2$-norm convergence: proof of Theorem \ref{thm:L2}}\label{section:L2}

To estimate $\|W_{N,t}-W_t\|_{2}$ we use $\widetilde{W}_{N,t}$, solution to \eqref{eq:Valpha} with initial data $W_N$, as an intermediate step. Indeed, by triangular inequality we have 
\be\label{eq:l2-estimate}
\| W_{N,t}-W_t\|_{L^2}\leq \|W_{N,t}-\widetilde{W}_{N,t}\|_{L^2}+\|\widetilde{W}_{N,t}-W_t\|_{L^2}
\ee
Since $\|W_{N,t}-\widetilde{W}_{N,t}\|_{L^2}=\frac{C}{\sqrt{N}}\|\omega_{N,t}-\widetilde{\omega}_{N,s}\|_{\rm HS}$, the first term on the r.h.s. of \eqref{eq:l2-estimate} is bounded by applying Theorem \ref{thm:HS}, that is
\be\label{eq:l2-1}
\begin{split}
\|W_{N,t}-\widetilde{W}_{N,t}\|_{L^2}\leq  e^{C_1't}\,\e^{\frac{2}{5}}&\left[ C_2'\e^{\frac{3}{5}}+C_3'\e^{\frac{3}{5}\delta}+C_4'\e^{2+\frac{3}{5}\delta}+C_5'\e^{\frac{9}{10}}+C_6'\e^{\frac{7}{5}}+C_7'\e^{\frac{29}{10}}+C_8'\e^{\frac{17}{5}}\right.\\
&\quad+D_0'\e^{\frac{3}{5}}+D_1'\e^{\frac{8}{5}}+D_2'\e^{\frac{13}{5}}+D_3'\e^{\frac{18}{5}}+D_4'\e^{\frac{23}{5}}\\
 &\quad+E_1\e^{4+\frac{3}{5}\delta}+E_2\e^{\frac{49}{10}}+E_3\e^{\frac{27}{5}}+E_4\e^{\frac{13}{5}}+E_5'\e^{\frac{18}{5}}\\
 &\quad\left.+E_6'\e^\frac{23}{5}+E_7'\e^\frac{28}{5}+E_8'\e^\frac{33}{5}
     \right].
\end{split}
\ee 
The second term on the r.h.s. of \eqref{eq:l2-estimate} can be handled using a stability argument for solutions of the Vlasov Eq. \eqref{eq:Valpha} satisfying assumptions $(H1)-(H5)$. Such a statement has been proved in the context of regular interactions in \cite{BPSS}. In the case of singular potentials minor modifications of such a proof lead to
\be\label{eq:l2-2}
\|\widetilde{W}_{N,t}-W_t\|_{L^2}\leq C(k_{N,1}+k_{N,2})
\ee
where $C$ is a positive constant depending on $T>0$ and on the initial data.

The key observation to obtain such a bound consists in noticing that  $(H3)$  implies boundedness on derivatives of the Lagrangian flow associated to \eqref{eq:Valpha}, as proved in Appendix \ref{appendix:regularity} Remark \ref{rk:v-der-cond}.

Gathering together \eqref{eq:l2-1} and \eqref{eq:l2-2}, we obtain the desired bound. 

\appendix

\section{Regularity estimates for the Vlasov equation}\label{appendix:regularity}

We will treat the Vlasov Eq. with inverse power law potential $1/|x|^{\alpha}$, $\alpha\in(0,1/2)$, as a Vlasov-Poisson system in which the Poisson Eq. is replaced by the $p$-Poisson Eq. with $p=\frac{3-\alpha}{2}>1$. This observation allows to reproduce entirely the proofs in \cite{LP} up to minor modifications. \\
To this end, we rephrase here in an extensive way the assumptions of Theorem \ref{thm:trace}:
\begin{itemize}
\item[ ] {\bf Assumption 1.} (Existence condition) $\tilde{W}_{N,0}\geq 0$, $\tilde{W}_{N,0}\in L^1\cap L^\infty(\R^3\times\R^3)$ and $\mathcal{H}_0$ (see \eqref{eq:energy}) finite;
\item[ ] {\bf Assumption 2.} (Existence condition) There exists $m_0>\frac{3\alpha}{2-\alpha}$ such that if $m<m_0$
\be\label{eq:hyp-moments}
\iint |v|^m\,\tilde{W}_{N,0}(x,v)\,dx\,dv<+\infty\,.
\ee
\item[ ] {\bf Assumption 3.} (Uniqueness condition) For all $R,\,T>0$
\be\label{eq:hyp-sup} 
\sup\{| \tilde{W}_{N,0}(y+vt,w)|\,:\,|y-x|\leq R\,t^2,\,|w-v|\leq R\,t\}\in L^\infty((0,T)\times\R^3_x;L^1(\R^3_v))\,,
\ee
\be\label{eq:hyp-supp}
\ \ \ \sup\{|\nabla \tilde{W}_{N,0}(y+vt,w)|\,:\,|y-x|\leq r,\,|w-v|\leq r\}\in L^\infty((0,T)\times\R^3_x;L^1\cap L^2(\R^3_v))\,,
\ee
where $\nabla \tilde{W}_{N,0}$ denotes the $x$ and $v$ gradient of $\tilde{W}_{N,0}$
\item[ ] {\bf Assumption 4.} (Regularity assumption) For all $T,\,R>0$, 
\be\label{eq:regularity-1}
\begin{split}
\sup\{(1+x^2+v^2)^k|\nabla^l\widetilde{W}_0|(y+tw)\,:\,|y-x|\leq &R\,t^2,\,|w-v|\leq R\,t\}\\
&\in L^\infty((0,T)\times\R^3_x;L^1\cap L^2(\R^3_v))\,,
\end{split}
\ee
where $l=0,\dots, 6$ and $k=0,2$
\item[ ] {\bf Assumption 5.} (Regularity assumption) There exist two positive constants $C$ and $C'$  such that
\be\label{eq:regularity-2}
\|\widetilde{W}_0\|_{H^6_4}\leq C,\quad\quad\quad \|(1+x^8+v^n)\widetilde{W}_{0}\|_{W^{l,1}}\leq C'
\ee
where $l=0,1$ and $n=3,5$.
\end{itemize}

%Under the hypotheses of Lions-Perthame, here included the sufficient condition for uniqueness
%\begin{itemize}
%\item hyp in \cite{LP};
%\item for $0\leq m\leq 4$ propagation of spatial and velocity moments for $\nabla^k\widetilde{W}_{N,s}$, $k=1,\dots,5$. Therefore assume
%$$\iint (1+x^m+v^m)\,\nabla^k\tilde{W}_{N,0}(x,v)\,dx\,dv<\infty$$ 
%\item for all $T>0$, $R>0$,$$\sup\{(1+|x|^m)|\nabla^k \widetilde{W}_{0}|(y+tw,v)\ :\ |x-y|\leq R,\ |w-v|\leq R \}\in\ L^\infty((0,T)\times\R^3_x;L^1\cap L^2(\R^3_v))$$
%for $k=1,\dots,5$ (this assumption guarantees the boundedness in terms of the initial datum of $\|(1+x^4)\nabla^k\tilde{W}_{N,s}\|_{L^{\infty}_x(L^2_v)}$).
%\end{itemize} 

%First of all, we need the Vlasov equation \eqref{eq:VP} to be well-posed. This is ensured by a slight modification of the proof in \cite{LP} under the following assumptions:
%\begin{enumerate}
%\item (Existence conditions) $\tilde{W}_{N,0}\geq 0$, $\tilde{W}_{N,0}\in L^1\cap L^\infty(\R^3\times\R^3)$;
%\item (Existence conditions) There exists $m_0>3$ such that if $m<m_0$
%\be\label{eq:hyp-moments}
%\iint |v|^m\,\tilde{W}_{N,0}(x,v)\,dx\,dv<+\infty\,.
%\ee
%\item (Uniqueness conditions) For all $R,\,T>0$
%\be\label{eq:hyp-sup} 
%\sup\{| \tilde{W}_{N,0}(y+vt,w)|\,:\,|y-x|\leq R\,t^2,\,|w-v|\leq R\,t\}\in L^\infty((0,T)\times\R^3_x;L^1(\R^3_v))\,,
%\ee
%\be\label{eq:hyp-supp}
%\sup\{|\nabla \tilde{W}_{N,0}(y+vt,w)|\,:\,|y-x|\leq r,\,|w-v|\leq r\}\in L^\infty((0,T)\times\R^3_x;L^1\cap L^2(\R^3_v))\,,
%\ee
%where $\nabla \tilde{W}_{N,0}$ denotes the $x$ and $v$ gradient of $\tilde{W}_{N,0}$.
%\end{enumerate}

\begin{remark}\label{rk:v-der-cond}
Observe that condition \eqref{eq:hyp-supp} implies $\nabla_v\widetilde{W}_{N,t}\in L^\infty([0,T]\times\R^3_x;L^2(\R^3_v))$. 
Indeed, consider the solution $s\mapsto (x(s), v(s))$ of 
        \begin{equation} \label{ode2}
        \left\{     \begin{split} (\partial_s x)(s) &= v(s),\hspace{0.5cm} &x(t)=x\in\mathbb{R}^3, \\
                          (\partial_s v)(s) &= E(s, x(s)),\hspace{0.5cm} &v(t)=v\in\mathbb{R}^3,       
                    \end{split}
        \right.
        \end{equation} 
so that the solution $\widetilde{W}_{N,t}$ of the Vlasov Eq. \eqref{eq:Valpha} is given pointwise by 
\be\label{eq:lagrangian}
\widetilde{W}_{N,t}(x,v)= \widetilde{W}_{0}(x(0),v(0)),\;\forall\; (t,x,v)\in [0,T]\times\mathbb{R}^6
\ee
where $ t\mapsto (x(0),v(0))\in\mathbb{R}^6$ is a shorthand notation for the backward flow.  Moreover, we recall that when $\alpha=1$ it has been proved  in \cite{LP} that $E,\nabla_x E\in L^{\infty}([0,T], C^{0,\beta})$ s.t. 
        \be\label{eq:holder-est} \|E\|_{L^{\infty}([0,T], C^{0,\beta})},\; \|\nabla_x E\|_{L^{\infty}([0,T], C^{0,\beta})}\leq R\ee 
for some fixed $ R>0$, $\beta\in(0,1)$. For $\alpha\in(0,1/2)$ Sobolev embeddings apply to $E(t,x)=\left(\nabla\frac{1}{|\cdot|^\alpha}*\rho\right)(t,x)$, since $\frac{1}{|x|^{\alpha+1}}$ can be treated as a Riesz potential.  Thus  \eqref{eq:holder-est} holds as well for $\alpha\in(0,1/2)$.

We first bound $\frac{\partial x(s)}{\partial v} $ and $\frac{\partial v(s)}{\partial v}  $ in norm, uniformly in the initial data $ x,v\in \mathbb{R}^3$ and in time. We integrate (\ref{ode2}) and differentiate w.r.t. $v$ s.t.
        \[\begin{split}
        \Big\|\frac{\partial x(s)}{\partial v} \Big\|_{\max}&\leq (s-t) + \int_{t}^{s}d\sigma\;\int_{t}^{\sigma}d\tau\; \Big\|\nabla_x E(\tau, x(\tau))\cdot \frac{\partial x(\tau)}{\partial v}\Big\|_{\max} \\
        &\leq  T + RT \int_{t}^{s}d\tau\; \Big\|\frac{\partial x(\tau)}{\partial v}\Big\|_{\max}
        \end{split}\]
as well as
        \[\Big\|\frac{\partial v(s)}{\partial v} \Big\|_{\max}\leq  R\int_{t}^{s}d\tau\; \Big\|\frac{\partial x(\tau)}{\partial v}\Big\|_{\max}\]
The first bound implies that $$ \sup_{s\in[0,T]}\left\|\frac{\partial x(s)}{\partial v} \right\|_{\max}\leq C_{R,T}\exp(C_{R,T}T)$$ for some constant $ C_{R,T}~>~0$, by Gronwall's inequality. Inserting this into the second bound, we also conclude that $\frac{\partial v(s)}{\partial v} $ is uniformly bounded in norm. 

To conclude  we use the chain rule s.t.
        \[\begin{split} (\nabla_v \widetilde{W}_{N,t})(x,v)  =&\; (\nabla_x \widetilde{W}_0)\big (x+ tv + (x(0)-x-tv), v+ (v(0)-v) \big)\cdot \frac{\partial x(0)}{\partial v} \\
        &+ (\nabla_v \widetilde{W}_0)\big (x+ tv + (x(0)-x-tv), v+ (v(0)-v) \big)\cdot \frac{\partial v(0)}{\partial v}  \end{split}\]
As a simple consequence of integrating (\ref{ode2}) and using the uniformly in time boundedness of the electric field $E$, we get 
        \[|x(0)-x-tv|\leq Rt^2, \hspace{0.5cm} |v(0)-v|\leq R t   \]
With the uniform bounds on the matrix norms of $\partial x(s)/\partial v $ and $\partial v(s)/\partial v  $, this implies 
        \[\big|(\nabla_v \widetilde{W}_{N,t})(x,v)\big|\leq \widetilde {C}_{R,T} \sup_{y,w\in\mathbb{R}^3}\Big \{\big|\nabla_z \widetilde{W}_0(y+tv, w)\big|: |y-x|\leq Rt^2, |w-v|\leq Rt  \Big\}\]
for some constant $\widetilde {C}_{R,T} $ depending only on $R$ and $T$. Now,  the function that maps $ (t,x,v)\in [0,T]\times\mathbb{R}^6$ to the r.h.s. of the last line is an element that lies in particular in $L^{\infty}([0,T]\times \mathbb{R}^3_x, L^{2}(\mathbb{R}^3_v))$. Hence, we conclude 
        \[\nabla_v \widetilde{W}_{N,t}\in L^{\infty}([0,T]\times \mathbb{R}^3_x, L^{2}(\mathbb{R}^3_v)).\]

Analogously, \eqref{eq:regularity-1} with $k=0$ implies $\nabla^l_v\widetilde{W}_{N,t}\in L^\infty([0,T]\times\R^3_x;L^2(\R^3_v))$ for $l=1,\dots,5$. Indeed, the bounds \eqref{eq:holder-est} for higher order derivatives are guaranteed by Assumption 4 and Schauder estimates (see \cite{krylov}).
\end{remark}

\begin{remark}
We underline that the constraint $m_0>3\alpha/(2-\alpha)$ comes from Sobolev embeddings. Proceeding as in \cite{Pallard}, it is possible to relax this assumption. However finding minimal assumptions on the initial data falls outside the purpose of this paper. 
\end{remark}

Before stating our regularity result, we need the following interpolation inequalities 

\begin{proposition}\label{prop:interpolation}
Let $m\geq 0$ and $\widetilde{W}_{N,t}\in L^1\cap L^\infty(\R^3\times \R^3)$ solution to \eqref{eq:Valpha}, such that assumptions 2 and 4 with $k=0$ hold for all $r,\,T>0$.
Then there exist a constant $c>0$, which only depends on $m$ and $\|\tilde{W}_{N,t}\|_{L^\infty_{x,v}}$, and a constant $C>0$, which only depends on $m$ and  $\|\nabla^l_v\tilde{W}_{N,t}\|_{L^\infty_{t,x}(L^2_v)}$, such that
\be\label{eq:interpolation-Linfty}
\left\|\widetilde{\rho}_t\right\|_{L^{\frac{m+3}{3}}}\leq c\left(\iint |v|^m |\widetilde{W}_{N,t}(x,v)|\,dx\,dv\right)^{\frac{3}{m+3}}\,.
\ee
and
\begin{equation}\label{eq:interpolation}
\left\|\int|\nabla^l_v\widetilde{W}_{N,t}(\cdot,v)|\,dv\,\right\|_{L^{\frac{2m+3}{3}}}\leq C\left(\iint |v|^m |\nabla^l_v\tilde{W}_{N,t}(x,v)|\,dx\,dv\right)^{\frac{3}{2m+3}}\,.
\end{equation}
\end{proposition}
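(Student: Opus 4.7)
The plan is to prove both estimates by the classical moment-interpolation trick for kinetic equations: split the $v$-integral at a radius $R$, bound the small-velocity part by an $L^\infty$-type estimate and the large-velocity part by the moment, and then optimize in $R$.

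For the first inequality, I would fix $x$ and write
$$\widetilde{\rho}_t(x) = \int_{|v|\le R}|\widetilde W_{N,t}(x,v)|\,dv + \int_{|v|>R}|\widetilde W_{N,t}(x,v)|\,dv \le \tfrac{4\pi}{3}R^{3}\|\widetilde W_{N,t}\|_{L^\infty} + R^{-m}M_m(x),$$
where $M_m(x):=\int|v|^m|\widetilde W_{N,t}(x,v)|\,dv$. Optimizing in $R$ (choosing $R^{m+3}\sim M_m(x)/\|\widetilde W_{N,t}\|_{L^\infty}$) gives the pointwise bound $\widetilde\rho_t(x)\le c\,\|\widetilde W_{N,t}\|_{L^\infty}^{m/(m+3)}M_m(x)^{3/(m+3)}$. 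Raising to the power $(m+3)/3$ cancels the fractional exponent on $M_m(x)$, integrating in $x$ yields $\|\widetilde\rho_t\|_{L^{(m+3)/3}}^{(m+3)/3}\le c\,\|\widetilde W_{N,t}\|_{L^\infty}^{m/3}\iint|v|^m|\widetilde W_{N,t}|\,dx\,dv$, and the final root gives \eqref{eq:interpolation-Linfty}.

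The second inequality is handled in essentially the same way, with the only novelty being that the $L^\infty$ control on $\widetilde W_{N,t}$ is unavailable for derivatives, so one uses the $L^2_v$ bound from Remark \ref{rk:v-der-cond} and Cauchy–Schwarz on the small-velocity region:
$$\int_{|v|\le R}|\nabla_v^l\widetilde W_{N,t}(x,v)|\,dv\le \Big(\tfrac{4\pi}{3}R^{3}\Big)^{1/2}\|\nabla_v^l\widetilde W_{N,t}(x,\cdot)\|_{L^2_v},$$
while the tail is again dominated by $R^{-m}\int|v|^m|\nabla_v^l\widetilde W_{N,t}|\,dv$. Balancing gives $R^{m+3/2}\sim M_m^l(x)/\|\nabla_v^l\widetilde W_{N,t}(x,\cdot)\|_{L^2_v}$ and hence a pointwise bound of the form $\|\nabla_v^l\widetilde W_{N,t}(x,\cdot)\|_{L^2_v}^{2m/(2m+3)}M_m^l(x)^{3/(2m+3)}$. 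Raising to the power $(2m+3)/3$, integrating in $x$, and using that $\|\nabla_v^l\widetilde W_{N,t}\|_{L^\infty_{t,x}(L^2_v)}$ is finite (guaranteed by assumption 4 together with Remark \ref{rk:v-der-cond}), I obtain \eqref{eq:interpolation} after taking the $3/(2m+3)$-th root.

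There is no real obstacle here; the only point that needs care is justifying the $L^\infty_{t,x}(L^2_v)$ control on $\nabla^l_v\widetilde W_{N,t}$ for $l=1,\dots,5$ used in the second bound. This is precisely what the flow-derivative argument of Remark \ref{rk:v-der-cond} provides (extended to higher orders via the same chain-rule computation and the Schauder bounds on $E$ coming from assumption 4), so one simply invokes it. The constants $c$ and $C$ come out depending only on $m$ and on $\|\widetilde W_{N,t}\|_{L^\infty_{x,v}}$ or $\|\nabla^l_v\widetilde W_{N,t}\|_{L^\infty_{t,x}(L^2_v)}$ respectively, as claimed.
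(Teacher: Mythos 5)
Your proof is correct and takes essentially the same approach as the paper: both rely on the standard velocity-moment interpolation, splitting the $v$-integral at radius $R$, bounding the inner region by $\|\widetilde W_{N,t}\|_{L^\infty}$ (respectively by Cauchy--Schwarz against the $L^2_v$ norm), bounding the tail by the $m$-th velocity moment, and optimizing in $R$ before raising to the appropriate power and integrating in $x$. The only cosmetic difference is that you retain the pointwise-in-$x$ $L^2_v$ norm during the optimization and pass to $\|\nabla^l_v\widetilde W_{N,t}\|_{L^\infty_{t,x}(L^2_v)}$ only at the end, whereas the paper substitutes that sup-norm from the outset; this is immaterial.
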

%%%%%%%%%%%%%%%%%%%%%%%
\begin{proof}
We first prove \eqref{eq:interpolation-Linfty}. By definition, 
\begin{equation}\label{eq:norm-rho}
\left\|  \widetilde{\rho}_t \right\|_{L^{\frac{m+3}{3}}}=\left( \int \left|\int |\widetilde{W}_{N,t}(x,v)|\,dv\right|^{\frac{m+3}{3}}dx \right)^{\frac{3}{m+3}}\,.
\end{equation}
Fix $R>0$ and split the integral in the $v$ variable into two pieces 
\begin{equation*}
\begin{split}
\int |\widetilde{W}_{N,t}(x,v)|\,dv&=\int_{|v|\leq R}|\widetilde{W}_{N,t}(x,v)|\,dv+\int_{|v|>R}|\widetilde{W}_{N,t}(x,v)|\,dv\\
&\leq R^3 \|\widetilde{W}_{N,t}\|_{L^{\infty}}+\frac{1}{R^m}\int |v|^m |\widetilde{W}_{N,t}(x,v)|\,dv\,.
\end{split}
\end{equation*} 
By optimising in $R$ in the last line of the above inequality,  we get
\begin{equation}\label{eq:bound-rhox}
\int |\widetilde{W}_{N,t}(x,v)|\,dv\leq c\left(\int |v|^m |\widetilde{W}_{N,t}(x,v)|\,dv\right)^\frac{3}{m+3}\,.
\end{equation}
where $c$ depends on powers of the $L^\infty$ norm of $\widetilde{W}_s$. We plug \eqref{eq:bound-rhox} in \eqref{eq:norm-rho} and we obtain
\begin{equation}
\left\|\int|\widetilde{W}_{N,t}(\cdot,v)|\,dv\,\right\|_{L^{\frac{m+3}{3}}}\leq c\left(\iint |v|^m |\widetilde{W}_{N,t}(x,v)|\,dv\,dx\right)^{\frac{3}{m+3}}\,.
\end{equation}

To prove \eqref{eq:interpolation} we proceed analogously. By definition, 
\begin{equation}\label{eq:norm-nablaW}
\left\|  \int|\nabla^l_v\widetilde{W}_{N,t}(\cdot,v)|\,dv\, \right\|_{L^{\frac{2m+3}{3}}}=\left( \int \left|\int |\nabla^l_v\widetilde{W}_{N,t}(x,v)|\,dv\right|^{\frac{2m+3}{3}}dx \right)^{\frac{3}{2m+3}}\,.
\end{equation}
Fix $R>0$ and split the integral in the $v$ variable into two pieces 
\begin{equation*}
\begin{split}
\int |\nabla^l_v\widetilde{W}_{N,t}(x,v)|\,dv&=\int_{|v|\leq R}|\nabla^l_v\widetilde{W}_{N,t}(x,v)|\,dv+\int_{|v|>R}|\nabla^l_v \widetilde{W}_{N,t}(x,v)|\,dv\\
&\leq R^{\frac{3}{2}}\|\nabla^l_v\widetilde{W}_{N,t}\|_{L^{\infty}_{t,x}(L^2_v)}+\frac{1}{R^m}\int |v|^m |\nabla^l_v\widetilde{W}_{N,t}(x,v)|\,dv\,.
\end{split}
\end{equation*} 
By optimising in $R$ in the last line of the above inequality,  we get
\begin{equation}\label{eq:bound-nablarhox}
\int |\nabla^l_v\widetilde{W}_{N,t}(x,v)|\,dv\leq C\left(\int |v|^m |\nabla^l_v\widetilde{W}_{N,t}(x,v)|\,dv\right)^\frac{3}{2m+3}\,.
\end{equation}
We plug \eqref{eq:bound-nablarhox} in \eqref{eq:norm-nablaW} and we obtain
\begin{equation}
\left\|\int|\nabla^l_v\widetilde{W}_{N,t}(\cdot,v)|\,dv\,\right\|_{L^{\frac{2m+3}{3}}}\leq C\left(\iint |v|^m |\nabla^l_v\widetilde{W}_{N,t}(x,v)|\,dv\,dx\right)^{\frac{3}{2m+3}}\,.
\end{equation}
\end{proof}

\begin{proposition}\label{prop:vlasov-pde}
Let Assumptions 1 and 2 hold true.
\begin{itemize}
\item[i)] Then, there exists a solution to the Vlasov system \eqref{eq:V-P-alpha} $\widetilde{W}_{N,t}\in\mathcal{C}(\R^+;L^p(\R^3\times\R^3))\cap L^{\infty}(\R^+;L^\infty(\R^3\times\R^3))$, for all $p\in [1,\infty)$.
\item[ii)] Suppose that also Assumption 3 holds, then the solution $\widetilde{W}_{N,t}$ is unique, with $E\in L^{\infty}((0,T);\mathcal{C}^{1,\beta}(\R^3_x))$, for all $\beta<1$.
\item[iii)] If moreover Assumptions 4 and 5 hold, then there exists a positive constant $C$ depending on $R$ and $T$ such that
\begin{eqnarray}
&&\int \sup_x\,(1+x^2+v^2)^4|\nabla_v\widetilde{W}_{N,t}(x,v)|^2\,dv\leq C; \label{eq:bound1}\\
&&\iint\,(1+x^2+v^2)^4|\nabla^l_v\widetilde{W}_{N,t}(x,v)|\,dv\,dx\leq C,\quad \quad\mbox{ for } l=0,1;\label{eq:bound2}\\
&&\sum_{|\beta|\leq 5}\iint (1+x^2+v^2)^4|\nabla^\beta \widetilde{W}_{N,t}(x,v)|^2\,dx\,dv\leq C\label{eq:bound3}
\end{eqnarray}
where the integrals are taken over $\R^3$.
\end{itemize}
\end{proposition}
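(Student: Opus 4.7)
The strategy is to follow the scheme of Lions--Perthame \cite{LP}, adapted to the $p$-Laplacian structure arising from $V(x)=|x|^{-\alpha}$, and then to bootstrap higher regularity using the explicit Lagrangian representation~\eqref{eq:lagrangian}.

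For parts (i) and (ii), I would work with the force kernel $\nabla(1/|x|^\alpha)=-\alpha\,x/|x|^{\alpha+2}$, which is of Riesz-potential type of order $2-\alpha<2$, hence considerably less singular than the Coulomb case treated in \cite{LP}. Assumption~1 together with conservation of the energy \eqref{eq:energy-vlasov} and the interpolation inequality \eqref{eq:interpolation-Linfty} yield $\widetilde\rho_t\in L^{5/3}$ uniformly in $t$, and Hardy--Littlewood--Sobolev then gives boundedness of $E$ in a suitable $L^q$. Propagation of velocity moments up to order $m_0>3\alpha/(2-\alpha)$, performed exactly as in \cite{LP}, upgrades this to $\widetilde\rho_t\in L^\infty$ and $E\in L^\infty$, which suffices for the standard compactness argument to produce a solution in the class claimed in (i). Uniqueness in (ii) uses Assumption~3 precisely as in \cite{LP}: the sup-bounds \eqref{eq:hyp-sup}--\eqref{eq:hyp-supp} propagate along the characteristic flow and the Gronwall estimate on the difference of two Lagrangian flows closes; Schauder estimates for convolution with the Riesz-type kernel then promote $E$ to $L^\infty(0,T;C^{1,\beta})$ for every $\beta<1$.

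For part (iii), I would exploit the representation $\widetilde W_{N,t}(x,v)=\widetilde W_0(X_0(t,x,v),V_0(t,x,v))$ in \eqref{eq:lagrangian}, and iterate the chain-rule argument sketched in Remark~\ref{rk:v-der-cond}. A Schauder bootstrap on $E$, combined with Assumption~5, yields $\nabla^k E\in L^\infty(0,T;C^{0,\beta})$ for $k$ up to $5$; differentiating the ODE \eqref{ode2} then gives uniform bounds for the derivatives of the backward flow $(X_0,V_0)$ up to order $6$ on $[0,T]\times\R^6$. Combined with the pointwise estimates
$$|X_0(t,x,v)-x-tv|\leq Rt^2,\qquad |V_0(t,x,v)-v|\leq Rt,$$
the chain rule bounds $|\nabla^\beta \widetilde W_{N,t}(x,v)|$ by the supremum of $|\nabla^{|\beta|}\widetilde W_0|$ over the window $|y-x|\leq Rt^2$, $|w-v|\leq Rt$. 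The $L^1\cap L^2$ control of this window supremum in $v$ is precisely Assumption~4, which directly yields \eqref{eq:bound1} and the unweighted parts of \eqref{eq:bound2}--\eqref{eq:bound3}. For the weight $(1+x^2+v^2)^4$, I would use that the bound on $E$ gives $1+|X_0|^2+|V_0|^2\leq C_T(1+|x|^2+|v|^2)$ along the flow, so the weight transfers from $(x,v)$ to the initial data up to a multiplicative constant, where Assumption~5 supplies the required integrated bounds.

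The main obstacle is the bootstrap of higher Schauder regularity of $E$: each additional derivative of $E$ costs one derivative of $\widetilde\rho_t$, so the argument must be inductive in the differentiation order and must close self-consistently at order $5$ (matching the hypothesis $\|\widetilde W_0\|_{H^6_4}\leq C$ in Assumption~5). The closure uses that for $\alpha\in(0,1/2)$ the Riesz kernel associated to $|x|^{-\alpha}$ is benign enough to convert $H^k$-type bounds on $\widetilde\rho_t$ into $C^{k-1,\beta}$-bounds on $E$ via fractional elliptic regularity, a step that is why the same approach does not reach the Coulomb case and reinforces the restriction $\alpha<1/2$ already imposed by \eqref{eq:trick} in the main text.
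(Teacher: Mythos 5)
Your proposal follows essentially the same route as the paper's sketch: reduce to Lions--Perthame by viewing the problem as a Riesz-potential (or $p$-Laplacian) variant of Vlasov--Poisson, propagate velocity moments for existence, use Assumption~3 and a Schauder bootstrap for uniqueness and $E\in L^\infty(0,T;\mathcal{C}^{1,\beta})$, and obtain part (iii) by iterating the Lagrangian chain-rule argument of Remark~\ref{rk:v-der-cond} together with higher-order Schauder estimates on $E$ and weight transfer along the flow. The paper itself only sketches this, citing Appendix~B of \cite{BPSS} for the flow-regularity bootstrap, and your plan matches it at the same level of detail.

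One correction of attribution in part (i): you assert that propagation of velocity moments up to order $m_0>3\alpha/(2-\alpha)$ ``upgrades this to $\widetilde\rho_t\in L^\infty$.'' It does not. Moment propagation combined with the interpolation inequality \eqref{eq:interpolation-Linfty} gives only $\widetilde\rho_t\in L^{(m+3)/3}$ for the propagated moments, which (together with the mild singularity of $\nabla(|\cdot|^{-\alpha})$ for $\alpha<1/2$) is sufficient to control $E$ and run the compactness argument for existence; $\widetilde\rho_t\in L^\infty$ is \emph{not} needed in (i). As the paper notes, the $L^\infty$ bound on $\widetilde\rho_t$ comes from Assumption~3 (the sup-along-characteristics condition), which is only available in part (ii). Your overall conclusion is unaffected, but the $L^\infty$ claim should be moved from (i) to (ii) and attributed to Assumption~3 rather than to moment propagation.
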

\begin{remark}\label{rk:kin-energy}
We notice that, by conservation of energy \eqref{eq:energy-vlasov}, to require the initial energy to be bounded immediately implies the kinetic energy to be bounded, i.e. there exists a positive constant $C$ independent on time such that
\begin{equation*}
\int_{\R^3\times\R^3}{|v|^2}\,\widetilde{W}_t(x,v)dxdv\leq C.
\end{equation*} 
Proposition \ref{prop:interpolation} then implies $\rho_t\in L^{\frac{5}{3}}(\R^3)$.
\end{remark}
\begin{remark}
We recall that in our context the initial datum $W_N$ is obtained as the Wigner transform of a given trace class fermionic operator $\omega_N$ such that $0\leq \omega_N\leq 1$. This condition does not imply in general $W_N\geq 0$. This problem has been fixed in Appendix A of \cite{BPSS}, where the well-posedness of the Vlasov Eq. for signed measures is established. By standard approximation arguments, Appendix A in \cite{BPSS} applies to the case $V(x)=\frac{1}{|x|^\alpha}$. Namely, on the one hand we fix a sequence of mollify functions $\{\eta^\e\}_\e\subset \mathcal{C}^\infty_c(\R^3)$ and define the smooth potential $V^\e=\frac{1}{|\cdot|^{\alpha}} * \eta^\e$, which satisfies the assumptions in Appendix A of \cite{BPSS}. On the other hand, estimates \`a la Lions and Perthame (\cite{LP}) guarantee that all bounds are uniform $\e$, thus allowing to perform the limit $\e\to 0$. Details on such standard procedure can be found in \cite{golse-notes}.        
\end{remark}

{\it Sketch of the proof of Proposition \ref{prop:vlasov-pde}.}
Proposition \ref{prop:vlasov-pde} can be essentially deduced from \cite{LP}. For this reason we summarise here only the main but minor differences with respect to the proofs in \cite{LP} or we will give the key observations to proceed as in \cite{LP}.

 To prove $i)$, we follow line by line the proof of Theorem 1 in \cite{LP}. We replace the Poisson Eq. $-\Delta U=\rho$ by the $p$-Poisson Eq. $-\Delta^p U=\rho$, $p=\frac{3-\alpha}{2}$. Sobolev embeddings hold as well in this case (\cite{Sobolev}), since ${1}/{|x|^\alpha}$ can be seen as a Riesz potenital. Indeed, for $a\in(0,3)$, where $3$ represents the spacial dimension, we define the Riesz potential
 $I_a(x)=\frac{1}{|x|^{3-a}}$ and the operator $I_a$ acting on some Lebesgue measurable functions $g$ by convolution, i.e. for all $x\in\R^3$
 $$I_a g(x)=(I_a*g)(x)=\int_{\R^3}\frac{g(y)}{|x-y|^{3-a}}\,dy.$$
 Once $i)$ is proved, Assumption 3 guarantees boundedness on the spatial density $\rho_t$, thus uniqueness of the solution $\widetilde{W}_{N,t}$ obtained in $i)$. The proof is a simple adaptation of 
 Section 3 in \cite{LP}. In particular, $m_0>6$ implies by Sobolev embedding $E\in\mathcal{C}(\R^+;\mathcal{C}^{0,\gamma}(\R^3_x))$ for $\gamma\in(0,1)$. A bootstrap argument and Schauder estimates (see for instance \cite{krylov}) lead to $E\in L^\infty((0,T);\mathcal{C}^{1,\beta}(\R^3_x))$, for all $\beta<1$. 

The proof of $iii)$ makes use of Assumption 4 and 5 to gain regularity on the Lagrangian flow and on the solution $\widetilde{W}_{N,t}$. More precisely, 
Assumption 4 implies that 
\be\label{eq:bound-infty-2}
\|(1+x^2)^2\nabla_v(|v|^m\widetilde{W}_{N,s})\|_{L^2_v(L^\infty_x)}\leq C
\ee
for $m=0,1,2$. To see this, we need to use Young inequality in $(1+x^2)^2|v|^{2m}\leq C(1+|x|^2)^4+|v|^{4m}$ and to bound this latter by $C_1(1+|x|^8+|v|^{4m})$, for some constant $C_1$. Then, we recall that a solution of the Vlasov Eq. is transported along the flow \eqref{ode2}, that is \eqref{eq:lagrangian} holds.

We plug \eqref{eq:lagrangian} into \eqref{eq:bound-infty-2}. Then Eq. \eqref{eq:bound1} follows from Assumption 4 and the fact that higher order derivatives are bounded, in the same spirit of Remark \ref{rk:v-der-cond}. 

Following the same lines of Appendix B in \cite{BPSS}, the first bound in Assumption 5 implies $\|\widetilde{W}_{N,s}\|_{H_4^6}\leq C$, that is \eqref{eq:bound3}. To this end, we need to prove boundedness of derivatives of the flow. This follows from Assumption 4 and Schauder estimates. 

Moreover, the second bound in Assumption 5 guarantees 
$$\|(1+|x|^8+|v|^n)\nabla^l\widetilde{W}_{N,s}\|_{L^1}\leq C\quad\quad \mbox{for } n=3,5\ \mbox{and } l=0,1$$
and \eqref{eq:bound2} is proved.

\section{Useful integrals}\label{ap:int}

In this section, we prove a key estimate we have been using throughout Section \ref{section:trace}.
%%%
\begin{proposition}\label{prop:gauss-int}
Let $z,w\in\R^3$, $s\in[0,1]$ and $r\geq 0$. For some $j,\,k\,\in\,\N$ and some constant $C>0$, it holds
\begin{equation}\label{eq:integrals}
\begin{split}
\int_{\R^3} du\,e^{-s|z-u|^2/r^2}&\frac{|z-u|^j}{r^j/\sqrt{s^j}}\,e^{-(1-s)|w-u|^2/r^2}\frac{|w-u|^k}{r^k/\sqrt{(1-s)^k}}\\
&\leq C\,r^3\,s\,(1-s)\,e^{-s(1-s)|z-w|^2/r^2}\left(1+\frac{|z-w|^{j+k}}{r^{j+k}/\sqrt{s^{j+k}(1-s)^{j+k}}}\right)
\end{split}
\end{equation}
\end{proposition}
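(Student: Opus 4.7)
The plan rests on two classical ingredients: completing the square in the Gaussian exponent, and controlling the polynomial prefactor via a triangle inequality combined with a binomial expansion.

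First, I would invoke the algebraic identity
$$s|z-u|^2+(1-s)|w-u|^2=|u-c|^2+s(1-s)|z-w|^2,\qquad c:=sz+(1-s)w,$$
which factorizes the integrand's exponential part as $e^{-s(1-s)|z-w|^2/r^2}\,e^{-|u-c|^2/r^2}$. After the change of variable $\tilde u=u-c$, the first factor pulls out of the integral and what remains is a Gaussian-weighted polynomial moment in $\tilde u$. One then writes $z-u=(1-s)(z-w)-\tilde u$ and $w-u=-s(z-w)-\tilde u$, applies the triangle inequality, and expands via the binomial theorem:
$$|z-u|^j|w-u|^k\leq C_{j,k}\sum_{a_1\leq j,\,a_2\leq k}((1-s)|z-w|)^{a_1}(s|z-w|)^{a_2}\,|\tilde u|^{(j-a_1)+(k-a_2)}.$$

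Each resulting integral reduces to the elementary Gaussian moment
$$\int_{\R^3}e^{-|\tilde u|^2/r^2}|\tilde u|^b\,d\tilde u=C_b\,r^{3+b},$$
so that the left-hand side of \eqref{eq:integrals}, once multiplied by the external weight $\sqrt{s^j(1-s)^k}/r^{j+k}$, becomes a finite sum of terms of the form
$$\frac{\sqrt{s^j(1-s)^k}}{r^{j+k}}\cdot s^{a_2}(1-s)^{a_1}\,|z-w|^{a_1+a_2}\,r^{3+(j+k)-(a_1+a_2)}\,e^{-s(1-s)|z-w|^2/r^2}.$$
The ``pure-Gaussian'' endpoint $a_1=a_2=0$ yields the $r^3$ contribution on the right-hand side (with at least one $s$ and one $1-s$ factor coming from $\sqrt{s^j(1-s)^k}$ under the standing assumption $j,k\geq 1$ in the applications of Section~\ref{section:trace}), while the ``maximal-moment'' endpoint $a_1=j$, $a_2=k$ produces exactly the polynomial term $|z-w|^{j+k}r^{-(j+k)}\sqrt{s^{j+k}(1-s)^{j+k}}$. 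The intermediate multi-indices are absorbed into these two extremes by the elementary inequality $x^a y^b\leq x^{a+b}+y^{a+b}$ with $x=|z-w|/r$ and a rescaled $y$, combined with the uniform bounds $s,\,1-s\leq 1$.

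The main obstacle is not analytic but purely combinatorial: one has to verify that the powers of $s$ and $(1-s)$ arising from every multi-index $(a_1,a_2)$ can consistently be redistributed so as to produce the common prefactor $s(1-s)$ stated in the conclusion. This bookkeeping is routine but delicate, since a sloppy estimate may lose a power of $s$ or $(1-s)$ and invalidate the bound at the endpoints $s=0,1$. Once this is carried out, the uniform Gaussian moment estimate $\int e^{-|\tilde u|^2/r^2}|\tilde u|^b d\tilde u\lesssim r^{3+b}$ concludes the argument with no further analytic input.
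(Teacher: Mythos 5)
Your approach is essentially identical to the paper's own proof: both start by completing the square to extract the factor $e^{-s(1-s)|z-w|^2/r^2}$, shift the integration variable to $\tilde u=u-(sz+(1-s)w)$, apply the triangle inequality (you use the full binomial expansion where the paper only keeps the endpoints via $(a+b)^j\leq 2^{j-1}(a^j+b^j)$, a cosmetic difference), and finish with elementary Gaussian moments $\int e^{-|\tilde u|^2/r^2}|\tilde u|^b\,d\tilde u=C_b\,r^{3+b}$. The skeleton is right.

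The gap is exactly where you flagged it: the bookkeeping of the powers of $s$ and $1-s$. You asserted in passing that the pure-Gaussian term ($a_1=a_2=0$) carries ``at least one $s$ and one $1-s$'' from $\sqrt{s^j(1-s)^k}$, but that is not correct, and in fact the prefactor $s(1-s)$ in \eqref{eq:integrals} cannot hold for all $j,k$. Setting $z=w$ reduces the left-hand side of \eqref{eq:integrals} exactly to $C\sqrt{s^j(1-s)^k}\,r^3$, while the right-hand side becomes $2Cr^3s(1-s)$, so one would need $\sqrt{s^j(1-s)^k}\leq C'\,s(1-s)$ uniformly in $s\in[0,1]$; this fails whenever $j<2$ or $k<2$ (for instance $j=2$, $k=0$ asks $s\leq C'\,s(1-s)$, false as $s\to 1$; $j=k=1$ asks $\sqrt{s(1-s)}\leq C'\,s(1-s)$, false near $s=0$ and $s=1$). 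The correct prefactor on the constant term is $\sqrt{s^j(1-s)^k}$, not $s(1-s)$, and the same obstruction appears in the maximal-moment term. The paper's own proof silently discards the factor $\sqrt{s^j(1-s)^k}$ from the leading term before asserting the final bound, so it inherits the same gap; the applications in Section~\ref{section:trace} are unaffected because they only use the $r$-scaling, the exponential factor, and integrability of the $s$-dependence, all of which survive with the corrected prefactor --- but as written, \eqref{eq:integrals} holds only for $j,k\geq 2$, and the ``routine but delicate'' bookkeeping you deferred is precisely where one discovers this. A complete proof should state the prefactor as $\sqrt{s^j(1-s)^k}$ (or impose $j,k\geq 2$).
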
 
%%%
\begin{proof}
First we compute explicitly the case $j=k=0$. Standard computations show that
\begin{equation*}
\int du\,e^{-s|z-u|^2/r^2}\,e^{-(1-s)|w-u|^2/r^2}=e^{-s(1-s)|z-w|^2/r^2}\int du\,e^{-|sz+(1-s)w-u|^2/r^2}.
\end{equation*}
The change of variable $z=(sz+(1-s)w-u)/r$ then gives 
$$
\int du\,e^{-s|z-u|^2/r^2}\,e^{-(1-s)|w-u|^2/r^2}=C\,r^3\,e^{-s(1-s)|z-w|^2/r^2}
$$
where $C$ is an explicit exactly computable constant. 

We now look at the case $j,\,k\in\N$. Using the above computations, we get
\begin{equation*}
\mbox{l.h.s. of } \eqref{eq:integrals}= e^{-s(1-s)|z-w|^2/r^2}\int du\,e^{-|sz+(1-s)w-u|^2/r^2}\frac{|z-u|^j}{r^j/\sqrt{s^j}}\,\frac{|w-u|^k}{r^k/\sqrt{(1-s)^k}}. 
\end{equation*}
Using that 
\begin{equation*}
\begin{split}
&|z-u|\leq |sz+(1-s)w-u|+(1-s)|z-w|\\
&|w-u|\leq|sz+(1-s)w-u|+s|z-w|
\end{split}
\end{equation*}
we obtain the bound
\begin{equation*}
\begin{split}
\int &du\,e^{-|sz+(1-s)w-u|^2/r^2}\frac{|z-u|^j}{r^j/\sqrt{s^j}}\,\frac{|w-u|^k}{r^k/\sqrt{(1-s)^k}}\\
&\leq C\int du\,e^{-|sz+(1-s)w-u|^2/r^2}\left(\frac{|sz+(1-s)w-u|^j}{r^j/\sqrt{s^j}}+(1-s)^j\frac{|z-w|^j}{r^j/\sqrt{s^j}}\right)\\
&\quad\quad\quad\quad\quad\quad\quad\quad\quad\quad\quad\quad\times\left(\frac{|sz+(1-s)w-u|^k}{r^k/\sqrt{(1-s)^k}}+s^k\frac{|z-w|^k}{r^k/\sqrt{(1-s)^k}}\right)
\end{split}
\end{equation*}
Therefore, straightforward computations lead to
\begin{equation*}
\begin{split}
\int &du\,e^{-|sz+(1-s)w-u|^2/r^2}\frac{|z-u|^j}{r^j/\sqrt{s^j}}\,\frac{|w-u|^k}{r^k/\sqrt{(1-s)^k}}\\
&\leq C\int du\,e^{-|sz+(1-s)w-u|^2/r^2}\frac{|sz+(1-s)w-u|^{j+k}}{r^{j+k}}\\
&+C\sqrt{s^{j+k}}\frac{|z-w|^k}{r^k/\sqrt{s^k(1-s)^k}}\int du\,e^{-|sz+(1-s)w-u|^2/r^2}\frac{|sz+(1-s)w-u|^{j}}{r^{j}}\\
&+C\sqrt{(1-s)^{j+k}}\frac{|z-w|^j}{r^j/\sqrt{s^j(1-s)^j}}\int du\,e^{-|sz+(1-s)w-u|^2/r^2}\frac{|sz+(1-s)w-u|^{k}}{r^{k}}\\
&+C\sqrt{s^{k}(1-s)^j}\frac{|z-w|^{j+k}}{r^{j+k}/\sqrt{s^{j+k}(1-s)^{j+k}}}\int du\,e^{-|sz+(1-s)w-u|^2/r^2}
\end{split}
\end{equation*}
and therefore the bound
\begin{equation*}
\begin{split}
\int &du\,e^{-|sz+(1-s)w-u|^2/r^2}\frac{|z-u|^j}{r^j/\sqrt{s^j}}\,\frac{|w-u|^k}{r^k/\sqrt{(1-s)^k}}\\
&\leq C\,r^3\,s(1-s)\,e^{-s(1-s)|z-w|^2/r^2}\left(1+\frac{|z-w|^{j+k}}{r^{j+k}/\sqrt{s^{j+k}(1-s)^{j+k}}}\right)
\end{split}
\end{equation*}
which concludes the proof.
\end{proof}

{\bf Acknowledgement.}\\
The author is supported by the grant SNSF Ambizione PZ00P2\_161287/1.

\adresse

\end{document}